\newtheorem{claim}{Claim}
\definecolor{typeA}{HTML}{E06836}
\definecolor{typeB}{HTML}{4B9423}
\definecolor{typeAB}{HTML}{5159B8}
\definecolor{myRed}{RGB}{150,20,0}      
\newcommand{\A}[1]{{\color{typeA}{\ifmmode\mathrm{#1}\else{#1}\fi}}}
\newcommand{\B}[1]{{\color{typeB}{\ifmmode\mathrm{#1}\else{#1}\fi}}}
\newcommand{\AB}[1]{{\color{typeAB}{\ifmmode\mathrm{#1}\else{#1}\fi}}}
\newcommand{\R}{\mathbb{R}}
\newcommand{\Z}{\mathbb{Z}}
\newcommand{\I}{\mathbb{I}}
\renewcommand{\H}{\mathcal{H}}     
\newcommand{\Gram}{\mathbb{G}}     
\newcommand{\gram}{\mathbb{g}}     
\newcommand{\irr}{\text{irr}}
\newcommand{\avbl}{\text{avbl}}
\newcommand{\af}{\text{af}}
\DeclareMathOperator{\diag}{diag}
\DeclareMathOperator{\conv}{Conv}
\DeclareMathOperator{\tr}{tr}
\DeclareMathOperator{\U}{U}
\DeclareMathOperator{\SU}{SU}
\DeclareMathOperator{\SO}{SO}
\DeclareMathOperator{\Sp}{Sp}
\newcommand{\adj}{\mathrm{Adj}}
\newcommand{\rep}[2][]{\mathbf{\underline{#2}^{#1}}}    
\newcommand{\repss}[2]{\mathbf{\underline{#1}_{#2}}}    
\newcommand{\e}{\mathcal{e}}
\renewcommand{\v}{\mathcal{v}}
\newcommand{\C}{\mathcal{C}}
\newcommand{\E}{\mathcal{E}}
\newcommand{\F}{\mathcal{F}}
\newcommand{\G}{\mathcal{G}}
\newcommand{\h}{\mathcal{h}}
\newcommand{\K}{\mathcal{K}}
\newcommand{\N}{\mathcal{N}}
\newcommand{\V}{\mathcal{V}}
\let\oldr@@t\r@@t
\def\r@@t#1#2{%
\setbox0=\hbox{$\oldr@@t#1{#2\,}$}\dimen0=\ht0
\advance\dimen0-0.2\ht0
\setbox2=\hbox{\vrule height\ht0 depth -\dimen0}%
{\box0\lower0.4pt\box2}}
\LetLtxMacro{\oldsqrt}{\sqrt}
\renewcommand*{\sqrt}[2][\ ]{\oldsqrt[#1]{#2}}
\title{Towards a complete classification of 6D supergravities}
\author[\clubsuit,\spadesuit]{Yuta Hamada}
\author[\clubsuit]{and Gregory J.\ Loges}
\preprint{KEK-TH-2572}
\emailAdd{yhamada@post.kek.jp}
\emailAdd{gloges@post.kek.jp}
\affiliation[\clubsuit]{
    Theory Center, IPNS, High Energy Accelerator Research Organization (KEK),\\
    \phantom{}\quad 1-1 Oho, Tsukuba, Ibaraki 305-0801, Japan
}
\affiliation[\spadesuit]{
    Graduate Institute for Advanced Studies, SOKENDAI, 1-1 Oho, Tsukuba, Ibaraki 305-0801, Japan
}
\abstract{
    The constraints arising from anomaly cancellation are particular strong for chiral theories in six dimensions. We make progress towards a complete classification of 6D supergravities with minimal supersymmetry and non-abelian gauge group.
    First, we generalize a previously known infinite class of anomaly-free theories which has $T\gg 9$ to essentially any semi-simple gauge group and infinitely many choices for hypermultiplets. The construction relies on having many decoupled sectors all selected from a list of four simple theories which we identify.
    Second, we use ideas from graph theory to rephrase the task of finding anomaly-free theories as constructing cliques in a certain multigraph. A branch-and-bound type algorithm is described which can be used to explicitly construct, in a $T$-independent way, anomaly-free theories with an arbitrary number of simple factors in the gauge group. We implement these ideas to generate an ensemble of $\mathcal{O}(10^7)$ irreducible cliques from which anomaly-free theories may be easily built, and as a special case obtain a complete list of $19,\!847$ consistent theories for $T=0$, for which the maximal gauge group rank is $24$.
    Modulo $\U(1)$, $\SU(2)$ and $\SU(3)$ simple factors and the new infinite families, we give a complete characterization of anomaly-free theories and show that the bound $T\leq 273$ is sharp.
}
\begin{document}

\maketitle

\section{Introduction}
\label{sec:intro}

There has been much work in recent years in understanding to what extent the string theory landscape is universal, in the sense that it encompasses all low-energy theories which comply with general consistency requirements. There has been much success for $D\geq7$~\cite{Adams:2010zy,Kim:2019vuc,Kim:2019ths,Montero:2020icj,Cvetic:2020kuw,Hamada:2021bbz,Bedroya:2021fbu} where anomalies, together with supersymmetry, lead to an exact\footnote{Up to a classification of 3d $\mathcal{N}=4$ SCFTs for $D=7$. See \cite{Bedroya:2021fbu} for details.} match between consistent low-energy theories and those which can be embedded in string theory. At some point anomaly cancellation and supersymmetry alone are not enough; indeed, the swampland program~\cite{Vafa:2005ui} aims to characterize those theories which actually have a UV completion within a theory of quantum gravity (see, e.g.,~\cite{Palti:2019pca,vanBeest:2021lhn,Agmon:2022thq}).

It is natural to consider this question of string universality for 6D chiral theories where cancellation of gauge, gravitational and mixed anomalies via the Green-Schwarz-West-Sagnotti mechanism~\cite{Green:1984bx,Green:1984sg,Sagnotti:1992qw} places particularly strong conditions on supergravities with minimal supersymmetry. Knowledge of the structure of consistent theories has grown over the years~\cite{Seiberg2011,Tarazi2021}, and systematic searches with few tensor multiplets~\cite{Kumar:2009us,Kumar2011} or relatively restrictive gauge group~\cite{Avramis2005} have produced ensembles within which to look for patterns. For example, it was argued in~\cite{Kumar:2009us} that there is an exact match between some classes of consistent 6D supergravities with a single tensor multiplet and compactifications of $\mathcal{N}=1$ string theories on $K3$. More generally, in~\cite{Kumar:2010ru} it was proved that the number of anomaly-free non-abelian theories with fewer than nine tensor multiplets is finite. For $T\geq 9$ the proof fails, not because of technical limitations but rather because there actually appear infinite families as soon as $T=9$. For example, two types of infinite families were identified in~\cite{Kumar:2010ru}, distinguished by whether the number of simple gauge factors is bounded or not. There it was also shown that only a finite subset from each family can have an F-theory~\cite{Vafa:1996xn,Morrison:1996na,Morrison:1996pp} realization in the geometric regime;this aligns with the general expectation that the landscape of quantum gravity-compatible low-energy theories is finite~\cite{Acharya:2006zw,Hamada:2021yxy,Grimm:2021vpn}. Furthermore, using the BPS string probe~\cite{Kim:2019vuc,Kim:2019ths,Lee:2019skh,Tarazi2021,Martucci:2022krl,Hayashi:2023hqa}, in the context of the Swampland program~\cite{Tarazi2021} shows that various infinite families are rendered finite. See~\cite{Taylor:2011wt} for an overview of 6D F-theory compactifications, and~\cite{Baykara:2023plc} for intrinsically non-geometric compactifications which cannot be realized in the geometric F-theory framework.

In this work we make progress towards surveying a significant portion of the interior of the anomaly-free landscape for all $T$. The proofs of finiteness for $T<9$ are by contradiction~\cite{Kumar2009,Kumar:2010ru} and ultimately only require an understanding of how consistent theories behave in the foothills, so to say, where the structure simplifies drastically. Much can be learned by looking at what types of theories appear in the bulk: the results of our exploration are summarized schematically in figure~\ref{fig:map}. Our focus will be on the central cape, which we will show is separated from the mainland by a thin isthmus beyond which one must include one or more of the indicated simple factors. The mainland is populated by infinitely many infinite families of anomaly-free theories which, as we will see, require both the number of tensor multiplets and the gauge group to be very large. In enumerating consistent theories on the cape, we will uncover strong evidence that in this part of the world the number of tensor multiplets is universally bounded as $T\leq 273$.

\begin{figure}[t]
    \centering
    \begin{tikzpicture}
        \node at (0,0) {\includegraphics[width=0.98\textwidth]{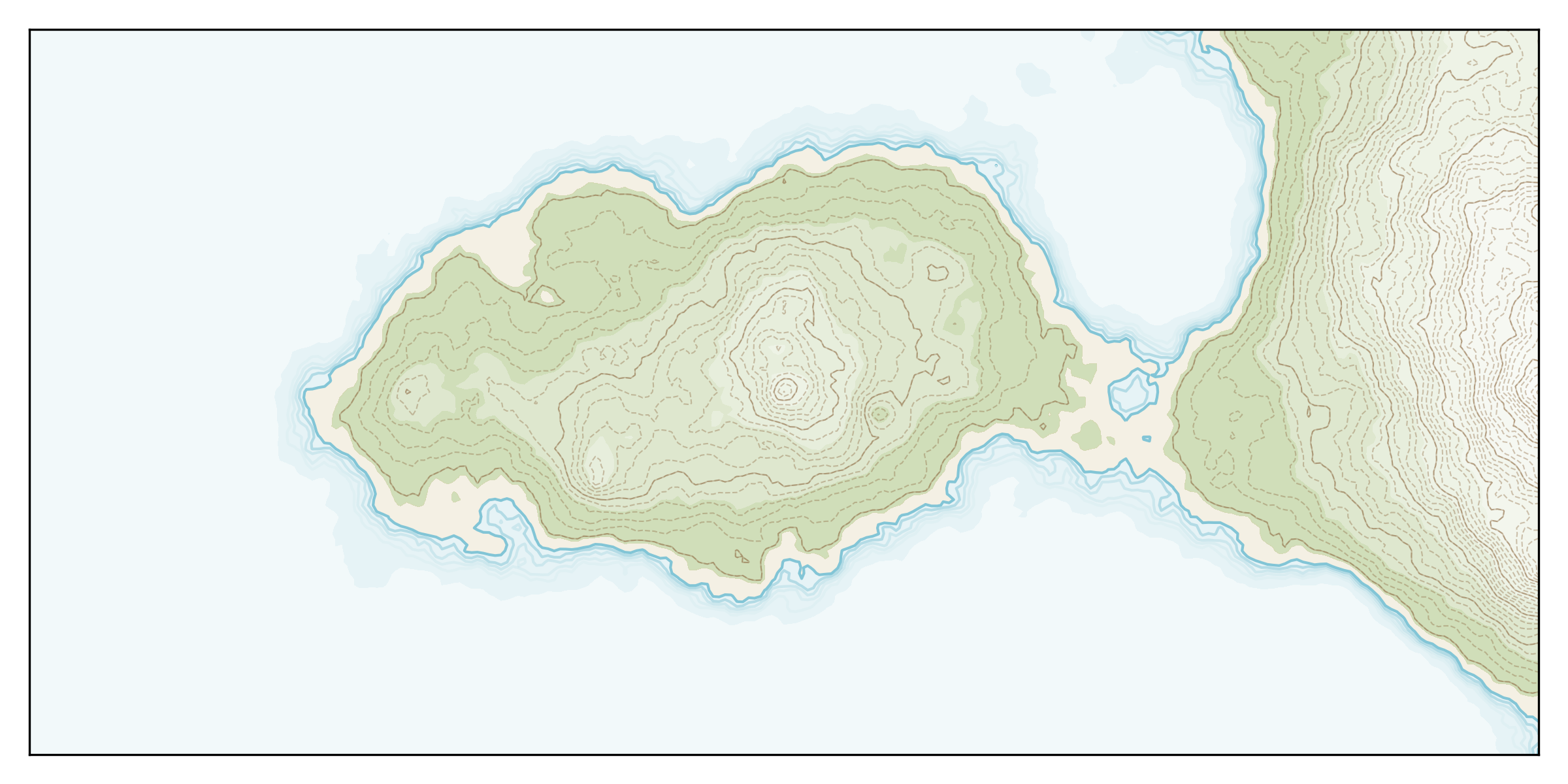}};

        \draw[myRed, dashed, thick] (-2.85,2) -- (-2.55,-1.7);
        \draw[myRed, dashed, thick] (3.5,0.8) -- (3,-1.1);
        \draw[myRed, dashed, thick] (4.25,2) .. controls (5.5,1.8) and (6.5,-1.25) .. (5.5,-2.25);

        \draw[very thick, myRed, ->] (2,0.3) .. controls (3,-0.15) and (3,-0.35) .. (4,-0.3);
        \draw[very thick, myRed, ->] (5.1,0.5) .. controls (5.3,0.6) and (6,1.4) .. (6.25,2.25);
        \draw[very thick, myRed, ->] (5.5,-0.8) .. controls(5.7,-0.9) and (6.4,-1.2) .. (6.9,-1.6);

        \node[gray!25!black,right] at (5.6,0.35) {\tiny$\{E_6,\emptyset\}$};
        \node[gray!25!black,right] at (5.75,0) {\tiny$\{E_7,\emptyset\}$};
        \node[gray!25!black,right] at (5.8,-0.35) {\tiny$\big\{E_7,\tfrac{1}{2}\!\!\!\times\!\!\rep{56}\big\}$};
        \node[gray!25!black,right] at (5.9,-0.7) {\tiny$\{E_8,\emptyset\}$};
        \node[gray!25!black,right] at (3.4,0.7) {\tiny$\{E_6,\rep{27}\}$};
        \node[gray!25!black,right] at (3.3,0.35) {\tiny$\{E_7,\rep{56}\}$};
        \node[gray!25!black,right] at (3,-0.65) {\tiny$\{E_7,\tfrac{3}{2}\!\!\!\times\!\!\rep{56}\}$};
        \node[gray!25!black,right] at (2.9,-1) {\tiny$\{F_4,\emptyset\}$};

        \node[gray!25!black] at (-3.5,-0.4) {$\mathbf{T<9}$};
        \node[gray!25!black] at (-0.2,-0.6) {$\mathbf{T\leq 273}$};
        \node[gray!25!black] at (4.85,-0.15) {$\mathbf{T\leq T_\ast}$};
    \end{tikzpicture}
    \caption{(Conjectural) map of the landscape of consistent 6D, $\N=(1,0)$ supergravities. The space of anomaly-free theories with $T<9$ is finite. The central cape contains only finitely many infinite families and has a universally bounded number of tensor multiplets, $T\leq 273$. The mainland is populated by infinitely many infinite families, all with $T\gg9$ and $|G|\gg 1$, but travelling from the cape onto the mainland requires having one or more (and often many) of the indicated simple factors.}
    \label{fig:map}
\end{figure}

We take inspiration from the techniques of previous works such as~\cite{Bhardwaj:2013qia} and~\cite{Kumar2011} and proceed, presenting a general framework for the exhaustive enumeration of consistent theories. We implement this explicitly in two scenarios:
\begin{enumerate}
    \item $T=0$ and any gauge group built from $A_{n\geq 3}$, $B_{n\geq 3}$, $C_{n\geq 2}$, $D_{n\geq 4}$, $E_{6,7,8}$, $F_4$ and $G_2$.
    \item All $T$ and any gauge group built from $A_{4\to24}$, $B_{3\to16}$, $C_{3\to16}$, $D_{4\to16}$, $E_{6,7,8}$, $F_4$ and $G_2$, but with some choices of hypermultiplets for exceptional groups forbidden.
\end{enumerate}
In both cases we only allow for hypermultiplets charged under at most two simple factors. Notice that $A_1\sim\SU(2)$ and $A_2\sim\SU(3)$ are absent: these groups are subject to much weaker anomaly-cancellation conditions and also have irreps of very low dimension which would make the omission of hypermultiplets charged under $3+$ simple factors especially egregious. We implicitly ignore these low-rank groups except in section~\ref{sec:consistency_conditions}, which we keep fairly general, and in section~\ref{sec:finite_landscape}, where we speculate. When we turn to the more general classification (2) with $T$ unconstrained we remove $A_3\sim\SU(4)$ and $C_2\sim\Sp(2)$ for a similar reason: both of these have a four-dimensional irreducible representation that makes their inclusion computationally challenging in practice. We allow for any semi-simple gauge group built from any of the simple groups listed above, making no restrictions on either the number of simple factors or total rank. In the general classification (2) we also bound the ranks of individual simple factors: there is no fundamental obstacle to extending our results to include more simple groups.

\bigskip

The remainder of this paper is organized as follows. In section~\ref{sec:consistency_conditions} we review the consistency conditions required of non-abelian theories. Section~\ref{sec:infinite_families} is devoted to discussing the infinite families which present a technical obstacle to enumerating all consistent theories. In section~\ref{sec:classification} we outline the classification strategy in terms of $k$-cliques in a multigraph to be described, leaving many of the details to the appendices. We present results for the two classifications mentioned above in section~\ref{sec:results}, discussing features of the ensembles which retroactively justify some claims made in previous sections and highlighting some choice consistent theories. Finally, we conclude in section~\ref{sec:discussion}. In an effort to be somewhat self-contained, we review some useful results from the literature in the appendices. Accompanying code and a database of cliques may be found at~\cite{Loges:2023gh1} and~\cite{Loges:2023gh2}.

\section{Consistency conditions}
\label{sec:consistency_conditions}

In this section we review the consistency conditions that are required of 6D supergravity theories with eight supercharges. The backbone of our understanding comes from anomaly cancellation, which is very constraining for chiral theories in six dimensions. For our purposes, a ``theory'' consists of a choice of non-abelian gauge group $G$ with $k$ simple factors $G_i$, number of tensor multiplets $T$, and hypermultiplet representations $\H$.

The massless spectrum includes one gravity multiplet which contains a self-dual 2-form field, $T$ tensor multiplets each containing an anti-self-dual 2-form field, a vector multiplet in the adjoint representation of $G$, and hypermultiplets in representations of $G$ to be chosen. Charges under the $T+1$ (anti-)self-dual fields lie in a full-rank lattice $\Gamma\subset\R^{1,T}$; we will write $\Omega_{\alpha\beta}$ for the metric in this space with signature $(+,(-)^T)$ and abbreviate $\Omega_{\alpha\beta}X^\alpha Y^\beta$ as $X\cdot Y$.

We define the trace indices $A_R,B_R,C_R$ for the representation $R$ of the simple group $G_i$ via
\begin{equation}\label{eq:ABC_def}
    \beta_i\tr_R(\F_i^2) = A_R\tr(\F_i^2) \,, \qquad \beta_i^2\tr_R(\F_i^4) = B_R\tr(\F_i^4) + C_R[\tr(\F_i^2)]^2 \,,
\end{equation}
where the normalization constants $\beta_i$ are given in table~\ref{tab:lambdas} and ``$\tr$'' with no subscript refers to the fundamental or defining representation. In appendix~\ref{app:trace_indices} we review how these indices may be efficiently computed for any representation $R$~\cite{Okubo82}. With this normalization we have $A_\adj = 2h^\vee$, where $h^\vee$ is the dual Coxeter number, and $A_R,B_R,C_R$ are nearly always integers; the only exceptions are for $A_1\sim\SU(2)$, $A_2\sim\SU(3)$, $B_3\sim\SO(7)$ and $D_4\sim\SO(8)$, where $C_R\in\frac{1}{2}\Z$. In addition, all representations of $A_1\sim\SU(2)$, $A_2\sim\SU(3)$, $E_6$, $E_7$, $E_8$, $F_4$ and $G_2$ have $B_R=0$ since these groups have no independent fourth-order Casimir invariant~\cite{Okubo82}. All other groups will have $(A_R,B_R,C_R)=(\beta_i,\beta_i^2,0)$ for the fundamental representation, by definition.

Gravitational and gauge anomalies are captured by the anomaly polynomial $\hat{I}_8$ which has contributions from all of the multiplets mentioned above,
\begin{equation}
    \hat{I}_8(\mathcal{R},\F) = \hat{I}_\text{gravity}(\mathcal{R}) - T\,\hat{I}_\text{tensor}(\mathcal{R}) + \hat{I}_{1/2}^\adj(\mathcal{R},\F) - \sum_R n_R\hat{I}_{1/2}^R(\mathcal{R},\F) \,,
\end{equation}
where $n_R\geq 0$ gives the number of hypermultiplets in the representation $R$. The individual contributions are~\cite{Alvarez-Gaume:1983ihn}
\begin{equation}
    \begin{aligned}
        i(2\pi)^3\hat{I}_\text{gravity}(\mathcal{R}) &= \frac{1}{5760}\left[273\tr\mathcal{R}^4 - \frac{255}{4}(\tr\mathcal{R}^2)^2\right] \,,\\
        i(2\pi)^3\hat{I}_\text{tensor}(\mathcal{R}) &= \frac{1}{5760}\left[29\tr\mathcal{R}^4 - \frac{35}{4}(\tr\mathcal{R}^2)^2\right] \,,\\
        i(2\pi)^3\hat{I}_{1/2}^R(\mathcal{R},\F) &= \frac{H_R}{5760}\left[\tr\mathcal{R}^4 + \frac{5}{4}(\tr\mathcal{R}^2)^2\right] + \frac{1}{24}\tr_R\F^4 - \frac{1}{96}\tr_R\F^2\tr\mathcal{R}^2 \,.
    \end{aligned}
\end{equation}
\begin{table}[t]
    \centering
    \begin{tabular}{c|ccccccccc}
        $G_i$ & $A_n$ & $B_n$ & $C_n$ & $D_n$ & $E_6$ & $E_7$ & $E_8$ & $F_4$ & $G_2$\\
        \midrule
        $\beta_i$ & $1$ & $2$ & $1$ & $2$ & $6$ & $12$ & $60$ & $6$ & $2$
    \end{tabular}
    \caption{Normalization constants for the classical and exceptional Lie groups.}
    \label{tab:lambdas}
\end{table}
For brevity and uniformity in the notation we have introduced $H_R=\dim(R)$. These anomalies may be cancelled by means of the Green-Schwarz-West-Sagnotti mechanism if $\hat{I}_8$ factors in terms of a vector $Y_4$ of $4$-forms as~\cite{Green:1984sg,Green:1984bx,Sagnotti:1992qw}
\begin{equation}\label{eq:I8=YY}
    i(2\pi)^3\hat{I}_8(\mathcal{R},\F) \;\sim\; Y_4\cdot Y_4 \,, \qquad Y_4^\alpha = -\frac{1}{2}b_0^\alpha \tr(\mathcal{R}^2) + 2\sum_{i=1}^k \frac{b_i^\alpha}{\beta_i}\tr(\F_i^2) \,,
\end{equation}
where $b_I\in\R^{1,T}$ ($I=0,i$) are a collection of $k+1$ anomaly vectors.\footnote{Our ``$b_0$'' is often denoted by ``${-a}$''.} Matching terms on either side of the above gives the following:
\begin{equation}
\label{eq:anomaly_cancellation}
    \begin{aligned}
        H-V + 29T &= 273 \,, & b_0\cdot b_0 &= 9-T \,,\\
        \sum_R n_R^iB_R^i - B_\adj^i &= 0 \,, \qquad & b_0\cdot b_i &= \frac{1}{6}\Big( \sum_R n_R^iA_R^i - A_\adj^i \Big) \,,\\
        & & b_i\cdot b_i &= \frac{1}{3}\Big( \sum_R n_R^iC_R^i - C_\adj^i \Big) \,,\\
        & & b_i\cdot b_j &= \sum_{R,S}n_{(R,S)}^{i,j}A_R^iA_S^j \,, \quad (i \neq j) \,.
    \end{aligned}
\end{equation}
The total number of hypermultiplets, $H$, includes those in the trivial representation of $G$. In what follows we leave these implicit in $\H$ and introduce $\Delta=H_\text{charged}-V$, in which case the first line on the left above is equivalent to
\begin{equation}
\label{eq:gravitational_anomaly}
    \boxed{\quad \Big. \Delta + 29T \leq 273 \,. \quad}
\end{equation}
The second line on the left, which comes from requiring that the indecomposable $\tr\F_i^4$ terms are absent for each simple factor of $G$, gives us our second condition, which we will refer to as the \emph{$B$-constraint}:
\begin{equation}
\label{eq:B_constraint}
    \boxed{\quad \bigg. \sum\nolimits_Rn_R^iB_R^i = B_\adj^i \,, \quad \forall i\in\{1,2,\ldots,k\} \,. \quad}
\end{equation}
For simple groups with no non-trivial quartic invariant the index $B_R$ is zero for all representations and the $B$-constraint is trivially satisfied: this occurs only for $\SU(2)$, $\SU(3)$ and the five exceptional groups.

The anomaly lattice $\Lambda=\bigoplus_{I=0}^kb_I\Z$ is a sublattice of $\Gamma\subset\R^{1,T}$ and the four lines on the right side of~\eqref{eq:anomaly_cancellation} determine the inner products of the anomaly vectors $b_I$ only in terms of the matter spectrum. These may be collected into a $(k+1)\times(k+1)$ Gram matrix $\Gram$ with entries
\begin{equation}
    \Gram_{IJ} \equiv b_I\cdot b_J \,.
\end{equation}
It was shown in~\cite{Kumar:2010ru} that, remarkably, the entries of $\Gram$ are always integers, i.e.\ $\Lambda$ is an integral lattice, whenever all local anomalies are cancelled.\footnote{This is a statement of representation theory and does not actually impose any additional constraints on the allowed hypermultiplet representations.} Checking for anomaly-cancellation then amounts to determining if there actually exist $k+1$ vectors $b_I\in\R^{1,T}$ which realize the required inner products. A necessary and sufficient condition for $\Gram$ to be realizable is
\begin{equation}
\label{eq:eigenvalue_bounds_G}
    \boxed{\quad \Big.n_+^\Gram\leq1 \quad\text{and}\quad n_-^\Gram\leq T \,, \quad}
\end{equation}
where $n_\pm^\Gram$ denote the number of positive and negative eigenvalues of $\Gram$. This is fairly restrictive since a real-symmetric matrix having so few positive eigenvalues is non-generic. In a similar vein, we should expect the number of anomaly-free theories to drop very quickly for $k\gtrsim T$ since this requires a conspiracy in the entries of $\Gram$ to produce many zero-eigenvalues.

For the groups $\SU(2)$, $\SU(3)$ and $G_2$ there are potential global anomalies that must be cancelled. One way to understand this is as a consequence of their non-trivial homotopy groups,
\begin{equation}
    \pi_6\big(\SU(2)\big)\cong\Z/12\Z \,, \qquad \pi_6\big(\SU(3)\big)\cong\Z/6\Z \,, \qquad \pi_6(G_2)\cong\Z/3\Z \,,
\end{equation}
which impose that $\sum_Rn_RC_R-C_\adj$ is zero modulo $12$, $6$ or $3$, respectively~\cite{Bershadsky:1997sb}. However, in the modern understanding of anomalies in terms of spin bordism groups this is puzzling in view of the following facts:\footnote{See also \cite{Basile:2023zng} for the very recent study.}
\begin{equation}
    \Omega_7^\text{spin}\big(B\!\SU(2)\big) \cong 0 \,, \qquad \Omega_7^\text{spin}\big(B\!\SU(3)\big) \cong 0 \,, \qquad \Omega_7^\text{spin}(BG_2) \cong 0 \,.
\end{equation}
A careful analysis of the Green-Schwarz-West-Sagnotti mechanism or imposing Dirac quantization (see, e.g.,~\cite{Ohmori:2014kda} section 3.1.1) nevertheless leads to the same modular conditions. Detailed comparisons of the two perspectives are provided in~\cite{Lee:2020ewl,Davighi:2020kok}. Of these three simple groups we will only be considering $G_2$, which gives us our next condition:
\begin{equation}
\label{eq:G2_global}
    \boxed{\quad \bigg.\sum\nolimits_Rn_R^iC_R^i - C_\adj^i \equiv 0\mod{3} \,, \quad \forall i:G_i=G_2 \,. \quad}
\end{equation}
For quaternionic (a.k.a.\ pseudo-real) representations there is the possibility of forming half-hypermultiplets, i.e.\ taking $n_R$ to be half-integer:
\begin{equation}
\label{eq:half-integers}
    \boxed{\quad \Big.n_R\text{ half-integer} \quad\implies\quad R\text{ quaternionic} \,.  \quad}
\end{equation}
Quaternionic representations only appear for $A_{4n+1}\sim\SU(4n+2)$, $B_{4n+1}\sim\SO(8n+3)$, $B_{4n+2}\sim\SO(8n+5)$, $C_n\sim\Sp(n)$, $D_{4n+2}\sim\SO(8n+4)$ and $E_7$. Some quaternionic representations for low-rank groups are shown in table~\ref{tab:quaternionic_irreps}. In what follows we will refer to the \emph{integers} $n_R$ but it should be understood that $n_R$ can always be half-integer for quaternionic representations.\footnote{Equivalently, one could redefine $(H_R,A_R,B_R,C_R)\to(H_R/2,A_R/2,B_R/2,C_R/2)$ for these representations and let $n_R$ denote the integer number of \emph{half}-hypermultiplets.}

\begin{table}[t]
    \centering
    \begin{tabular}{ll}
        \toprule
        Group & Quaternionic irreps\\
        \midrule
        $A_5\sim\SU(6)$ & $\rep{20}$, $\rep{540}$, $\rep{980}$, \ldots \\
        $A_9\sim\SU(10)$ & $\rep{252}$, $\rep{21000}$, \ldots \\[7pt]
        $B_5\sim\SO(11)$ & $\rep{32}$, $\rep{320}$, $\rep{1408}$, \ldots \\
        $B_6\sim\SO(13)$ & $\rep{64}$, $\rep{768}$, $\rep{4160}$, \ldots \\[7pt]
        $C_2\sim\Sp(2)$ & $\rep{4}$, $\rep{16}$, $\rep{20}$, $\rep{40}$, $\rep{56}$, $\rep{64}$, $\rep{80}$, $\rep{120}$, $\rep{140}$, $\rep{160}$, \ldots \\
        $C_3\sim\Sp(3)$ & $\rep{6}$, $\rep[\prime]{14}$, $\rep{56}$, $\rep{64}$, $\rep{126}$, $\rep{216}$, $\rep{252}$, $\rep{330}$, \ldots \\
        $C_4\sim\Sp(4)$ & $\rep{8}$, $\rep{48}$, $\rep{120}$, $\rep{160}$, $\rep{288}$, $\rep{792}$, $\rep[\prime\prime]{792}$, \ldots \\
        $C_5\sim\Sp(5)$ & $\rep{10}$, $\rep{110}$, $\rep{132}$, $\rep{220}$, $\rep{320}$, $\rep{1408}$, \ldots \\
        $C_6\sim\Sp(6)$ & $\rep{12}$, $\rep{208}$, $\rep{364}$, $\rep{560}$, $\rep{572}$, \ldots \\[7pt]
        $D_6\sim\SO(12)$ & $\rep{32}$, $\rep{352}$, $\rep{1728}$, \ldots \\
        $D_{10}\sim\SO(20)$ & $\rep{512}$, $\rep{9728}$, \ldots \\[7pt]
        $E_7$ & $\rep{56}$, $\rep{912}$, $\rep{6480}$, \ldots \\
        \bottomrule
    \end{tabular}
    \caption{Quaternionic irreps for simple groups of low rank.}
    \label{tab:quaternionic_irreps}
\end{table}

The next consistency condition comes from insisting that the gauge kinetic functions have the correct sign. By supersymmetry, the kinetic terms for the gauge fields are of the form $-j\cdot b_i\tr(\F_i^2)$ for some common, moduli-dependent, time-like vector $j$. Therefore we should require
\begin{equation}
\label{eq:j_condition}
    \boxed{\quad \Big.\exists\;j\in\R^{1,T} \text{ such that } j\cdot j > 0 \text{ and } j\cdot b_i > 0 \,. \quad}
\end{equation}
Of course we do not have direct access to the anomaly vectors $b_i$, only to their inner products which are computed from the low-energy spectrum alone, and indeed there can be multiple (inequivalent) choices for the vectors which give the same Gram matrix $\Gram$. In appendix~\ref{app:j_exists} we rephrase~\eqref{eq:j_condition} in terms of the convex hull of the vectors $b_i$ and prove that if neither inequality in~\eqref{eq:eigenvalue_bounds_G} is saturated then such a $j$ can \emph{always} be found which satisfies not only $j\cdot b_i>0$ but also the analogous inequality for $b_0$, namely $j\cdot b_0>0$. This additional inequality ensures that the coefficient of the Gauss-Bonnet term is strictly positive, and can be motivated to hold based on scattering amplitudes~\cite{Cheung:2016wjt,Hamada:2018dde}, the second law of thermodynamics~\cite{Jacobson:1993xs,Liko:2007vi, Sarkar:2010xp}, weak gravity conjecture~\cite{Aalsma:2022knj}, consistency of EFT strings~\cite{Martucci:2022krl}, and holography~\cite{Ong:2022mmm}. For cases where one or more of the bounds in~\eqref{eq:eigenvalue_bounds_G} is saturated, we will not insist upon having $j\cdot b_0>0$. 

The next condition stems not from anomaly cancellation but from the unimodularity of $\Gamma$, which is required by the self-consistency of the theory upon reduction to two or four dimensions~\cite{Seiberg2011}. Since $\Lambda$ is a sublattice of $\Gamma$, we should have
\begin{equation}
\label{eq:unimodular_embedding}
    \boxed{\quad \Big.
    \exists \text{ unimodular, full-rank lattice }\Gamma\subset \R^{1,T} \text{ with } \Lambda\subseteq\Gamma \,.
    \quad}
\end{equation}
In appendix~\ref{app:lattices} we review how this condition can be checked, again given only the Gram matrix $\Gram$ and not the vectors $b_I$ directly. This is facilitated by there being a very simple classification of unimodular lattices with indefinite signature.

All of equations~\eqref{eq:B_constraint},~\eqref{eq:eigenvalue_bounds_G} and~\eqref{eq:unimodular_embedding} impose conditions on the lattice $\Lambda$ and its associated Gram matrix and, importantly, clearly must hold for any sublattice as well. This property will be crucial for us when we turn to constructing anomaly-free theories recursively in later sections. Of particular interest will be the sublattice generated only by the $b_i$ since the corresponding inner products are independent of $T$. The associated Gram matrix we write as
\begin{equation}
    \gram_{ij} \equiv b_i\cdot b_j
\end{equation}
and can be obtained from $\Gram$ by simply deleting the first row and column which are associated with $b_0$. Like for $\Gram$, we write the number of positive and negative eigenvalues of $\gram$ as $n_\pm^\gram$: clearly we have $n_+^\gram \leq n_+^\Gram$ and $n_-^\gram \leq n_-^\Gram$.

\medskip

The consistency conditions discussed in this section are summarized in table~\ref{tab:conditions}, where we also first introduce terminology that will be motivated at the beginning of section~\ref{sec:classification}. We should emphasize that there very well may be additional consistency conditions which should be imposed (such as those argued for in~\cite{Tarazi2021}), shrinking the list of allowed theories.

\begin{table}[t]
    \centering
    \textbf{Admissible theories}\\[5pt]
    \begin{tabular}{cccccccc}
        $\tr\F^4$ anomaly & ${}^\exists b_I$ & global anomaly & positivity & unimodularity \\
        \midrule
        \eqref{eq:B_constraint} & \eqref{eq:eigenvalue_bounds_G} & (\ref{eq:G2_global}, \ref{eq:half-integers}) & \eqref{eq:j_condition} & \eqref{eq:unimodular_embedding} \\
        & &
    \end{tabular}
    
    \textbf{Anomaly-free theories} \\[5pt]
    \begin{tabular}{cccccccc}
        $\tr\mathcal{R}^4$ anomaly & $\tr\F^4$ anomaly & ${}^\exists b_I$ & global anomaly & positivity & unimodularity \\
        \midrule
        \eqref{eq:gravitational_anomaly} & \eqref{eq:B_constraint} & \eqref{eq:eigenvalue_bounds_G} & (\ref{eq:G2_global}, \ref{eq:half-integers}) & \eqref{eq:j_condition} & \eqref{eq:unimodular_embedding} 
    \end{tabular}
    \caption{Summary of consistency conditions we impose. Theories satisfying (\ref{eq:B_constraint}, \ref{eq:eigenvalue_bounds_G}, \ref{eq:G2_global}, \ref{eq:half-integers}, \ref{eq:j_condition}, \ref{eq:unimodular_embedding}) are said to be admissible, and theories satisfying (\ref{eq:gravitational_anomaly}, \ref{eq:B_constraint}, \ref{eq:eigenvalue_bounds_G}, \ref{eq:G2_global}, \ref{eq:half-integers}, \ref{eq:j_condition}, \ref{eq:unimodular_embedding}) are said to be anomaly-free. See also section~\ref{sec:classification} for the terminology. Note that the condition for global anomaly (\ref{eq:G2_global}, \ref{eq:half-integers}) is redundant given $\Lambda$ is unimodular~\cite{Lee:2020ewl,Davighi:2020kok}, but it is convenient to impose in practice anyways.}
    \label{tab:conditions}
\end{table}

\section{Infinite families}
\label{sec:infinite_families}

It was shown in~\cite{Kumar:2010ru} that the number of anomaly-free theories with $T<9$ is finite. However, for $T\geq9$ the finiteness proof does not go through for a simple reason: there are in fact infinitely many anomaly-free theories with $T\geq 9$. In this section we discuss two infinite classes of theories which appear for $T\geq 9$. The first serves as a quick warm-up, having been discussed previously (e.g.\ see~\cite{Kumar:2010ru}) and is easily accounted for in the classification of section~\ref{sec:results}. The second is a generalization of the family described in section~4.2 of~\cite{Kumar:2010ru} which has gauge group $G=E_8^k$ and $T\gtrsim k$ for arbitrarily large $k$. The basic idea is that because the exceptional groups trivially satisfy the $B$-constraint, one can augment the gauge group by many exceptional factors while including no additional hypermultiplets, incurring a very large negative contribution to $\Delta$. In fleshing out this idea, we will identify a complete list of four ``culprit'' theories for which this construction works more generally.

To warm up, consider the class of theories formed in the following way; starting with a ``seed'' theory with gauge group $G$, hypermultiplets $\H$ and $n_+^\Gram=0$ (which in particular requires $T\geq 9$) which is anomaly-free, then the following theory,
\begin{equation}
    G' = G\times G_\text{aux} \,, \qquad T'=T \,, \qquad \H'=(\H,\rep{1}) + (\rep{1},\adj) \,,\label{eq:infinite_series1}
\end{equation}
is also anomaly-free for \emph{any} auxiliary semi-simple group $G_\text{aux}$. This is easy to understand since the auxiliary adjoint vector multiplets and hypermultiplets combine into a full (non-chiral) $\mathcal{N}=(1,1)$ vector multiplet and hence have no effect on anomaly cancellation: the $B$-constraint is trivially satisfied for all simple factors of $G_\text{aux}$, $\Delta$ and $T$ are unchanged and the Gram matrix is augmented by rows and columns of all zeros. The condition $n_+^\Gram=0$ is required because having $b_i$ orthogonal to all $b_I$ implies $b_i=0$ if $n_+^\Gram=1$, violating $j\cdot b_i>0$. However, for every choice of seed theory there is an upper bound on $\dim G_\text{aux}$ provided by BPS string probes, which we review in appendix~\ref{app:brane_probe}.

\medskip

The second infinite class of anomaly-free theories is constructed in a similar way by starting with a seed theory, but as we will see the seed theory is far less constrained and need not be anomaly-free itself. As mentioned above, the idea will be to augment the seed theory with enough exceptional groups, decreasing $\Delta$ dramatically, so as to become anomaly-free. This is possible only because exceptional groups automatically satisfy the $B$-constraint so that additional vector multiplets can be added without any accompanying hypermultiplets. For concreteness in the following we motivate the construction using $E_8$ with no hypermultiplets, but as we will see the argument works essentially identically for the four simple theories of table~\ref{tab:culprits} and can be extended to arbitrary combinations thereof in the obvious way.

To be more precise, the seed theory (with gauge group $G$ of $k$ simple factors, $T$ tensor multiplets and hypermultiplets $\H$) needs to satisfy only~\eqref{eq:B_constraint}, \eqref{eq:eigenvalue_bounds_G}, \eqref{eq:G2_global}, \eqref{eq:half-integers} and \eqref{eq:j_condition}: the conditions of equations~\eqref{eq:gravitational_anomaly} and~\eqref{eq:unimodular_embedding} are not necessary. In particular, vectors $b_I$ and $j$ exist which both realize the required inner products of $\Gram$ of the seed and satisfy $j\cdot j>0$ and $j\cdot b_i>0$. We augment the seed theory by $m$ auxiliary $E_8$ factors and allow the number of tensor multiplets to increase:
\begin{equation}
\label{eq:aug_theory}
    G' = G\times E_8^m \,, \qquad T'\geq T \,, \qquad \H' = (\H,\rep{1}^m) \,.
\end{equation}
Our task is to show that one can always find $T'$ and $m$ which ensure the above theory is anomaly-free. The simplest condition to state is for the gravitational anomaly, which reads
\begin{equation}
    \Delta' + 29T' = \Delta - 248m + 29T' \leq 273 \,.
\end{equation}
The other conditions revolve around the anomaly lattice, for which the Gram matrix is now
\begin{equation}
    \Gram' = \begin{pmatrix}
        9-T' & b_0\cdot b_i & -10_m\\
        b_0\cdot b_i & b_i\cdot b_j & 0_{k\times m}\\
        -10_m & 0_{m\times k} & -12\,\I_{m\times m}
    \end{pmatrix} \,,
\end{equation}
where $i,j=1,2,\ldots,k$ run over the $k$ factors of $G$. We can construct $k+m+1$ vectors $b_I'\in\R^{1,T'}$ which realize $\Gram'$ as
\begin{equation}
    \begin{aligned}
        b_0' &= (b_0,c_0) \,,\\
        b_i' &= (b_i,0) \,, & i &= 1,2,\ldots,k \,,\\
        b_{k+r}' &= (0,c_r) \,, \qquad & r &= 1,2,\ldots,m \,,
    \end{aligned}
\end{equation}
using the decomposition $\R^{1,T'}=\R^{1,T}\times\R^{(T'-T)}$. The $m$ vectors $c_r$ must be orthogonal, each with norm-squared equal to $12$, and $c_0$ must have norm-squared $T'-T$ and satisfy $c_0\cdot c_r=10$. This clearly requires $T'-T\geq m$, but if we write $c_0=c_\perp + \frac{10}{12}\sum_rc_r$ with $c_\perp\cdot c_r=0$, then we actually need the stronger condition
\begin{equation}
    \|c_\perp\|^2 = (T'-T) - \frac{10^2}{12}m \geq 0 \,.
\end{equation}
If the above inequality is satisfied then we have explicitly constructed vectors $b_I'$ which realize $\Gram'$ and therefore equation~\eqref{eq:eigenvalue_bounds_G} must be satisfied. We can similarly explicitly construct a vector $j'$ satisfying $j'\cdot j'>0$ and $j'\cdot b_i'>0$ as
\begin{equation}
    j' = (j,0) - \epsilon\sum_{r=1}^mb_{k+r}' \,, \qquad 0<\epsilon < \sqrt{\frac{j\cdot j}{12m}} \,.
\end{equation}
It is straightforward to check that all of $j'\cdot b_i'$, $j'\cdot b_{k+r}'$ and $j'\cdot j'$ are positive and that $j'\cdot b_0'$ is positive if $j\cdot b_0$ is. Finally, $T'\geq (k+m+1)+3$ is sufficient to ensure that~\eqref{eq:unimodular_embedding} is satisfied as well (see appendix~\ref{app:lattices}).

\medskip

In summary, the inequalities
\begin{equation}
\label{eq:Tm_region_E8}
    \max\left\{T+\frac{10^2}{12}\,m,\;k+m+4\right\} \;\leq\; T' \;\leq\; \frac{273-\Delta + 248m}{29}
\end{equation}
are sufficient to ensure the theory of~\eqref{eq:aug_theory} is anomaly-free. The region of parameters which gives anomaly-free theories is not only non-empty but unbounded: for large enough $m$ the interval for $T'$ is always non-trivial, no matter the values of $\Delta$, $T$ and $k$ for the seed theory. This relies crucially on the inequality $\frac{248}{29} > \frac{10^2}{12}$ for $E_8$, i.e.
\begin{equation}
    \frac{V}{29} > \frac{(A_\adj/6)^2}{(C_\adj/3)} \,.
\end{equation}
Rerunning the above construction for a general auxiliary simple theory in place of $\{E_8,\emptyset\}$, one finds the requirement $b_\text{aux}\cdot b_\text{aux} < 0$ and that equation~\eqref{eq:Tm_region_E8} should be replaced by
\begin{equation}
\label{eq:Tm_region}
    \max\left\{T+\frac{(b_0\cdot b_\text{aux})^2}{(-b_\text{aux}\cdot b_\text{aux})}\,m,\;k+m+4\right\} \;\leq\; T' \;\leq\; \frac{273-\Delta + (-\Delta_\text{aux})m}{29} \,.
\end{equation}
The set of $T'$ and $m$ satisfying the above inequalities will be unbounded provided
\begin{equation}
\label{eq:infinite_family_inequalities}
    b_\text{aux}\cdot b_\text{aux} < 0 \,, \qquad \frac{(-\Delta_\text{aux})}{29} > \max\left\{\frac{(b_0\cdot b_\text{aux})^2}{(-b_\text{aux}\cdot b_\text{aux})},\;1\right\} \,.
\end{equation}
The only choices for simple group and hypermultiplets which satisfy these inequalities are those listed in table~\ref{tab:culprits}.

\begin{table}[t]
    \centering
    \begin{tabular}{lcccccc}
        \toprule
        $G_i$ & $\H_i$ & $\Delta_i$ & $b_0\cdot b_i$ & $b_i\cdot b_i$ & $\frac{(-\Delta_i)}{29}$ & $\frac{(b_0\cdot b_i)^2}{(-b_i\cdot b_i)}$ \\
        \midrule
        $E_6$ &                             & $-78$  & $-4$  & $-6$  & $2.6896...$ & $2.6666...$ \\
        $E_7$ &                             & $-133$ & $-6$  & $-8$  & $4.5862...$ & $4.5\phantom{000...}$ \\
        $E_7$ & $\frac{1}{2}\times\rep{56}$ & $-105$ & $-5$  & $-7$  & $3.6206...$ & $3.5714...$ \\
        $E_8$ &                             & $-248$ & $-10$ & $-12$ & $8.5517...$ & $8.3333...$\\
        \bottomrule
    \end{tabular}
    \caption{All simple theories which satisfy the inequalities of equation~\eqref{eq:infinite_family_inequalities} and thus lead to an infinite family of anomaly-free theories for each seed theory.}
    \label{tab:culprits}
\end{table}

\medskip

It is worth emphasizing just how unruly this infinite family of anomaly-free theories potentially is. If one picks any gauge group $G$ and hypermultiplets $\H$ subject only to the $B$-constraints, $n_+^\gram\leq 1$ and $\det\gram\neq 0$, then provided a vector $j$ can be found satisfying $j\cdot j>0$ and $j\cdot b_i>0$ we can ensure that $n_+^\Gram\leq 1$ by taking $T$ large enough, since\footnote{Here $\gram^\mathrm{c}$ is the cofactor matrix of $\gram$.}
\begin{equation}
    \det\Gram = (9-T)\det\gram - (b_0\cdot b_i)(\gram^\mathrm{c})_{ij}(b_0\cdot b_j) \quad\xrightarrow{T\gg 9}\quad -T\det\gram \,.
\end{equation}
This shows that $\Gram$ will have one additional negative eigenvalue compared to $\gram$ and thus must have $n_+^\Gram=n_+^\gram\leq 1$.\footnote{The condition $\det\gram\neq0$ can be relaxed: if $\det\gram$ vanishes then $\det\Gram$ is independent of $T$, but it can still happen that $n_+^\Gram\leq 1$.} The above theory can then serve as the seed in the above construction. However, notice from table~\ref{tab:culprits} that the inequality $\frac{(-\Delta_i)}{29}>\frac{(b_0\cdot b_i)^2}{(-b_i\cdot b_i)}$ is only ever marginally satisfied: an allowable window for $T'$ will appear only for large $m$. Theories built in this way will thus generically have a huge number of tensor multiplets and gauge groups with very large rank.

\medskip

Let us conclude this section with an explicit example, using one of the other rows from table~\ref{tab:culprits}. Start with the following seed theory,
\begin{equation}
\label{eq:infinite_family_example}
    \newcommand{\myspacer}{\hspace{-11pt}}
    \begin{aligned}
        G &= \SU(10) \,,\myspacer\\
        \H &= 7\times\rep{10} &&\myspacer+ 5\times\rep{45} &&\myspacer+ 8\times\rep{120} &&\myspacer+ 3\times\rep{210} &&\myspacer+ \tfrac{1}{2}\times\rep{252} &&\myspacer+ 2\times\rep{825} &&\myspacer+ \rep{990} &&\myspacer+ 2\times\rep{1848}\\
        &= 7\times\ydiagram{1} &&\myspacer+ 5\times\ydiagram{1,1} &&\myspacer+ 8\times\ydiagram{1,1,1} &&\myspacer+ 3\times\ydiagram{1,1,1,1} &&\myspacer+ \tfrac{1}{2}\times\ydiagram{1,1,1,1,1} &&\myspacer+ 2\times\ydiagram{2,2} &&\myspacer+ \ydiagram{2,1,1} &&\myspacer+ 2\times\ydiagram{2,1,1,1} \,,\\[-18pt]
        \Delta &= 8248 \,,\\
        \Gram &= \begin{psmallmatrix*}
            9-T & 473\\
            473 & 895
        \end{psmallmatrix*} \,,\myspacer\myspacer
    \end{aligned}
\end{equation}
which satisfies the $B$-constraint and can serve as the seed theory for $T\geq 3$ since one can choose the vectors $b_I$ and $j$ to be the following:
\begin{equation}
    b_0 = (3;1^T) \,, \qquad b_1 = (103;-77,-44,-43,0^{T-3}) \,, \qquad j = (1;0^T) \,.
\end{equation}
Augmenting this seed theory to $G'=\SU(10)\times E_6^m$ with no additional hypermultiplets, equation~\eqref{eq:Tm_region} in this case can be brought to the form
\begin{equation}
    232(m-12,\!093) \leq 87(T'-32,\!251) \leq 234(m-12,\!093) \,.
\end{equation}
The smallest choice for $m$ and $T'$ which yields an anomaly-free theory is $m=12,\!093$ and $T'=32,\!251$. Indeed, for these values one has $\Delta'+29T'=273$ exactly and the $12,\!095$-dimensional lattice $\Lambda'$ with Gram matrix
\begin{equation}
    \Gram' = \begin{pmatrix}
        -32,\!242 & 473 & -4_{12,093}\\
        473 & 895 & 0_{12,093}\\
        -4_{12,093} & 0_{12,093} & -6\,\I_{12,093\times12,093}
    \end{pmatrix}
\end{equation}
easily embeds into the odd unimodular lattice of signature $(1,32251)$. For example, one can choose
\begin{equation*}
    \begin{array}{rrrrrrrrrrrl}
        b_0 = \big( & 3; & 1, & 1, & 1; & & (\phantom{-}2, & 2, & 0, & 0)^{4031}\hspace{-18pt}\phantom{,} & ; & 0^{16,124} \big) \,, \\
        b_1 = \big( & 103; & -77, & -44, & -43; & & \multicolumn{4}{c}{\phantom{{}^{16,124}}0^{16,124}} & ; & 0^{16,124} \big) \,,\\
        b_{3\ell-1} = \big( & 0; & 0, & 0, & 0; & (0,0,0,0)^{\ell-1}, & (\phantom{-}1, & 1, & -2, & 0), & (0,0,0,0)^{4031-\ell} ;& 0^{16,124} \big) \,,\\
        b_{3\ell} = \big( & 0; & 0, & 0, & 0; & (0,0,0,0)^{\ell-1}, & (\phantom{-}2, & 0, & 1, & 1), & (0,0,0,0)^{4031-\ell} ;& 0^{16,124} \big) \,,\\
        b_{3\ell+1} = \big( & 0; & 0, & 0, & 0; & (0,0,0,0)^{\ell-1}, & (\phantom{-}0, & 2, & 1, & -1), & (0,0,0,0)^{4031-\ell} ;& 0^{16,124} \big) \,,\\[5pt]
        j = \big( & 90; & 0, & 0, & 0; & & (-1, & -1, & 0, & 0)^{4031}\hspace{-18pt}\phantom{,} & ; & 0^{16,124} \big) \,,
    \end{array}
\end{equation*}
with $\ell=1,2,\ldots,4031$.

\section{Classification}
\label{sec:classification}

The guiding principle for the classification is the fact that anomaly-free theories with semi-simple gauge groups with $k$ factors can always be decomposed into $k$ theories with simple gauge groups. This has been used by previous analyses under several guises, e.g.\ as the ``block'' decomposition in~\cite{Kumar2011}. For example, the following theory
\begin{equation}
\label{eq:decompose_example}
    \begin{aligned}
        G &= \SU(8)\times\SU(10)\times\SU(9) \,,\\
        \H &= 5\times(\rep{8},\rep{1},\rep{1}) + (\rep{28},\rep{1},\rep{1}) + (\rep{56},\rep{1},\rep{1})\\
        &\qquad + 2\times(\rep{1},\rep{45},\rep{1}) + (\rep{1},\rep{1},\rep{9}) + (\rep{1},\rep{1},\rep{45}) + 2\times(\rep{8},\rep{10},\rep{1})\\
        \Delta &= 186\\
        T &= 3\\
        \Gram &= \begin{psmallmatrix*}
             6 & 5 & 2 & -1 \\
             5 & 3 & 2 &    \\
             2 & 2 & 0 &    \\
            -1 &   &   & -1
        \end{psmallmatrix*} \,,
    \end{aligned}
\end{equation}
is anomaly-free\footnote{With $\Omega=\diag(1,-1,-1,-1)$, one can choose, for example, $b_0=(3,1,1,1)$, $b_1=(2,0,0,1)$, $b_2=(1,1,0,0)$, $b_3=(0,0,1,0)$ and $j=(2,0,-1,0)$.} and can be decomposed as
\begin{equation}
    \begin{aligned}
        G_{1,2} &= \SU(8)\times\SU(10) \,, & \Delta_{1,2} &= 212 \,, & \Gram_{1,2} &= \begin{psmallmatrix*}
            6 & 5 & 2\\ 5 & 3 & 2\\ 2 & 2 & 0
        \end{psmallmatrix*} \,,\\[-5pt]
        \H_{1,2} &= 5\times(\rep{8},\rep{1}) + (\rep{28},\rep{1}) + (\rep{56},\rep{1}) + 2\times(\rep{1},\rep{45}) + 2\times(\rep{8},\rep{10}) \,, \hspace{-110pt} \\[5pt]
        G_3 &= \SU(9) \,, \quad \H_3 = \rep{9} + \rep{45} \,, & \Delta_3 &= -26 \,, \quad & \Gram_3 &= \begin{psmallmatrix*}
            6 & -1\\ -1 & -1
        \end{psmallmatrix*} \,,
    \end{aligned}
\end{equation}
or even further as
\begin{equation}
    \begin{aligned}
        G_1 &= \SU(8) \,, & \H_1 &= 25\times\rep{8} + \rep{28} + \rep{56} \,, & \Delta_1 &= 221 \,, & \Gram_1 &= \begin{psmallmatrix*}
            6 & 5\\ 5 & 3
        \end{psmallmatrix*} \,,\\
        G_2 &= \SU(10) \,, \quad & \H_2 &= 16\times\rep{10} + 2\times\rep{45} \,, \quad & \Delta_2 &= 151 \,, & \Gram_2 &= \begin{psmallmatrix*}
            6 & 2\\ 2 & 0
        \end{psmallmatrix*} \,,\\
        G_3 &= \SU(9) \,, & \H_3 &= \rep{9} + \rep{45} \,, & \Delta_3 &= -26 \,, \quad & \Gram_3 &= \begin{psmallmatrix*}
            6 & -1\\ -1 & -1
        \end{psmallmatrix*} \,.
    \end{aligned}
\end{equation}
This fully decomposed form must be accompanied by the following ``merging rule'' which describes which bi-charged hypermultiplets to construct in order to recover~\eqref{eq:decompose_example}:
\begin{equation}
    20\times(\rep{8},\rep{1},\rep{1}) + 16\times(\rep{1},\rep{10},\rep{1}) \quad\longrightarrow\quad 2\times (\rep{8},\rep{10},\rep{1}) \,.
\end{equation}
This merging has the effect of both decreasing the number of hypermultiplets by $160$ and introducing the non-zero off-diagonal entry $\Gram_{13} = b_1\cdot b_3 = 2$. We emphasize that although~\eqref{eq:decompose_example} satisfies the gravitational bound $\Delta+29T\leq 273$, not all of the theories encountered in its decomposition do as well.

This motivates the following terminology, which we will make extensive use of. We say that a theory with group $G$, hypermultiplets $\H$ and $T$ tensor multiplets is \emph{admissible} if it satisfies all of the boxed conditions of equations~\eqref{eq:B_constraint}, \eqref{eq:eigenvalue_bounds_G}, \eqref{eq:G2_global}, \eqref{eq:half-integers}, \eqref{eq:j_condition} and~\eqref{eq:unimodular_embedding}. Theories which additionally cancel the gravitational anomaly by satisfying equation~\eqref{eq:gravitational_anomaly} are said to be \emph{anomaly-free}. Finally, we say a theory is \emph{simple} if its gauge group is simple. Otherwise, we will refer to a theory by its value of $k$, the number of simple factors. By a slight abuse of terminology we also say a theory is admissible (anomaly-free) without specifying a value of $T$ if there exists \emph{at least one} $T\geq 0$ for which that the theory with $G$ and $\H$ is admissible (anomaly-free). In this language, the above motivating example shows that anomaly-free theories can always be decomposed into $k$ simple admissible theories (which may or may not be anomaly-free). We should emphasize that \emph{anomaly-free} theories are not just anomaly-free but also satisfy the positivity requirement of~\eqref{eq:j_condition} and unimodularity requirement of~\eqref{eq:unimodular_embedding}.

\subsection{On multigraphs and cliques}
\label{sec:graph_features}

There is a natural representation of this ``decomposition data'' as a multigraph\footnote{In graph theory, a multigraph is a graph which is permitted to have multiple edges between the same two vertices and edges from a vertex to itself.} where vertices correspond to simple admissible theories and edges correspond to different ways to merge the hypermultiplets of two simple admissible theories together to form a $k=2$ admissible theory.\footnote{One way to incorporate hypermultiplets charged under three or more simple factors would be to generalize to a hypergraph where hyperedges connect more than two vertices.} Admissible theories for higher $k$ then correspond to $k$-cliques in this multigraph since, at a minimum, they must result in an admissible theory when restricted to any two of their simple factors. Very broadly, the approach we take is to start by constructing $k\leq 2$ admissible theories and then use the resulting multigraph structure to ``bootstrap'' our way to $(3+)$-cliques (i.e.\ $k\geq3$ admissible theories).

\begin{table}[t]
    \centering
    \begin{tabular}{clcccc}
        \toprule
        $\;\;G_i$ & $\;\;\H_i$ & $\Delta_i$ & $b_i\cdot b_i$ & $b_0\cdot b_i$ & Notes\\
        \midrule
        Any      & $\adj$                          & $0$ & $0$ & $0$ \\[10pt]
        $\SU(N)$ & $(2N)\times\rep{N}$                     & $N^2+1$ & $-2$ & $0$ \\
        $\SU(N)$ & $(N-8)\times\rep{N}+\rep{N(N+1)/2}$     & $\frac{(N-7)(N-8)}{2}-27$ & $-1$ & $-1$ & $N\geq 8$ \\
        $\SU(N)$ & $(N+8)\times\rep{N}+\rep{N(N-1)/2}$     & $\frac{(N+7)(N+8)}{2}-27$ & $-1$ & $1$\\
        $\SU(N)$ & $16\times\rep{N}+2\times\rep{N(N-1)/2}$ & $15N+1$ & $0$  & $2$\\
        $\SU(N)$ & $\rep{N(N-1)/2}+\rep{N(N+1)/2}$ & $1$ & $0$ & $0$\\
        $\SU(6)$ & $15\times\rep{6}+\frac{1}{2}\times\rep{20}$          & $65$ & $-1$ & $1$\\
        $\SU(6)$ & $17\times\rep{6}+\rep{15}+\frac{1}{2}\times\rep{20}$ & $92$ & $0$  & $2$\\
        $\SU(6)$ & $18\times\rep{6}+\rep{20}$                           & $93$ & $0$  & $2$ \\
        $\SU(6)$ & $\rep{6}+\frac{1}{2}\times\rep{20}+\rep{21}$     & $2$ & $0$ & $0$ \\[10pt]
        
        $\SO(N)$ & $(N-8)\times\rep{N}$ & $\frac{(N-7)(N-8)}{2}-28$  & $-4$  & $-2$ & $N\geq 8$ \\
        $\SO(N)$ & \tiny$(N-7)\times\rep{N}+\big(2^{\lfloor\frac{10-N}{2}\rfloor}\big)\times\rep{2^{\lfloor\frac{N-1}{2}\rfloor}}$ & $\frac{(N-6)(N-7)}{2}-5$  & $-3$  & $-1$ & $7\leq N\leq 12$ \\
        $\SO(N)$ & \tiny$(N-6)\times\rep{N}+\big(2\cdot 2^{\lfloor\frac{10-N}{2}\rfloor}\big)\times\rep{2^{\lfloor\frac{N-1}{2}\rfloor}}$ & $\frac{(N-5)(N-6)}{2}+17$  & $-2$  & $0$ & $7\leq N\leq 13$ \\
        $\SO(N)$ & \tiny$(N-5)\times\rep{N}+\big(3\cdot 2^{\lfloor\frac{10-N}{2}\rfloor}\big)\times\rep{2^{\lfloor\frac{N-1}{2}\rfloor}}$ & $\frac{(N-4)(N-5)}{2}+38$  & $-1$  & $1$ & $7\leq N\leq 12$ \\
        $\SO(N)$ & \tiny$(N-4)\times\rep{N}+\big(4\cdot 2^{\lfloor\frac{10-N}{2}\rfloor}\big)\times\rep{2^{\lfloor\frac{N-1}{2}\rfloor}}$ & $\frac{(N-3)(N-4)}{2}+58$  & $0$  & $2$ & $7\leq N\leq 14$ \\[10pt]
        
        $\Sp(N)$ & $(2N+8)\times\rep{2N}$               & $N(2N+15)$ & $-1$  & $1$ \\
        $\Sp(N)$ & $16\times\rep{2N}+\rep{(N-1)(2N+1)}$ & $30N-1$    & $0$   & $2$ \\[10pt]
        
        $E_6$    & $k\times\rep{27}$                    & $27k-78$   & $k-6$ & $k-4$ & $k\leq 6$ \\
        $E_7$    & $\frac{k}{2}\times\rep{56}$          & $28k-133$  & $k-8$ & $k-6$ & $k\leq 8$ \\
        $E_8$    &                                      & $-248$     & $-12$ & $-10$ \\
        $F_4$    & $k\times\rep{26}$                    & $26k-52$   & $k-5$ & $k-3$ & $k\leq 5$ \\
        $G_2$    & $(3k+1)\times\rep{7}$                & $21k-7$    & $k-3$ & $k-1$ & $k\leq3$ \\
        \bottomrule
    \end{tabular}
    \caption{All type-\B{B} vertices, i.e.\ all simple theories with $b_i\cdot b_i\leq 0$.}
    \label{tab:typeB}
\end{table}

Let us make this idea more precise. The directed multigraph $\G=(\V,\E,\h)$ consists of a set of vertices $\V$, a set of edges $\E$ and a map $\h:\E\to\V\times\V$ which associates each edge to an ordered pair of vertices. Each vertex $\v\in\V$ represents a simple admissible theory and consists of a choice of simple gauge group and hypermultiplets (from which all other relevant data may be derived):
\begin{equation}
    \v = \Big\{G_i \,,\; \H_i \,;\; \Delta_i \,,\; b_0\cdot b_i \,,\; b_i\cdot b_i \Big\} \,.
\end{equation}
Every $k=2$ admissible theory which decomposes into the vertices $\v_i$ and $\v_j$ plus a (possibly empty) collection of merging rules,
\begin{equation}
    \Big\{\;(n_\ell \dim R_{j,\ell})\times(R_{i,\ell},\rep{1}) + (n_\ell \dim R_{i,\ell})\times(\rep{1},R_{j,\ell}) \;\;\longrightarrow\;\; n_\ell\times(R_{i,\ell},R_{j,\ell})\;\Big\}_\ell \,,
\end{equation}
is identified with an edge $\e\in\E$ incident to the participating vertices (i.e.\ $\h(\e)=(\v_i,\v_j)$). Each edge consists of the end-products of the mergings, $\H_{ij}=\sum_\ell n_\ell\times(R_{i,\ell},R_{j,\ell})$ from which the decrease in number of hypermultiplets (denoted by $\delta H\leq0$) and off-diagonal Gram matrix entry may be derived:
\begin{equation}
    \e = \Big\{ \H_{ij} \,;\; \delta H \,,\; b_i\cdot b_j \Big\} \,.
\end{equation}
Importantly, there can be multiple edges which connect the same two vertices but represent different mergings. We will refer to edges which connect a vertex to itself as \emph{self-edges} and to edges for which $\H_{ij}=\emptyset$ as \emph{trivial}.

By construction, the multigraph $\G$ directly represents admissible theories with up to two simple factors. We can identify admissible theories with $k$ simple factors with $k$-cliques in $\G$, where for our purposes ``$k$-clique'' means something slightly different than it usually does. For us a $k$-clique $\C$ is a multigraph homomorphism $\C:\K_k\to\G$, where $\K_k$ is the complete graph\footnote{The complete graph on $k$ vertices has a unique edge connecting every pair of distinct vertices.} on $k$ vertices, modulo the symmetries of $\K_k$. That is, $\C$ sends the $k$ vertices $\v_i^\K$ of $\K_k$ to vertices of $\G$ and sends the $\binom{k}{2}$ edges $\e_{ij}^\K$ of $\K_k$ to edges of $\G$ in such a way that
\begin{equation}
    \h\big(\C(\e_{ij}^\K)\big) = \big(\C(\v_i^\K),\C(\v_j^\K)\big) \quad\text{or}\quad \h\big(\C(\e_{ij}^\K)\big) = \big(\C(\v_j^\K),\C(\v_i^\K)\big)
\end{equation}
for all $1\leq i<j\leq k$.\footnote{If there were no self-edges then vertices of $\G$ would appear at most once and we could simply say that a $k$-clique is a subgraph of $\G$ isomorphic to $\K_k$. As it stands, multiple vertices of $\K_k$ may map to the same vertex in $\G$.} More colloquially, a $k$-clique is a way to label, up to symmetries, the vertices and edges of $\K_k$ using vertices and edges of $\G$ in such a way that the edge incidences are representative of $\G$: this is how we will depict (and think about) cliques going forward. We emphasize that this notion of clique is quite general and allows for collections of vertices and edges which correspond to theories which are anomalous or have negative hypermultiplet multiplicities after accounting for all of the merging rules: we will see examples of these phenomena in section~\ref{sec:graph_example}. Of course, we are only interested in admissible (and ultimately anomaly-free) theories and \emph{in}admissible cliques will be the first to be discarded when we come to the classification techniques of section~\ref{sec:clique_construction}.

\medskip

There are several features of $\G$ which are worth highlighting as they will play an important role in what follows. Vertices can be categorized into two types based on their diagonal Gram matrix entries: vertices which have $b_i\cdot b_i>0$ and $b_i\cdot b_i\leq 0$ we call \emph{type-\A{A}} and \emph{type-\B{B}}, respectively. For each simple group there are only a handful of vertices of type~\B{B}: see table~\ref{tab:typeB}. Distinguishing between type-\A{A} and type-\B{B} vertices is useful in part because a vertex's degree (i.e.\ the number of incident edges) in $\G$ depends dramatically on its type. While type-\B{B} vertices are few in number, they have very high degree, having edges to most other type-\B{B} vertices and to many type-\A{A} vertices as well. In contrast, most vertices are of type-\A{A} but generally have very low degree since there can only be \emph{non-trivial} edges between type-\A{A} vertices\footnote{Having a trivial edge between two type-\A{A} vertices would produce a $2\times2$ principle submatrix of $\Gram$ of the form $\begin{psmallmatrix*}
    + & 0\\ 0 & +
\end{psmallmatrix*}$ which immediately gives $n_+^\Gram\geq 2$.} and this requires there to be enough hypermultiplets available to merge. Figure~\ref{fig:vertices_scatter} shows some of the vertices of $\G$ organized by their values of $\Delta_i$ and $b_i\cdot b_i$. Although not immediately obvious from the above characterization in terms of $b_i\cdot b_i$, all type-\A{A} vertices have $\Delta_i$ strictly positive. The smallest value of $\Delta_i$ for a type-\A{A} vertex occurs for $\{G_2,\,\rep{7}+\rep{27}\}$ which has $\Delta_i=20$ and is admissible for $T\geq 9$. In addition, all but~$29$ type-\B{B} vertices have $\Delta_i\geq 0$.

\begin{figure}[t]
    \centering
    \includegraphics[width=\textwidth]{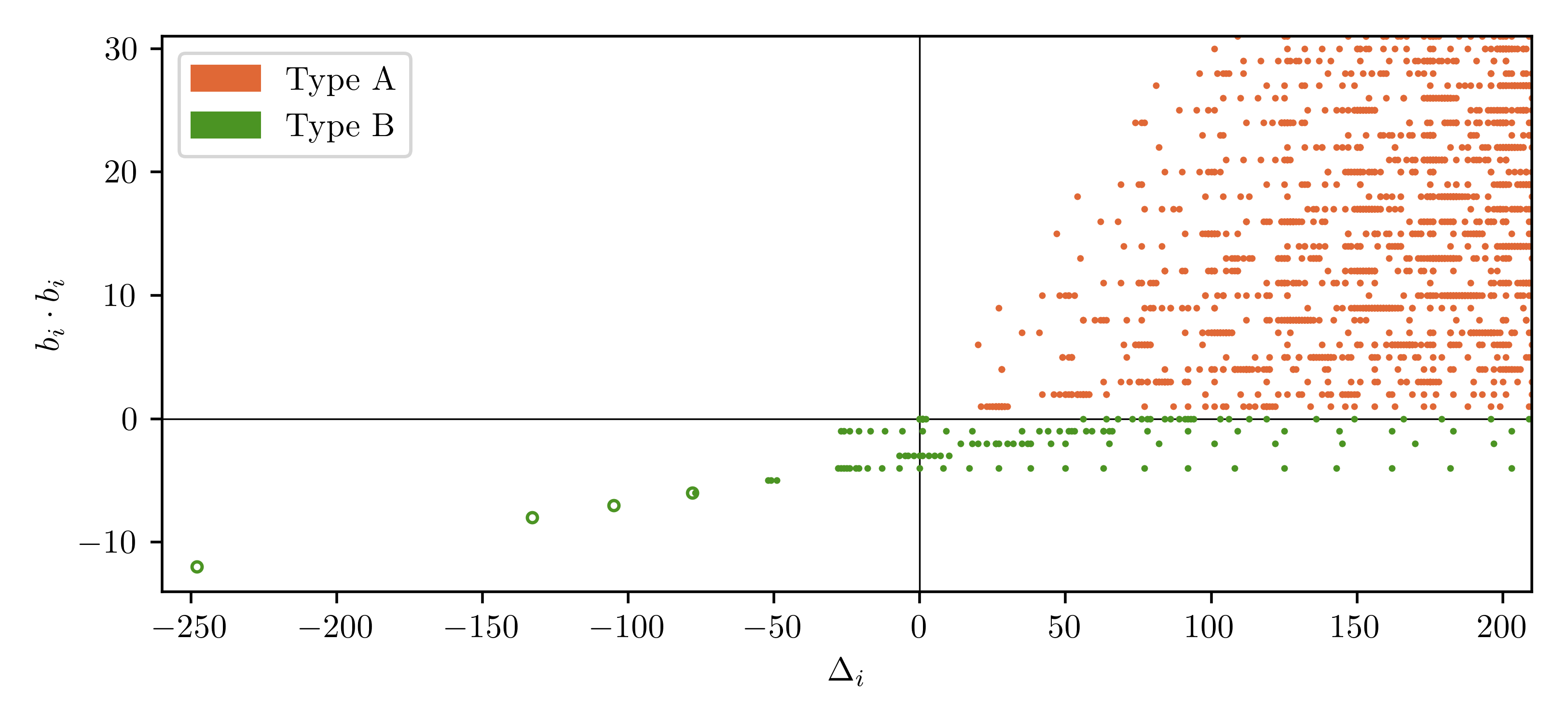}
    \caption{The low-lying vertices of $\G$ organized by their values of $\Delta_i$ and $b_i\cdot b_i$. The four vertices of table~\ref{tab:culprits} which are responsible for the infinite family discussed in section~\ref{sec:infinite_families} are shown with circles.}
    \label{fig:vertices_scatter}
\end{figure}

\medskip

It will be useful to associate to each clique a value $T_\text{min}$, defined as
\begin{equation}
\label{eq:Tmin_defn}
    T_\text{min}(\C) \equiv \inf\!\left\{T\geq 0\;\big|\;\C\text{ with }T\text{ tensors is admissible}\right\} \in\{0,1,2,\ldots\}\cup\{+\infty\} \,,
\end{equation}
where our terminology extends to cliques in the obvious way: we say a clique $\C$ with $T$ tensor multiplets is admissible (anomaly-free) if the theory it represents is admissible (anomaly-free). Similarly, a clique is admissible (anomaly-free) if there exists one or more $T\geq 0$ such that $\C$ with $T$ tensor multiplets is admissible (anomaly-free). That is, $\C$ is admissible iff $T_\text{min}(\C)<+\infty$ and anomaly-free iff $\Delta(\C)+29T_\text{min}(\C) \leq 273$. An important fact is that if $\C$ contains $\C'$ as a sub-clique then $T_\text{min}(\C)\geq T_\text{min}(\C')$, since adding additional vectors $b_i$ to the lattice can only ever increase the required value of $T$ needed to satisfy equations~\eqref{eq:eigenvalue_bounds_G}, \eqref{eq:j_condition} and~\eqref{eq:unimodular_embedding}.

\subsection{Multigraph construction}
\label{sec:graph_construction}

Strictly speaking, the multigraph $\G$ is infinite: for each simple group there are infinitely many choices of hypermultiplets which satisfy the $B$-constraint which then appear as infinitely many vertices in $\G$. In practice we will truncate to a finite list of simple groups and only include type-\A{A} vertices with values $\Delta_i$ below some bound $\Delta_\text{max}$, which we allow to depend on the corresponding simple group $G_i$. For now, we outline how the vertices and edges of $\G$ can be constructed up to some to-be-determined $\Delta_\text{max}(G_i)$, leaving some of the details to the appendices.

Constructing all type-\A{A} vertices for a simple group $G_i$ up to some bound $\Delta_\text{max}$ proceeds in a few steps. First, all irreps $R$ of $G_i$ with $H_R \leq H_\text{max} \equiv \Delta_\text{max} + V_i$ are enumerated and their trace indices $A_R$, $B_R$ and $C_R$ computed (conjugate irreps are redundant and removed). This can be done very quickly, even to very large $H_\text{max}$, by using backtracking on the highest-weight vector and the results of~\cite{Okubo82} which we review in appendix~\ref{app:trace_indices}. Having generated this list of irreps, we are faced with the following combinatorial problem: given a set of vectors $\{\vec{z}_R = (H_R,B_R)\}_R$ in the $HB$-plane, find all convex, integer, linear combinations of the form
\begin{equation}
    (H,B) = \sum_Rn_R\vec{z}_R \,, \qquad n_R\in\Z_{\geq 0} \,,
\end{equation}
which lie on the line segment $[0,H_\text{max}]\times\{B_\adj\}$. This can be solved by adding vectors one-by-one to candidate solutions in a judiciously chosen order and carefully narrowing the allowed region around the target line segment after each step: see appendix~\ref{app:vertex_construction} for more details.

\medskip

Given that all vertices of $\G$ have been constructed (up to some bounds on $\Delta_i$), adding edges is in principle straightforward: for each choice of two vertices (with repetition to allow for self-edges) and all possible mergings of their hypermultiplets, check the resulting $k=2$ theory for admissibility. However, na\"ively iterating over all $\binom{|\V|+1}{2}$ choices of vertices can become prohibitively slow when the number of vertices is large. Most vertices are of type \A{A} and ultimately have very few (if any) edges to other type-\A{A} vertices. If $\v_i$ and $\v_j$ are two type-\A{A} vertices then any edge connecting them must be non-trivial and have $b_i\cdot b_j>0$ large enough to ensure
\begin{equation}
    \det\begin{pmatrix}
        b_i\cdot b_i & b_i\cdot b_j\\
        b_i\cdot b_j & b_j\cdot b_j
    \end{pmatrix} \leq 0 \quad\implies\quad (b_i\cdot b_j)^2 \geq (b_i\cdot b_i)(b_j\cdot b_j)
\label{eq:bibj_lower_bound}\end{equation}
and thus avoid $n_+^\Gram\geq 2$. As we show in appendix~\ref{app:edge_construction}, one can find an upper bound $b_i\cdot b_j \leq \sigma_{i,j}^\text{max}$ which depends on the vertex $\v_i$ and (crucially) only the \emph{gauge group} $G_j$. Given a type-\A{A} vertex $\v_i$ one then only needs to check for edges to vertices with gauge group $G_j$ which are either type-\B{B} or are type-\A{A} and have $0<b_j\cdot b_j \leq \frac{(\sigma_{i,j}^\text{max})^2}{b_i\cdot b_i}$. In practice this upper bound is quite severe and the number of pairs of vertices to check is greatly reduced from the na\"ive $\mathcal{O}(|\V|^2)$.

\subsection{\emph{Intermission}: A three-vertex example}
\label{sec:graph_example}

\begin{figure}[t]
    \centering
    \begin{tabular}{cccccc}
        \toprule
        $\v_i$ & $G_i$ & $\H_i$ & $\Delta_i$ & $b_i\cdot b_i$ & $b_0\cdot b_i$ \\
        \midrule
        $\v_\A{1}$ & $\SU(8)$ & $16\times\rep{8}+3\times\rep{28}$ & $149$ & $1$ & $3$\\
        $\v_\B{2}$ & $\SU(16)$ & $8\times\rep{16} + \rep{136}$ & $9$ & $-1$ & $-1$\\
        $\v_\B{3}$ & $\Sp(4)$ & $16\times\rep{8}$ & $92$ & $-1$ & $1$\\
        \bottomrule
    \end{tabular}\\[15pt]
    \begin{tikzpicture}[baseline={([yshift=-0.5ex]current bounding box.center)}]
        \begin{scope}[scale=1.5]
            \node[circle, inner sep=1.5, fill=typeA, draw=black, thick] (a) at (0,1.5*1.73) {};
            \node[circle, inner sep=1.5, fill=typeB, draw=black, thick] (b) at (-1.5,0) {};
            \node[circle, inner sep=1.5, fill=typeB, draw=black, thick] (c) at (1.5,0) {};
    
            \node[left] at (a) {$\v_\A{1}\,$};
            \node[left] at (b) {$\v_\B{2}\,$};
            \node[below=2pt] at (c) {$\v_\B{3}\;\;$};
    
            \draw[very thick] (a) to node[fill=white,inner sep=0pt] {$\e_{\A{1}\B{2}}$} (b);
            
            \draw[very thick] (a) to[out=-60,in=120] node[fill=white,inner sep=0pt] {$\e_{\A{1}\B{3}}$} (c);
            \draw[very thick] (a) to[out=-90,in=150] node[fill=white,inner sep=0pt] {$\e_{\A{1}\B{3}}'$} (c);
            
            \draw[very thick] (b) to node[fill=white,inner sep=0pt] {$\e_{\B{23}}$} (c);
    
            \draw[very thick] (a) to[out=0,in=50,looseness=30] node[right] {$\!\!\e_{\A{11}}$} (a);
            \draw[very thick] (a) to[out=70,in=120,looseness=30] node[left] {$\!\e_{\A{11}}'$} (a);
            \draw[very thick] (c) to[out=75,in=35,looseness=30] node[right] {$\!\!\e_{\B{33}}$} (c);
            \draw[very thick] (c) to[out=25,in=-15,looseness=30] node[right] {$\!\!\e_{\B{33}}'$} (c);
    
            \draw (a) to[out=-150,in=90] (b);
            \draw (a) to[out=-30,in=90] (c);
            \draw (b) to[out=-30,in=-150] (c);
            \draw (b) to[out=-80,in=-130,looseness=30] (b);
            \draw (c) to[out=-25,in=-65,looseness=30] (c);
        \end{scope}
    \end{tikzpicture}
    \hspace{20pt}
    \begin{tabular}{cccccc}
        \toprule
        $\e_{ij}$ & $\h(\e_{ij})$ & $\H_{ij}$ & $\delta H$ & $b_i\cdot b_j$ \\
        \midrule
        $\e_{\A{11}}$      & $(\v_\A{1},\v_\A{1})$ & $(\rep{8},\rep{8})$        & $-64$  & $1$ \\
        $\e_{\A{11}}'$     & $(\v_\A{1},\v_\A{1})$ & $2\times(\rep{8},\rep{8})$ & $-128$ & $2$ \\
        $\e_{\A{1}\B{2}}$  & $(\v_\A{1},\v_\B{2})$ & $(\rep{8},\rep{16})$       & $-128$ & $1$ \\
        $\e_{\A{1}\B{3}}$  & $(\v_\A{1},\v_\B{3})$ & $(\rep{8},\rep{8})$        & $-64$  & $1$ \\
        $\e_{\A{1}\B{3}}'$ & $(\v_\A{1},\v_\B{3})$ & $2\times(\rep{8},\rep{8})$ & $-128$ & $2$ \\
        $\e_{\B{23}}$      & $(\v_\B{2},\v_\B{3})$ & $(\rep{16},\rep{8})$       & $-128$ & $1$ \\
        $\e_{\B{33}}$      & $(\v_\B{3},\v_\B{3})$ & $(\rep{8},\rep{8})$        & $-64$  & $1$ \\
        $\e_{\B{33}}'$     & $(\v_\B{3},\v_\B{3})$ & $2\times(\rep{8},\rep{8})$ & $-128$ & $2$ \\
        \bottomrule
    \end{tabular}
    \caption{The subgraph of the multigraph $\G$ induced by the three vertices $\v_\A{1}$, $\v_\B{2}$ and~$\v_\B{3}$. The five un-named edges (thin lines in the left panel) are trivial, i.e.~have $\H_{ij}=\emptyset$. Note that there is no trivial self-edge for the type-\A{A} vertex $\v_\A{1}$.}
    \label{fig:3vertex_example}
\end{figure}

Before turning to the classification of anomaly-free cliques let us pause here to exemplify how admissible/anomaly-free theories are encoded in $\G$ by discussing the three-vertex subgraph shown in figure~\ref{fig:3vertex_example}. In this example there is one type-\A{A} vertex $\v_\A{1}$ with gauge group $\SU(8)$ and two type-\B{B} vertices $\v_\B{2}$ and $\v_\B{3}$ with gauge groups $\SU(16)$ and $\Sp(4)$, respectively. In what follows we will omit all but the relevant data in order to streamline our discussion, but we encourage the motivated reader to fill in all of the details about the cliques we highlight: everything one could hope to know about a theory can be derived from its labelling of $\K_k$ and the information in figure~\ref{fig:3vertex_example}.

With eight non-trivial edges and five trivial edges, it is manifest that there are exactly thirteen $k=2$ admissible cliques which are built from pairs of these three simple theories. In contrast, a $(3+)$-clique can fail to be admissible for several reasons. For example, the following clique is clearly inadmissible because the number of $(\rep{1},\rep{8},\rep{1})$ hypermultiplets cannot be negative:
\begin{equation}
\label{eq:example_neg_nR}
    \begin{aligned}
        \begin{tikzpicture}[baseline={([yshift=-0.5ex]current bounding box.center)}]
            \node[circle, inner sep=1.5, fill=typeA, draw=black, thick] (a) at (0,0.75*1.73) {};
            \node[circle, inner sep=1.5, fill=typeB, draw=black, thick] (b) at (-0.75,0) {};
            \node[circle, inner sep=1.5, fill=typeB, draw=black, thick] (c) at (0.75,0) {};
            
            \node[left] at (a) {$\v_\A{1}$};
            \node[left] at (b) {$\v_\B{3}$};
            \node[right] at (c) {$\v_\B{3}$};

            \draw[very thick] (a) to node[fill=white,inner sep=0pt] {\footnotesize$\e_{\A{1}\B{3}}$} (b);
            \draw (a) to (c);
            \draw[very thick] (b) to node[fill=white,inner sep=0pt] {\footnotesize$\e_{\B{33}}'$} (c);
        \end{tikzpicture} \;\; &: &\quad \H &= 8\times(\rep{8},\rep{1},\rep{1}) + 3\times(\rep{28},\rep{1},\rep{1}) + (-8)\times(\rep{1},\rep{8},\rep{1})\\[-18pt]
        && &\quad + (\rep{8},\rep{8},\rep{1}) + 2\times(\rep{1},\rep{8},\rep{8})
    \end{aligned}
\end{equation}
It can also happen that $n_+^\gram$, and thus $n_+^\Gram$, is too large:
\begin{equation}
    \begin{aligned}
        \begin{tikzpicture}[baseline={([yshift=-.5ex]current bounding box.center)}]
            \node[circle, inner sep=1.5, fill=typeA, draw=black, thick] (a) at (0,0.75*1.73) {};
            \node[circle, inner sep=1.5, fill=typeA, draw=black, thick] (b) at (-0.75,0) {};
            \node[circle, inner sep=1.5, fill=typeB, draw=black, thick] (c) at (0.75,0) {};
            
            \node[left] at (a) {$\v_\A{1}$};
            \node[left] at (b) {$\v_\A{1}$};
            \node[right] at (c) {$\v_\B{3}$};

            \draw[very thick] (a) to node[fill=white,inner sep=0pt] {\footnotesize$\e_{\A{11}}$} (b);
            \draw (a) to (c);
            \draw[very thick] (b) to node[fill=white,inner sep=0pt] {\footnotesize$\e_{\A{1}\B{3}}$} (c);
        \end{tikzpicture} \;\; &: &\quad \gram &= \begin{psmallmatrix*}
            1 & 1 &   \\
            1 & 1 &  1\\
            & 1 & -1
        \end{psmallmatrix*} \,, \quad (n_+^\gram,n_-^\gram) = (2,1) \,.
    \end{aligned}
\end{equation}
There can also be more subtle cases where $n_+^\gram \leq 1$ but $n_+^\Gram>1$ for all $T\geq 0$, such as for
\begin{equation}
    \begin{aligned}
        \begin{tikzpicture}[baseline={([yshift=-0.5ex]current bounding box.center)}]
            \node[circle, inner sep=1.5, fill=typeA, draw=black, thick] (a) at (0,0.75*1.73) {};
            \node[circle, inner sep=1.5, fill=typeB, draw=black, thick] (b) at (-0.75,0) {};
            \node[circle, inner sep=1.5, fill=typeB, draw=black, thick] (c) at (0.75,0) {};
            
            \node[left] at (a) {$\v_\A{1}$};
            \node[left] at (b) {$\v_\B{3}$};
            \node[right] at (c) {$\v_\B{3}$};

            \draw (a) to (b);
            \draw (a) to (c);
            \draw[very thick] (b) to node[fill=white,inner sep=0pt] {\footnotesize$\e_{\B{33}}$} (c);
        \end{tikzpicture} \;\; &: &\quad \Gram &= \begin{psmallmatrix*}
            9-T &  3 &  1 &  1\\
            3 &  1 &    &   \\
            1 &    & -1 &  1\\
            1 &    &  1 & -1
        \end{psmallmatrix*} \,, \quad (n_+^\Gram,n_-^\Gram) = (2,2) \,,
    \end{aligned}
\end{equation}
where $n_\pm^\Gram$ are independent of $T$. Finally, the vector $j$ may not exist, such as for the clique
\begin{equation}
    \begin{aligned}
        \begin{tikzpicture}[baseline={([yshift=-0.5ex]current bounding box.center)}]
            \node[circle, inner sep=1.5, fill=typeA, draw=black, thick] (a) at (0,0.75*1.73) {};
            \node[circle, inner sep=1.5, fill=typeB, draw=black, thick] (b) at (-0.75,0) {};
            \node[circle, inner sep=1.5, fill=typeB, draw=black, thick] (c) at (0.75,0) {};
            
            \node[left] at (a) {$\v_\A{1}$};
            \node[left] at (b) {$\v_\B{2}$};
            \node[right] at (c) {$\v_\B{3}$};

            \draw (a) to (b);
            \draw (a) to (c);
            \draw[very thick] (b) to node[fill=white,inner sep=0pt] {\footnotesize$\e_{\B{23}}$} (c);
        \end{tikzpicture} \;\; &: &\quad \Gram &= \begin{psmallmatrix*}
            9-T &  3 &  1 & -1\\
            3 &  1 &    &   \\
            1 &    & -1 &  1\\
            -1 &    &  1 & -1
        \end{psmallmatrix*} \,, \quad (n_+^\Gram,n_-^\Gram) = \begin{cases}
            (2,1) & T=0 \,,\\
            (1,1) & T=1 \,,\\
            (1,2) & T\geq 2 \,,
        \end{cases}
    \end{aligned}
\end{equation}
which can never satisfy $j\cdot b_i>0$ since here $n_+^\Gram=1$ implies that $b_2+b_3=0$.

\medskip

Of course there are also admissible cliques with $k\geq 3$, such as the following three:
\begin{equation}
    \begin{aligned}
        \newcommand{\s}{1}
        \begin{tikzpicture}[baseline={([yshift=-0.5ex]current bounding box.south)}]
            \node[circle, inner sep=1.5, fill=typeA, draw=black, thick] (a) at (0,1.753\s) {};
            \node[circle, inner sep=1.5, fill=typeA, draw=black, thick] (b) at (-\s,0) {};
            \node[circle, inner sep=1.5, fill=typeA, draw=black, thick] (c) at (\s,0) {};
            
            \node[above] at (a) {$\v_\A{1}$};
            \node[left] at (b) {$\v_\A{1}$};
            \node[right] at (c) {$\v_\A{1}$};

            \draw[very thick] (a) to node[fill=white,inner sep=0pt] {\footnotesize$\e_{\A{11}}$} (b);
            \draw[very thick] (b) to node[fill=white,inner sep=0pt] {\footnotesize$\e_{\A{11}}$} (c);
            \draw[very thick] (c) to node[fill=white,inner sep=0pt] {\footnotesize$\e_{\A{11}}$} (a);

            \node at (0,-1) {$\Delta=255 \,, \quad T_\text{min}=0$};
        \end{tikzpicture}
        \qquad
        \begin{tikzpicture}[baseline={([yshift=-0.5ex]current bounding box.south)}]
            \node[circle, inner sep=1.5, fill=typeB, draw=black, thick] (a) at (0,\s) {};
            \node[circle, inner sep=1.5, fill=typeB, draw=black, thick] (b) at (0.951*\s,0.309*\s) {};
            \node[circle, inner sep=1.5, fill=typeB, draw=black, thick] (c) at (0.588*\s,-0.809*\s) {};
            \node[circle, inner sep=1.5, fill=typeB, draw=black, thick] (d) at (-0.588*\s,-0.809*\s) {};
            \node[circle, inner sep=1.5, fill=typeB, draw=black, thick] (e) at (-0.951*\s,0.309*\s) {};
            
            \node[above] at (a) {$\v_\B{3}$};
            \node[right] at (b) {$\v_\B{3}$};
            \node[right] at (c) {$\v_\B{3}$};
            \node[left] at (d) {$\v_\B{3}$};
            \node[left] at (e) {$\v_\B{3}$};

            \draw (a) to (c) to (e) to (b) to (d) to (a);

            \draw[very thick] (a) to node[fill=white,inner sep=0pt] {\footnotesize$\e_{\B{33}}$} (b);
            \draw[very thick] (b) to node[fill=white,inner sep=0pt] {\footnotesize$\e_{\B{33}}$} (c);
            \draw[very thick] (c) to node[fill=white,inner sep=0pt] {\footnotesize$\e_{\B{33}}$} (d);
            \draw[very thick] (d) to node[fill=white,inner sep=0pt] {\footnotesize$\e_{\B{33}}$} (e);
            \draw[very thick] (e) to node[fill=white,inner sep=0pt] {\footnotesize$\e_{\B{33}}$} (a);

            \node at (0,-0.809*\s-1) {$\Delta=140 \,, \quad T_\text{min}=4$};
        \end{tikzpicture}
        \qquad
        \newcommand{\ww}{1.5}
        \begin{tikzpicture}[baseline={([yshift=-0.5ex]current bounding box.south)}]
            \node[circle, inner sep=1.5, fill=typeB, draw=black, thick] (a) at (-\ww,0.5*\ww) {};
            \node[circle, inner sep=1.5, fill=typeB, draw=black, thick] (b) at (0,0) {};
            \node[circle, inner sep=1.5, fill=typeB, draw=black, thick] (c) at (\ww,0) {};
            \node[circle, inner sep=1.5, fill=typeB, draw=black, thick] (d) at (\ww,\ww) {};
            \node[circle, inner sep=1.5, fill=typeB, draw=black, thick] (e) at (0,\ww) {};
            
            \node[left] at (a) {$\v_\B{2}$};
            \node[below] at (b) {$\v_\B{3}$};
            \node[right] at (c) {$\v_\B{3}$};
            \node[right] at (d) {$\v_\B{3}$};
            \node[above] at (e) {$\v_\B{3}$};

            \draw (b) to (a) to (c);
            \draw (d) to (a) to (e);
            
            \draw[very thick] (b) to node[fill=white,inner sep=0pt] {\footnotesize$\e_{\B{33}}$} (c);
            \draw[very thick] (c) to node[fill=white,inner sep=0pt] {\footnotesize$\e_{\B{33}}$} (d);
            \draw[very thick] (d) to node[fill=white,inner sep=0pt] {\footnotesize$\e_{\B{33}}$} (e);
            \draw[white, line width=3pt] (e) to (b);
            \draw[very thick] (e) to node[fill=white,inner sep=0pt] {\footnotesize$\e_{\B{33}}$} (b);

            \draw[white, line width=3pt] (b) to (d);
            \draw[white, line width=3pt] (c) to (e);
            \draw (b) to (d);
            \draw (c) to (e);
            
            \node at (0,-1) {$\Delta=121 \,, \quad T_\text{min}=6$};
        \end{tikzpicture}
    \end{aligned}
\end{equation}
The $3$-clique on the left is the only type-\A{A} admissible clique with $k\geq 3$ vertices and is in fact anomaly-free for $T=0$. The $5$-clique in the middle is in fact anomaly-free for $T=4$, and is ``irreducible'' in the sense that all vertices are connected via non-trivial edges. In contrast, the $5$-clique on the right has been drawn to emphasize that it is better thought of as a sort of disjoint union of a $1$-clique and $4$-clique since there are only trivial edges between the two sets of vertices. Similarly, the anomaly-free clique with the largest number of vertices ($k=12$) is best thought of as a sort of disjoint union of six $2$-cliques:
\begin{equation}
\label{eq:12-clique-example}
    \begin{aligned}
        \newcommand{\s}{1}
        \newcommand{\ww}{1.5}
        \begin{tikzpicture}[baseline={([yshift=-0.5ex]current bounding box.south)}]
            \foreach \x in {1, 2, 3, 4, 5} {
            \foreach \y in {0,...,\x} {
                \draw[opacity=0.25] (\x*\ww,0) to (\y*\ww,\s);
                \draw[opacity=0.25] (\x*\ww,\s) to (\y*\ww,0);
            };};
            \draw[opacity=0.25] (0,0) to (5*\ww,0);
            \draw[opacity=0.25] (0,\s) to (5*\ww,\s);
            
            \foreach \x in {0, 1, 2, 3, 4, 5} {
                \node[circle, inner sep=1.5, fill=typeB, draw=black, thick] (a) at (\x*\ww,0) {};
                \node[circle, inner sep=1.5, fill=typeB, draw=black, thick] (b) at (\x*\ww,\s) {};
                
                \node[below] at (a) {$\v_\B{2}$};
                \node[above] at (b) {$\v_\B{3}$};

                \draw[very thick] (a) to node[fill=white,inner sep=0pt] {\footnotesize$\e_{\B{23}}$} (b);
            };
        \end{tikzpicture}
    \end{aligned}
\end{equation}
This clique has $\Delta=-162$ and is anomaly-free for $T=15$.

\subsection{The anatomy of anomaly-free cliques}
\label{sec:clique_anatomy}

The multigraph $\G$ is dense enough that the total number of $k$-cliques grows extremely quickly with $k$. However, because we are interested in admissible (and ultimately anomaly-free) cliques, $\G$ has a lot of structure that we can leverage. In general, a clique $\C$ can be written as a sort of disjoint union,
\begin{equation}
\label{eq:clique_decomp}
    \C = \biguplus_\alpha \C_\alpha^\irr \,,
\end{equation}
where vertices of different $\C_\alpha^\irr$ are joined by trivial edges and each $\C_\alpha^\irr$ is \emph{irreducible}, by which we mean that it has a spanning tree consisting of only non-trivial edges. For example, schematically (with non-trivial edges shown with higher-weight lines),
\begin{equation}
    \C = \;
    \begin{tikzpicture}[baseline={([yshift=-0.5ex]current bounding box.center)}]
        \node[circle, inner sep=1.5, fill=typeA, draw=black, thick] (a) at (0,0) {};
        \node[circle, inner sep=1.5, fill=typeA, draw=black, thick] (b) at (0,0.5) {};
        \node[circle, inner sep=1.5, fill=typeA, draw=black, thick] (c) at (0.5,0.5) {};
        \node[circle, inner sep=1.5, fill=typeA, draw=black, thick] (d) at (0.5,0) {};
        
        \node[circle, inner sep=1.5, fill=typeB, draw=black, thick] (e) at (0.866,0.25) {};
        \node[circle, inner sep=1.5, fill=typeB, draw=black, thick] (f) at (0,-0.5) {};
        \node[circle, inner sep=1.5, fill=typeB, draw=black, thick] (g) at (0,-1) {};
        \node[circle, inner sep=1.5, fill=typeB, draw=black, thick] (h) at (0.5,-0.5) {};
        \node[circle, inner sep=1.5, fill=typeB, draw=black, thick] (i) at (1,-0.5) {};
        \node[circle, inner sep=1.5, fill=typeB, draw=black, thick] (j) at (1.366,-0.15) {};
        \node[circle, inner sep=1.5, fill=typeB, draw=black, thick] (k) at (0.75,-1) {};

        \foreach \xx in {a, b, c, d} {
            \foreach \yy in {e, f, g, h, i, j, k} {
                \draw[opacity=0.25] (\xx.center) -- (\yy.center);
            };
        };
        \foreach \xx in {e, f, g, h, i, j, k} {
            \foreach \yy in {e, f, g, h, i, j, k} {
                \draw[opacity=0.25] (\xx.center) -- (\yy.center);
            };
        };
        \draw[opacity=0.25] (a) to[out=240,in=120] (g);
        \draw[opacity=0.25] (b) to[out=240,in=120] (f);
        \draw[opacity=0.25] (b) to[out=240,in=120] (g);

        \node[circle, inner sep=1.5, fill=typeA, draw=black, thick] (a) at (0,0) {};
        \node[circle, inner sep=1.5, fill=typeA, draw=black, thick] (b) at (0,0.5) {};
        \node[circle, inner sep=1.5, fill=typeA, draw=black, thick] (c) at (0.5,0.5) {};
        \node[circle, inner sep=1.5, fill=typeA, draw=black, thick] (d) at (0.5,0) {};
        
        \node[circle, inner sep=1.5, fill=typeB, draw=black, thick] (e) at (0.866,0.25) {};
        \node[circle, inner sep=1.5, fill=typeB, draw=black, thick] (f) at (0,-0.5) {};
        \node[circle, inner sep=1.5, fill=typeB, draw=black, thick] (g) at (0,-1) {};
        \node[circle, inner sep=1.5, fill=typeB, draw=black, thick] (h) at (0.5,-0.5) {};
        \node[circle, inner sep=1.5, fill=typeB, draw=black, thick] (i) at (1,-0.5) {};
        \node[circle, inner sep=1.5, fill=typeB, draw=black, thick] (j) at (1.366,-0.15) {};
        \node[circle, inner sep=1.5, fill=typeB, draw=black, thick] (k) at (0.75,-1) {};

        \draw[very thick] (a) to (b) to (c) to (d) to (a) to (c);
        \draw[very thick] (b) to (d);

        \draw[very thick] (c) to (e) to (d);
        \draw[very thick] (a) to (f);
        \draw[very thick] (f) to (g);

        \draw[very thick] (h) to (i) to (j);

    \end{tikzpicture}
    \quad=\quad
    \left(
    \begin{tikzpicture}[baseline={([yshift=-0.5ex]current bounding box.center)}]
        \node[circle, inner sep=1.5, fill=typeA, draw=black, thick] (a) at (0,0) {};
        \node[circle, inner sep=1.5, fill=typeA, draw=black, thick] (b) at (0,0.5) {};
        \node[circle, inner sep=1.5, fill=typeA, draw=black, thick] (c) at (0.5,0.5) {};
        \node[circle, inner sep=1.5, fill=typeA, draw=black, thick] (d) at (0.5,0) {};
        
        \node[circle, inner sep=1.5, fill=typeB, draw=black, thick] (e) at (0.866,0.25) {};
        \node[circle, inner sep=1.5, fill=typeB, draw=black, thick] (f) at (0,-0.5) {};
        \node[circle, inner sep=1.5, fill=typeB, draw=black, thick] (g) at (0,-1) {};

        \foreach \xx in {a, b, c, d} {
            \foreach \yy in {e, f, g} {
                \draw[opacity=0.25] (\xx.center) -- (\yy.center);
            };
        };
        \foreach \xx in {e, f, g} {
            \foreach \yy in {e, f, g} {
                \draw[opacity=0.25] (\xx.center) -- (\yy.center);
            };
        };
        \draw[opacity=0.25] (a) to[out=240,in=120] (g);
        \draw[opacity=0.25] (b) to[out=240,in=120] (f);
        \draw[opacity=0.25] (b) to[out=240,in=120] (g);
        
        \node[circle, inner sep=1.5, fill=typeA, draw=black, thick] (a) at (0,0) {};
        \node[circle, inner sep=1.5, fill=typeA, draw=black, thick] (b) at (0,0.5) {};
        \node[circle, inner sep=1.5, fill=typeA, draw=black, thick] (c) at (0.5,0.5) {};
        \node[circle, inner sep=1.5, fill=typeA, draw=black, thick] (d) at (0.5,0) {};
        
        \node[circle, inner sep=1.5, fill=typeB, draw=black, thick] (e) at (0.866,0.25) {};
        \node[circle, inner sep=1.5, fill=typeB, draw=black, thick] (f) at (0,-0.5) {};
        \node[circle, inner sep=1.5, fill=typeB, draw=black, thick] (g) at (0,-1) {};
        
        \draw[very thick] (a) to (b) to (c) to (d) to (a) to (c);
        \draw[very thick] (b) to (d);

        \draw[very thick] (c) to (e) to (d);
        \draw[very thick] (a) to (f);
        \draw[very thick] (f) to (g);
    \end{tikzpicture}
    \right) \;\;\uplus\;\; \Big(\;\;
    \begin{tikzpicture}[baseline={([yshift=-0.5ex]current bounding box.center)}]
        \node[circle, inner sep=1.5, fill=typeB, draw=black, thick] (h) at (0.5,-0.5) {};
        \node[circle, inner sep=1.5, fill=typeB, draw=black, thick] (i) at (1,-0.5) {};
        \node[circle, inner sep=1.5, fill=typeB, draw=black, thick] (j) at (1.366,-0.15) {};

        \draw[very thick] (h) to (i) to (j);
        
        \node[circle, inner sep=1.5, fill=typeB, draw=black, thick] (h) at (0.5,-0.5) {};
        \node[circle, inner sep=1.5, fill=typeB, draw=black, thick] (i) at (1,-0.5) {};
        \node[circle, inner sep=1.5, fill=typeB, draw=black, thick] (j) at (1.366,-0.15) {};

        \draw[opacity=0.25] (h) to (j);
    \end{tikzpicture}
    \;\;\Big) \;\;\uplus\;\; \big(\;\;
    \begin{tikzpicture}[baseline={([yshift=-0.5ex]current bounding box.center)}]
        \node[circle, inner sep=1.5, fill=typeB, draw=black, thick] (k) at (0.75,-1) {};
    \end{tikzpicture}
    \;\;\big) \,,
\end{equation}
is a disjoint union of three irreducible cliques. Notice that type-\A{A} vertices must all appear in the same irreducible component because they are only ever connected to each other via non-trivial edges. After a suitable reordering of the $b_i$, these irreducible components are reflected in the Gram matrix as a block-diagonal structure,
\begin{equation}
    \Gram = \left(\begin{array}{c|ccc}
        b_0\cdot b_0 & b_0\cdot b_{\color{typeAB}j} & b_0\cdot b_{\color{typeB}l} & b_0\cdot b_{\color{typeB}n}\\ \hline
        b_0\cdot b_{\color{typeAB}i} & \gram_{{\color{typeAB}ij}}^{\color{typeAB}(1)} & 0 & 0\\
        b_0\cdot b_{\color{typeB}k} & 0 & \gram_{{\color{typeB}kl}}^{\color{typeB}(2)} & 0\\
        b_0\cdot b_{\color{typeB}m} & 0 & 0 & \gram_{{\color{typeB}mn}}^{\color{typeB}(3)}
    \end{array}\right) \quad
    \gram^{\color{typeAB}(1)} = \begin{psmallmatrix*}
        \A{+} & \A{+} & \A{+} & \A{+} & \\
        \A{+} & \A{+} & \A{+} & \A{+} & \AB{+} \\
        \A{+} & \A{+} & \A{+} & \A{+} & \AB{+} \\
        \A{+} & \A{+} & \A{+} & \A{+} & & \AB{+} \\
        & \AB{+} & \AB{+} & & \B{\leq} \\
        & & & \AB{+} & & \B{\leq} & \B{+} \\
        & & & & & \B{+} & \B{\leq}
    \end{psmallmatrix*} \quad
    \begin{aligned}
        \gram^{\color{typeB}(2)} &= \begin{psmallmatrix*}
            \B{\leq} & \B{+} \\
            \B{+} & \B{\leq} & \B{+} \\
            & \B{+} & \B{\leq}
        \end{psmallmatrix*}\\[3pt]
        \gram^{\color{typeB}(3)} &= \begin{psmallmatrix*}
            \B{\leq}
        \end{psmallmatrix*}
    \end{aligned}
\end{equation}
where `$\leq$' denotes a non-positive integer. We see that in general $\gram$, rather then $\Gram$, decomposes nicely and its spectrum has useful additive properties:
\begin{equation}
    \gram(\C) = \bigoplus_\alpha \gram(\C_\alpha^\irr) \,, \qquad n_\pm^\gram(\C) = \sum_\alpha n_\pm^\gram(\C_\alpha^\irr) \,.
\end{equation}
Similarly, since the trivial edges between the different irreducible components in~\eqref{eq:clique_decomp} do not change the total number of hypermultiplets we also have
\begin{equation}
     \Delta(\C) = \sum_\alpha \Delta(\C_\alpha^\irr) \,.
\end{equation}
In contrast, the only general statement that can be made about $T_\text{min}$ is
\begin{equation}
    T_\text{min}(\C) \geq \max\big\{n_-^\gram(\C),\,\max_\alpha \{T_\text{min}(\C_\alpha^\irr)\}\big\} \,,
\end{equation}
reflecting the fact that if $\C$ with $T$ tensor multiplets is admissible then each $\C_\alpha^\irr$ it can be decomposed into must be as well. Given a list of irreducible cliques, anomaly-free cliques can then be formed by taking all disjoint unions subject to
\begin{equation}
\label{eq:union_bounds}
    \sum_\alpha n_+^\gram(\C_\alpha^\irr) \leq 1 \,, \qquad \sum_\alpha \Delta(\C_\alpha^\irr) \leq 273
\end{equation}
and checking if $T_\text{min}$ is small enough.

\begin{table}[t]
    \begin{tabular}{llccc}
        \toprule
        $\;G$ & $\;\H$ & $\Delta$ & $\Delta+28n_-^\gram$ & $\Gram$\\
        \midrule
        $G_i$ & $\adj$ & $0$ & $0$ & $\begin{psmallmatrix}
            9-T & 0\\
            0 & 0
        \end{psmallmatrix}$\\[4pt]
        $\SU(N)$ & $\rep{N(N-1)/2}+\rep{N(N+1)/2}$ & $1$ & $1$ & $\begin{psmallmatrix}
            9-T & 0\\
            0 & 0
        \end{psmallmatrix}$\\[4pt]
        $\SU(N)\times\SU(N)$ & $2\times(\rep{N},\rep{N})$ & $2$ & $30$ & $\begin{psmallmatrix}
            9-T & 0 & 0\\
            0 & -2 & 2\\
            0 & 2 & -2
        \end{psmallmatrix}$\\[4pt]
        $\SO(2N+8)\times\Sp(N)$ & $(\rep{2N+8},\rep{2N})$ & $-28$ & $0$ & $\begin{psmallmatrix}
            9-T & -2 & 1\\
            -2 & -4 & 2\\
            1 & 2 & -1
        \end{psmallmatrix}$\\
        $\qquad\vdots$ & $\qquad\vdots$ & $\vdots$ & $\vdots$ & $\vdots$\\
        \bottomrule
    \end{tabular}
    \caption{Some well-known examples of irreducible cliques $\C_{\infty,\alpha}^\irr$ which fall into infinite families with fixed $k$, $\Delta$ and $\Gram$. In the first row $G_i$ can be any simple group (with two parameters: the Killing-Cartan series and rank).}
    \label{tab:irreducible_cliques_infty}
\end{table}

\medskip

In light of the ``unruly'' infinite family discussed in section~\ref{sec:infinite_families}, it would be fruitless to try to enumerate all irreducible admissible cliques since we know there can be arbitrarily-negative contributions to the $\Delta$-sum in~\eqref{eq:union_bounds} while remaining anomaly-free. To make progress, we can refine the decomposition of equation~\eqref{eq:clique_decomp} for anomaly-free cliques $\C^\af$ in the following way:
\begin{equation}
    \C^\af = \Big( \biguplus_\alpha \C_{<0,\alpha}^\irr \Big) \uplus \Big( \biguplus_\alpha \C_{\infty,\alpha}^\irr \Big) \uplus \Big( \biguplus_\alpha \C_{\text{gen},\alpha}^\irr \Big) \,.
\end{equation}
The irreducible cliques $\C_{<0,\alpha}^\irr$ are those for which the combination $(\Delta+28n_-^\gram)(\C_{<0,\alpha}^\irr)$ is strictly negative (the utility of this $T$-independent combination will be made evident shortly). The irreducible cliques $\C_{\infty,\alpha}^\irr$ are those which fall into infinite families with fixed $k$, $\Delta$ and $\Gram$, such as the well-known examples shown in table~\ref{tab:irreducible_cliques_infty}. In light of there being only a finite number of anomaly-free theories with $T<9$~\cite{Kumar:2010ru}, all cliques $\C_{\infty,\alpha}^\irr$ must have $T_\text{min}\geq 9$ for one reason or another. For the examples listed in table~\ref{tab:irreducible_cliques_infty}, this occurs because~\eqref{eq:j_condition} is violated unless $n_+^\Gram=0$. Finally, the irreducible cliques $\C_{\text{gen},i}^\irr$ are generic, in the sense that they do not fall into the previous two categories.

\medskip

If irreducible cliques with $\Delta+28n_-^\gram < 0$ are absent in $\C^\af$ then
\begin{equation}
\label{eq:delta_n_sum}
    (\Delta+28n_-^\gram)(\C^\af) = \sum_\alpha(\Delta+28n_-^\gram)(\C_\alpha^\irr)
\end{equation}
is a sum of non-negative terms and we have the following double-sided bounds,
\begin{equation}
    0 \leq (\Delta+28n_-^\gram)(\C_\alpha^\irr) \leq (\Delta+28n_-^\gram)(\C^\af) \leq (\Delta+28T_\text{min})(\C^\af) \leq 273 - T_\text{min}(\C^\af) \,,
\end{equation}
for each $\C_\alpha^\irr$ of type ``$\infty$'' or ``gen''. Since $T_\text{min}(\C^\af) \geq T_\text{min}(\C_\alpha^\irr)$, we conclude that
\begin{equation}
\label{eq:delta_28n_double_sided_bound}
    \boxed{\quad\Big .\C_{<0,\alpha}^\irr\text{ absent} \quad\implies\quad 0\leq (\Delta+28n_-^\gram)(\C_\alpha^\irr) \leq 273 - T_\text{min}(\C_\alpha^\irr) \quad}
\end{equation}
for all $\C_\alpha^\irr$ which appear in anomaly-free cliques. We note in passing that this implies an upper bound of $T\leq 273$ for anomaly-free cliques with no irreducible $\C_{<0,\alpha}^\irr$ components: in section~\ref{sec:finite_landscape} we will expand on this further.

\medskip

As we discussed above, because of the finiteness for $T<9$ it must be that all $\C_{\infty,\alpha}^\irr$ have $T_\text{min}\geq 9$ and there are only finitely many $\C_{\text{gen},\alpha}^\irr$ with $\Delta+28n_-^\gram\leq 273-T_\text{min}$ and $T_\text{min}<9$. We now make the following stronger claims, which will be justified \emph{ex post facto} in section~\ref{sec:results}:
\begin{itemize}
    \item There are exactly eight irreducible cliques with $\Delta+28n_-^\gram < 0$, each consisting of just a single vertex from the following list:\footnote{This list grows to include $(2+)$-cliques when $\SU(2)$ and $\SU(3)$ are reintroduced, although we expect that the list remains finite. For example, $G=\SU(2)\times E_7$ with $\H=\frac{1}{2}\times(\rep{2},\rep{56})+12\times(\rep{2},\rep{1})$ and $G=\SU(3)\times E_7$ with $\H=\tfrac{1}{2}\times(\rep{3},\rep{56}) + 2\times(\rep{3},\rep{1})$ are both admissible and have have $\Delta+28n_-^\gram=-28$ and $\Delta+28n_-^\gram=-23$, respectively. However, these two examples have $n_+^\gram=1$ and can thus appear at most once.}
    \begin{equation}
    \label{eq:removed_vertices}
        \begin{aligned}
            &\{E_6,\,\emptyset\} \,, & &\{E_7,\,\emptyset\} \,, & &\{E_7,\,\rep{56}\} \,, & &\{E_8,\,\emptyset\} \,,\\
            &\{E_6,\,\rep{27}\} \,, & &\{E_7,\,\tfrac{1}{2}\times\rep{56}\} \,, & &\{E_7,\,\tfrac{3}{2}\times\rep{56}\} \,, & &\{F_4,\,\emptyset\} \,.
        \end{aligned}
    \end{equation}
    This includes the four simple theories from table~\ref{tab:culprits} which allow for the infinite families discussed in section~\ref{sec:infinite_families}.

    \item There are only finitely many infinite families of $\C_{\infty,\alpha}^\irr$ with $\Delta+28n_-^\gram \leq 273 - T_\text{min}$.

    \item There are only finitely many $\C_{\text{gen},\alpha}^\irr$ with $\Delta+28n_-^\gram \leq 273 - T_\text{min}$.
\end{itemize}
There are only two rows in table~\ref{tab:typeB} which have $\Delta+28n_-^\gram=0$: $\{G_i,\adj\}$ has $\Delta=n_-^\gram=0$ and $\{\SO(8),\emptyset\}$ has $\Delta=-28$ and $n_-^\gram=1$ (hence the choice of `28' for the coefficient of $n_-^\gram$). That the bound $\Delta+28n_-^\gram\geq 0$ is satisfied for irreducible $(2+)$-cliques even once the vertices of~\eqref{eq:removed_vertices} have been removed is non-trivial; for example, it is marginally satisfied by the infinitely many $\SO(2N+8)\times\Sp(N)$ $2$-cliques in table~\ref{tab:irreducible_cliques_infty}.

\subsection{Constructing irreducible cliques}
\label{sec:clique_construction}

Now we finally come to the crux of the classification. Having constructed the multigraph $\G$ for some choice of simple groups, we would then like to construct all admissible, irreducible cliques which have $\Delta+28n_-^\gram$ satisfying the upper bound in equation~\eqref{eq:delta_28n_double_sided_bound}. We do this recursively, using a ``branch-and-prune'' algorithm to generate cliques of ever-increasing size: see figure~\ref{fig:branch_and_prune} for a schematic of the procedure. During the branching step $(k-1)$-cliques are augmented by vertices in their neighborhood to form $k$-cliques which are then discarded, i.e.\ pruned, if they are not admissible or have a value of $\Delta+28n_-^\gram$ which is irredeemably large. Here we will flesh out the two main steps of this algorithm; the main obstacle to overcome is how to quickly identify cliques with $\Delta+28n_-^\gram > 273$ which should \emph{not} be pruned.

First, notice that because an irreducible clique $\C$ has pre-existing non-trivial edges, the vertices of $\C$ have fewer hypermultiplets available to merge than they would otherwise. Not all of a vertex's non-trivial edges in $\G$ will be viable: recall the example of~\eqref{eq:example_neg_nR}. We call vertices of a clique \emph{active} if they still have one or more viable non-trivial edge and write $n_R^\avbl(\v)$ for the number of hypermultiplets associated to a vertex $\v$ which remain available to merge.

\medskip

Let us now discuss the branching step. Given an irreducible $(k-1)$-clique $\C$ we first identify the neighborhood of $\C$ in $\G$, i.e.\ all vertices of $\G$ which are connected by one or more (non-)trivial edge to each vertex of $\C$, and restrict to those which are connected via a viable non-trivial edge to at least one active vertex of $\C$. For each of these candidate vertices in the neighborhood of $\C$ there may be some choice in picking edges to the pre-existing vertices of $\C$; each choice of edges (with at least one being non-trivial) gives rise to an irreducible $k$-clique.

\begin{figure}[t]
    \centering
    \newcommand{\s}{0.9}
    \begin{equation*}
        \begin{tikzpicture}[baseline={([yshift=-0.5ex]current bounding box.center)}]
            \node[circle, inner sep=1.5, fill=typeB, draw=black, thick] (a) at (0,0) {};
            \node[circle, inner sep=1.5, fill=typeB, draw=black, thick] (b) at (0,\s) {};
            \node[circle, inner sep=1.5, fill=typeB, draw=black, thick] (c) at (\s,\s) {};
            \node[circle, inner sep=1.5, fill=typeB, draw=black, thick] (d) at (\s,0) {};
            
            \node[circle, inner sep=1.5, fill=typeB, draw=black, thick] (e) at (2*\s,0.5*\s) {};
            
            \draw[thick] (a) to (b) to (c) to (d) to (a) to (c);
            \draw[white, line width=3pt] (b) to (d);
            \draw[thick] (b) to (d);

            \draw[dashed, gray] (a) to[out=-45, in=-90] (e);
            \draw[dashed, gray] (b) to[out=45, in=90] (e);
            \draw[dashed, gray] (c) to[out=0, in=135] (e);
            \draw[dashed, gray] (c) to[out=-40, in=164] (e);
            \draw[dashed, gray] (d) to[out=0, in=-135] (e);
            \draw[dashed, gray] (d) to[out=40,in=195] (e);
        \end{tikzpicture}
        \;\;\longrightarrow\;\;
        \left\{
        \begin{tikzpicture}[baseline={([yshift=-0.5ex]current bounding box.center)}]
            \node[circle, inner sep=1.5, fill=typeB, draw=black, thick] (a) at (0,0) {};
            \node[circle, inner sep=1.5, fill=typeB, draw=black, thick] (b) at (0,\s) {};
            \node[circle, inner sep=1.5, fill=typeB, draw=black, thick] (c) at (\s,\s) {};
            \node[circle, inner sep=1.5, fill=typeB, draw=black, thick] (d) at (\s,0) {};
            
            \node[circle, inner sep=1.5, fill=typeB, draw=black, thick] (e) at (2*\s,0.5*\s) {};
            
            \draw[thick] (a) to (b) to (c) to (d) to (a) to (c);
            \draw[white, line width=3pt] (b) to (d);
            \draw[thick] (b) to (d);

            \draw[thick] (a) to[out=-45, in=-90] (e);
            \draw[thick] (b) to[out=45, in=90] (e);
            \draw[thick] (c) to[out=0, in=135] (e);
            \draw[thick] (d) to[out=0, in=-135] (e);
        \end{tikzpicture}
        \;,\quad
        \begin{tikzpicture}[baseline={([yshift=-0.5ex]current bounding box.center)}]
            \node[circle, inner sep=1.5, fill=typeB, draw=black, thick] (a) at (0,0) {};
            \node[circle, inner sep=1.5, fill=typeB, draw=black, thick] (b) at (0,\s) {};
            \node[circle, inner sep=1.5, fill=typeB, draw=black, thick] (c) at (\s,\s) {};
            \node[circle, inner sep=1.5, fill=typeB, draw=black, thick] (d) at (\s,0) {};
            
            \node[circle, inner sep=1.5, fill=typeB, draw=black, thick] (e) at (2*\s,0.5*\s) {};
            
            \draw[thick] (a) to (b) to (c) to (d) to (a) to (c);
            \draw[white, line width=3pt] (b) to (d);
            \draw[thick] (b) to (d);

            \draw[thick] (a) to[out=-45, in=-90] (e);
            \draw[thick] (b) to[out=45, in=90] (e);
            \draw[thick] (c) to[out=-40, in=164] (e);
            \draw[thick] (d) to[out=40,in=195] (e);

            \draw[red, line width=2pt] (0,-0.4) to (2*\s,\s+0.4);
        \end{tikzpicture}
        \;,\quad
        \begin{tikzpicture}[baseline={([yshift=-0.5ex]current bounding box.center)}]
            \node[circle, inner sep=1.5, fill=typeB, draw=black, thick] (a) at (0,0) {};
            \node[circle, inner sep=1.5, fill=typeB, draw=black, thick] (b) at (0,\s) {};
            \node[circle, inner sep=1.5, fill=typeB, draw=black, thick] (c) at (\s,\s) {};
            \node[circle, inner sep=1.5, fill=typeB, draw=black, thick] (d) at (\s,0) {};
            
            \node[circle, inner sep=1.5, fill=typeB, draw=black, thick] (e) at (2*\s,0.5*\s) {};
            
            \draw[thick] (a) to (b) to (c) to (d) to (a) to (c);
            \draw[white, line width=3pt] (b) to (d);
            \draw[thick] (b) to (d);

            \draw[thick] (a) to[out=-45, in=-90] (e);
            \draw[thick] (b) to[out=45, in=90] (e);
            \draw[thick] (c) to[out=0, in=135] (e);
            \draw[thick] (d) to[out=40,in=195] (e);

            \draw[red, line width=2pt] (0,-0.4) to (2*\s,\s+0.4);
        \end{tikzpicture}
        \;,\quad
        \begin{tikzpicture}[baseline={([yshift=-0.5ex]current bounding box.center)}]
            \node[circle, inner sep=1.5, fill=typeB, draw=black, thick] (a) at (0,0) {};
            \node[circle, inner sep=1.5, fill=typeB, draw=black, thick] (b) at (0,\s) {};
            \node[circle, inner sep=1.5, fill=typeB, draw=black, thick] (c) at (\s,\s) {};
            \node[circle, inner sep=1.5, fill=typeB, draw=black, thick] (d) at (\s,0) {};
            
            \node[circle, inner sep=1.5, fill=typeB, draw=black, thick] (e) at (2*\s,0.5*\s) {};
            
            \draw[thick] (a) to (b) to (c) to (d) to (a) to (c);
            \draw[white, line width=3pt] (b) to (d);
            \draw[thick] (b) to (d);

            \draw[thick] (a) to[out=-45, in=-90] (e);
            \draw[thick] (b) to[out=45, in=90] (e);
            \draw[thick] (c) to[out=-40, in=164] (e);
            \draw[thick] (d) to[out=0, in=-135] (e);
        \end{tikzpicture}
        \right\}
    \end{equation*}
    \caption{A branch-and-prune algorithm is used to generate $k$-cliques from $(k-1)$-cliques. During the branching step a $(k-1)$-clique is augmented by each vertex in its neighborhood in turn with all possible choices of edges to the pre-existing vertices considered. Any resulting $k$-clique which is not admissible or which has $\Delta+28n_-^\gram$ irredeemably large is pruned.}
    \label{fig:branch_and_prune}
\end{figure}
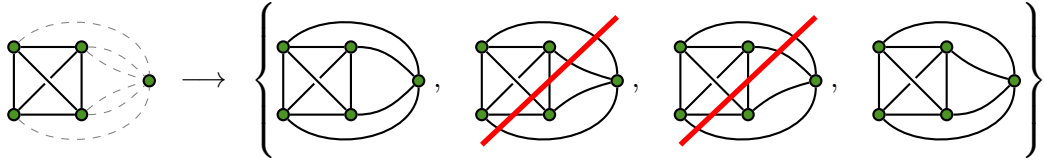

Actually, we can again leverage the differences between type-\A{A} and type-\B{B} vertices in admissible cliques to refine the branching step. The method just described (and depicted in figure~\ref{fig:branch_and_prune}) can be used to build type-\A{A} cliques and type-\B{B} cliques. However, when it comes to forming type-\A{A}\B{B} cliques which contain vertices of both types, it is beneficial to utilize the strong constraint that $n_+^\gram=1$ is entirely fixed.\footnote{If $-\gram$ is positive semi-definite, all the leading principal minors are non-negative. However, this is not the case for the type-\A{A} and type-\A{A}\B{B} cliques.} Rather than adding type-\B{B} vertices from the neighborhood one-by-one, a better strategy is to use the already-constructed type-\B{B} cliques as indivisible units that are joined to a type-\A{A} or type-\A{A}\B{B} clique all at once. To be precise, in a type-\A{A}\B{B} clique we only consider type-\A{A} vertices to be active and during the branching step we take non-trivial edges from the active type-\A{A} vertices to active vertices of a type-\B{B} clique. That is, type-\A{A}\B{B} cliques are branched schematically like the following example,
\begin{equation}
\label{eq:branch_AB}
    \begin{tikzpicture}[baseline={([yshift=-0.5ex]current bounding box.center)}]
        \node[circle, inner sep=1.5, fill=typeA, draw=black, thick] (a0) at (-0.5,0) {};
        \node[circle, inner sep=1.5, fill=typeA, draw=black, thick] (b0) at (0,0.866) {};
        \node[circle, inner sep=1.5, fill=typeA, draw=black, thick] (c0) at (0.5,0) {};
        \node[circle, inner sep=1.5, fill=typeB, draw=black, thick] (d0) at (-1.15,-0.3) {};
        \node[circle, inner sep=1.5, fill=typeB, draw=black, thick] (e0) at (-1.5,-0.9) {};
        \node[circle, inner sep=1.5, fill=typeB, draw=black, thick] (f0) at (-0.3,-0.7) {};
        \node[circle, inner sep=1.5, fill=typeB, draw=black, thick] (g0) at (-0.7,1) {};
        \node[circle, inner sep=1.5, fill=typeB, draw=black, thick] (h0) at (-0.2,1.5) {};
        
        \draw[very thick] (f0) to (a0) to (b0) to (c0) to (a0) to (d0) to (e0);
        \draw[very thick] (g0) to (b0) to (h0) to (g0);
        
        \draw (b0) circle (0.15);
        \draw (c0) circle (0.15);
        
        \node[circle, inner sep=1.5, fill=typeB, draw=black, thick] (a1) at (0.9,1.3) {};
        \node[circle, inner sep=1.5, fill=typeB, draw=black, thick] (b1) at (1.8,1.2) {};
        \node[circle, inner sep=1.5, fill=typeB, draw=black, thick] (c1) at (1.3,0.5) {};
        \draw[very thick] (a1) to (b1) to (c1);
        \draw (a1) circle (0.15);
        \draw (c1) circle (0.15);
        
        \draw[dashed, gray] (a1) to (b0) to (c1);
        \draw[dashed, gray] (c1) to (c0) to (a1);
    \end{tikzpicture}
    \;\;\longrightarrow\;\;
    \left\{
    \begin{tikzpicture}[baseline={([yshift=-0.5ex]current bounding box.center)}]
        \node[circle, inner sep=1.5, fill=typeA, draw=black, thick] (a0) at (-0.5,0) {};
        \node[circle, inner sep=1.5, fill=typeA, draw=black, thick] (b0) at (0,0.866) {};
        \node[circle, inner sep=1.5, fill=typeA, draw=black, thick] (c0) at (0.5,0) {};
        \node[circle, inner sep=1.5, fill=typeB, draw=black, thick] (d0) at (-1.15,-0.3) {};
        \node[circle, inner sep=1.5, fill=typeB, draw=black, thick] (e0) at (-1.5,-0.9) {};
        \node[circle, inner sep=1.5, fill=typeB, draw=black, thick] (f0) at (-0.3,-0.7) {};
        \node[circle, inner sep=1.5, fill=typeB, draw=black, thick] (g0) at (-0.7,1) {};
        \node[circle, inner sep=1.5, fill=typeB, draw=black, thick] (h0) at (-0.2,1.5) {};
        
        \draw[very thick] (f0) to (a0) to (b0) to (c0) to (a0) to (d0) to (e0);
        \draw[very thick] (g0) to (b0) to (h0) to (g0);
        
        \draw (b0) circle (0.15);
        \draw (c0) circle (0.15);
        
        \node[circle, inner sep=1.5, fill=typeB, draw=black, thick] (a1) at (0.9,1.3) {};
        \node[circle, inner sep=1.5, fill=typeB, draw=black, thick] (b1) at (1.8,1.2) {};
        \node[circle, inner sep=1.5, fill=typeB, draw=black, thick] (c1) at (1.3,0.5) {};
        \draw[very thick] (a1) to (b1) to (c1);
        
        \draw[very thick] (c1) to (c0);
    \end{tikzpicture}
    \;,\;
    \begin{tikzpicture}[baseline={([yshift=-0.5ex]current bounding box.center)}]
        \node[circle, inner sep=1.5, fill=typeA, draw=black, thick] (a0) at (-0.5,0) {};
        \node[circle, inner sep=1.5, fill=typeA, draw=black, thick] (b0) at (0,0.866) {};
        \node[circle, inner sep=1.5, fill=typeA, draw=black, thick] (c0) at (0.5,0) {};
        \node[circle, inner sep=1.5, fill=typeB, draw=black, thick] (d0) at (-1.15,-0.3) {};
        \node[circle, inner sep=1.5, fill=typeB, draw=black, thick] (e0) at (-1.5,-0.9) {};
        \node[circle, inner sep=1.5, fill=typeB, draw=black, thick] (f0) at (-0.3,-0.7) {};
        \node[circle, inner sep=1.5, fill=typeB, draw=black, thick] (g0) at (-0.7,1) {};
        \node[circle, inner sep=1.5, fill=typeB, draw=black, thick] (h0) at (-0.2,1.5) {};
        
        \draw[very thick] (f0) to (a0) to (b0) to (c0) to (a0) to (d0) to (e0);
        \draw[very thick] (g0) to (b0) to (h0) to (g0);
        
        \draw (c0) circle (0.15);
        
        \node[circle, inner sep=1.5, fill=typeB, draw=black, thick] (a1) at (0.9,1.3) {};
        \node[circle, inner sep=1.5, fill=typeB, draw=black, thick] (b1) at (1.6,1) {};
        \node[circle, inner sep=1.5, fill=typeB, draw=black, thick] (c1) at (1,0.6) {};
        \draw[very thick] (a1) to (b1) to (c1);
        
        \draw[very thick] (c1) to (b0) to (a1);
    \end{tikzpicture}
    \;,\;
    \ldots\;
    \right\}
\end{equation}
where the circled vertices are active and trivial edges have been suppressed. Notice that after branching all type-\B{B} vertices are made inactive and type-\A{A} vertices may or may not remain active.

The list of type-\B{B} irreducible cliques that needs to be considered when branching type-\A{A}\B{B} cliques is quite limited. If we write $\overline{\gram}$ for the matrix formed by deleting rows and columns corresponding to active vertices from $\gram$, then the signature of the Gram matrix which results from the above branching step $\C_1^{\A{A}\B{B}}+\C_2^{\B{B}}\to\C^{\A{A}\B{B}}$ can be bounded as
\begin{equation}
    n_\pm^\gram(\C^{\A{A}\B{B}}) \geq n_\pm^\gram(\C_1^{\A{A}\B{B}}) + n_\pm^{\overline{\gram}}(\C_2^{\B{B}}) \quad \text{and}\quad n_\pm^\gram(\C^{\A{A}\B{B}}) \geq n_\pm^{\overline{\gram}}(\C_1^{\A{A}\B{B}}) + n_\pm^\gram(\C_2^{\B{B}}) \,,
\end{equation}
since the active vertices of either of $\C_1^{\A{A}\B{B}}$ or $\C_2^{\B{B}}$ serve as a vertex cover of the newly-added non-trivial edges. In particular, since $n_+^\gram(\C_1^{\A{A}\B{B}})=1$, we must have $n_+^{\overline{\gram}}(\C_2^{\B{B}})=0$.

\begin{figure}[t]
    \centering
    \begin{tikzpicture}
        \node[draw,thick,fill=gray!15!white] (nodetypeA) at (-4,0) {\begin{minipage}{2.5cm}
            \centering
            \footnotesize Type-\A{A} clique\\
            \scriptsize pruning:~\eqref{eq:prune_rule_naive}
        \end{minipage}};
        
        \node[draw,thick,fill=gray!15!white] (nodetypeB) at (4,0) {\begin{minipage}{2.5cm}
            \centering
            \footnotesize Type-\B{B} clique\\
            \scriptsize pruning:~\eqref{eq:prune_rule_improved}
        \end{minipage}};
        
        \node[draw,thick,fill=gray!15!white] (nodetypeAB) at (0,-2) {\begin{minipage}{2.5cm}
            \centering
            \footnotesize Type-\A{A}\B{B} clique\\
            \scriptsize pruning:~\eqref{eq:prune_rule_improved}
        \end{minipage}};

        \draw[thick, ->] (nodetypeA.west) to[out=180,in=90,looseness=4] node[above=-3,rotate=30] {\scriptsize add type-\A{A} vertex} (nodetypeA.north);
        \draw[thick, ->] (nodetypeB.east) to[out=0,in=90,looseness=4] node[above=-3,rotate=-30] {\scriptsize add type-\B{B} vertex} (nodetypeB.north);
        \draw[thick] (nodetypeA.east) to[out=0,in=90] ($(nodetypeAB.north)+(0,1)$);
        \draw[thick] (nodetypeB.west) to[out=180,in=90] node[pos=0.22, rotate=-4, above=-3] {\scriptsize$n_+^{\overline{\gram}}=0$} ($(nodetypeAB.north)+(0,1)$);
        \draw[thick, ->] ($(nodetypeAB.north)+(0,1)$) to (nodetypeAB.north);
        \draw[thick] (nodetypeAB.east) to[out=0,in=60] ($(nodetypeAB.east)+(1,-1)$);
        \draw[thick] (nodetypeB.west) to[out=180,in=60] ($(nodetypeAB.east)+(1,-1)$);
        \draw[thick, ->] ($(nodetypeAB.east)+(1,-1)$) to[out=240,in=-90] (nodetypeAB.south);
        \node (note) at (6,-3) {\scriptsize\begin{minipage}{2.5cm}
            \centering
            join via only \A{A}--\B{B}\\ non-trivial edges
        \end{minipage}};
        \draw[line width=3, white] (note) to[out=150,in=-30] ($(nodetypeAB.north)+(0.1,1)$);
        \draw[->, gray] (note) to[out=150,in=-30] ($(nodetypeAB.north)+(0.1,1)$);
        \draw[->, gray] (note) to[out=180,in=-30] ($(nodetypeAB.east)+(1.1,-1)$);
    \end{tikzpicture}
    \caption{Process for generating irreducible cliques of different types. Type-\A{A} and type-\B{B} cliques are generated independently according to the branch-and-prune process sketched in figure~\ref{fig:branch_and_prune}. Type-\A{A}\B{B} cliques are formed by combining type-\A{A} or type-\A{A}\B{B} cliques with type-\B{B} cliques which have $n_+^{\overline{\gram}}=0$ by introducing only \A{A}--\B{B} non-trivial edges, according to the procedure sketched in~\eqref{eq:branch_AB}.}
    \label{fig:flow-chart}
\end{figure}
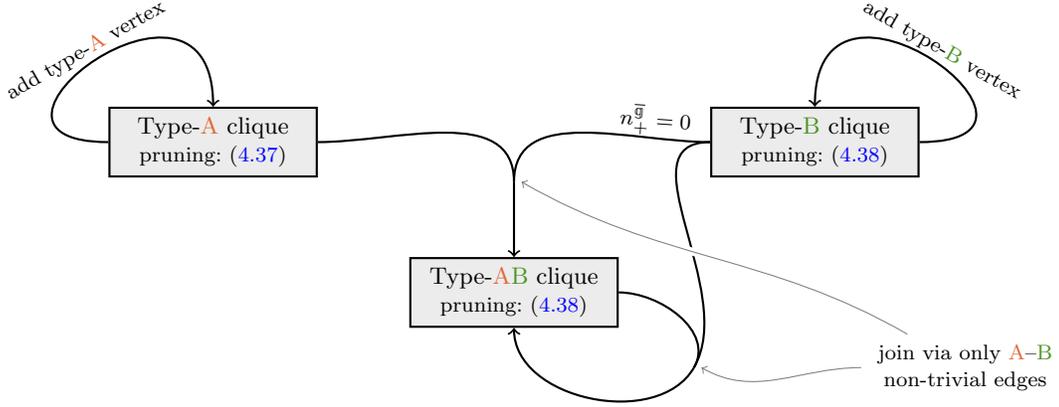

\medskip

During the pruning step, $k$-cliques generated during the branching step which are not admissible should be immediately pruned. The more delicate condition to determine is when to prune cliques which have $\Delta+28n_-^\gram > 273 - T_\text{min}$. There are cases where $\Delta+28n_-^\gram$ is very large but can be brought into accordance with the upper bound of~\eqref{eq:delta_28n_double_sided_bound} by the inclusion of additional vertices and non-trivial edges. As a small example,
\begin{equation}
    \begin{tikzpicture}[baseline={([yshift=-0.5ex]current bounding box.center)}]
        \begin{scope}[xshift=-4cm]
            \node[circle, inner sep=1.5, fill=typeA, draw=black, thick] (a) at (-1,0) {};
            \node[circle, inner sep=1.5, fill=typeA, draw=black, thick] (b) at (1,0) {};
    
            \draw[very thick] (a) to node[above=-3] {\tiny $(\rep{21},\rep{7})$} (b);
    
            \node[below=2] at (a) {\tiny\begin{minipage}{2cm}
                \centering
                $\SU(7)$\\
                $23\times\rep{7}+9\times\rep{21}$
            \end{minipage}};
            \node[below=2] at (b) {\tiny\begin{minipage}{2cm}
                \centering
                $G_2$\\
                $43\times\rep{7}$
            \end{minipage}};
    
            \node[below] at (0,-1) {$\Delta+28n_-^\gram=470\,,\;\; T_\text{min}=3$};
            \node[below] at (0,-1.5) {$\Gram = \begin{psmallmatrix}
                9-T & 9 & 13\\
                9 & 7 & 10\\
                13 & 10 & 11
            \end{psmallmatrix}$};
        \end{scope}

        \draw[->,thick] (-1,0) to node[above] {\footnotesize branch} (1,0);

        \begin{scope}[xshift=4cm]
            \node[circle, inner sep=1.5, fill=typeA, draw=black, thick] (a) at (-1,0) {};
            \node[circle, inner sep=1.5, fill=typeA, draw=black, thick] (b) at (1,0) {};
            \node[circle, inner sep=1.5, fill=typeB, draw=black, thick] (c) at (0,2*0.866) {};
    
            \draw[very thick] (a) to node[above=-3] {\tiny $(\rep{21},\rep{7})$} (b);
            \draw[very thick] (a) to node[above=-3,rotate=60] {\tiny $(\rep{7},\rep{22})$} (c);
            \draw[very thick] (b) to node[above=-3,rotate=-60] {\tiny $(\rep{22},\rep{7})$} (c);
    
            \node[below=2] at (a) {\tiny\begin{minipage}{2cm}
                \centering
                $\SU(7)$\\
                $23\times\rep{7}+9\times\rep{21}$
            \end{minipage}};
            \node[below=2] at (b) {\tiny\begin{minipage}{2cm}
                \centering
                $G_2$\\
                $43\times\rep{7}$
            \end{minipage}};
            \node[above] at (c) {\tiny\begin{minipage}{2cm}
                \centering
                $\SU(22)$\\
                $14\times\rep{22}+\rep{253}$
            \end{minipage}};
    
            \node[below] at (0,-1) {$\Delta+28n_-^\gram=268\,,\;\; T_\text{min}=3$};
            \node[below] at (0,-1.5) {$\Gram = \begin{psmallmatrix}
                9-T & 9 & 13 & -1\\
                9 & 7 & 10 & 1\\
                13 & 10 & 11 & 2\\
                -1 & 1 & 2 & -1
            \end{psmallmatrix}$};
            
        \end{scope}
    \end{tikzpicture}
\end{equation}
so any pruning rule should \emph{keep} the $2$-clique on the left. As an aside, the above cliques serve both as a good example with bi-charged hypermultiplets which are not a bi-fundamentals and as an example where $T_\text{min}$ does not increase when a clique grows.

\medskip

Consider a irreducible clique $\C_0^\irr$ for which $(\Delta+28n_-^\gram)(\C_0^\irr) > 273 - T_\text{min}(\C_0^\irr)$. Any irreducible clique $\widetilde{\C}^\irr$ which contains $\C_0^\irr$ can be viewed as having been formed by joining $\C_0^\irr$ to $m$ irreducible cliques via a set $\E_\text{new}$ of new non-trivial edges between active vertices. For example, schematically,
\begin{equation}
\label{eq:prune_decompose_1}
    \begin{tikzpicture}[baseline={([yshift=-0.5ex]current bounding box.center)}]
        \node at (-2.2,0.6) {$\widetilde{\C}^\irr \;=$};
    
        \node[circle, inner sep=1.5, fill=typeB, draw=black, thick] (a0) at (-0.9,0) {};
        \node[circle, inner sep=1.5, fill=typeB, draw=black, thick] (b0) at (0,0) {};
        \node[circle, inner sep=1.5, fill=typeB, draw=black, thick] (c0) at (0.5,0.8) {};
        \node[circle, inner sep=1.5, fill=typeB, draw=black, thick] (d0) at (1.3,1.3) {};
        \node[circle, inner sep=1.5, fill=typeB, draw=black, thick] (e0) at (0,-0.9) {};
        \draw[very thick] (a0) to (b0) to (c0) to (d0);
        \draw[very thick] (e0) to (b0);
        \draw (b0) circle (0.15);
        \draw (d0) circle (0.15);
        \draw (e0) circle (0.15);
        \node at (-0.2,0.6) {$\C_0^\irr$};

        \node[circle, inner sep=1.5, fill=typeB, draw=black, thick] (a1) at (2.9,1.2) {};
        \node[circle, inner sep=1.5, fill=typeB, draw=black, thick] (b1) at (3.8,1) {};
        \node[circle, inner sep=1.5, fill=typeB, draw=black, thick] (c1) at (4.7,1) {};
        \node[circle, inner sep=1.5, fill=typeB, draw=black, thick] (d1) at (5.5,0.8) {};
        \draw[very thick] (a1) to (b1) to (c1) to (d1);
        \draw (a1) circle (0.15);
        \draw (d1) circle (0.15);
        \node at (4.3,1.3) {$\C_1^\irr$};

        \node[circle, inner sep=1.5, fill=typeB, draw=black, thick] (a2) at (2.4,-0.3) {};
        \node[circle, inner sep=1.5, fill=typeB, draw=black, thick] (b2) at (2.8,-0.9) {};
        \node[circle, inner sep=1.5, fill=typeB, draw=black, thick] (c2) at (1.9,-1) {};
        \node[circle, inner sep=1.5, fill=typeB, draw=black, thick] (d2) at (3.6,-1.1) {};
        \draw[very thick] (a2) to (b2) to (c2);
        \draw[very thick] (b2) to (d2);
        \draw (a2) circle (0.15);
        \draw (c2) circle (0.15);
        \node at (3.3,-0.65) {$\C_2^\irr$};

        \draw[dashed, thick] (a1) to (d0) to (a2);
        \draw[dashed, thick] (e0) to (c2);

        \node[gray] (ee) at (1.2,-0.05) {$\E_\text{new}$};
        \draw[gray,->] (ee.south) to[out=-90,in=85] (1,-0.9);
        \draw[gray,->] (1.5,0.1) to[out=30,in=210] (1.9,0.4);
        \draw[gray,->] (ee.north) to[out=60,in=-100] (2.1,1.2);
    \end{tikzpicture}
\end{equation}
where active vertices are circled and we have suppressed trivial edges for clarity. (There are no non-trivial edges between $\C_1^\irr$ and $\C_2^\irr$ because otherwise we would just call the larger, combined irreducible clique $\C_1^\irr$.) We clearly have
\begin{equation}
    \Delta(\widetilde{\C}^\irr) = \sum_{\alpha=0}^m\Delta(\C_\alpha^\irr) - \sum_{\e\in\E_\text{new}}|\delta H(\e)| \,.
\end{equation}
Although $\gram(\widetilde{\C}^\irr)$ does not split into blocks associated to the $\C_\alpha^\irr$ because of the non-trivial edges connecting them, we can nevertheless bound
\begin{equation}
\label{eq:n-_inequality}
    n_-^\gram(\widetilde{\C}^\irr) \geq n_-^{\overline{\gram}}(\C_0^\irr) + \sum_{\alpha=1}^m n_-^\gram(\C_\alpha^\irr) \,.
\end{equation}
Like before, because the active vertices of $\C_0^\irr$ are a vertex cover of $\E_\text{new}$, when the corresponding rows and columns of $\gram$ are deleted to give $\overline{\gram}$ then $\gram(\widetilde{\C}^\irr)$ \emph{does} have a block structure, hence the above bound. If we bound the decrease in hypermultiplets from the new non-trivial edges by a sum over hypermultiplets of active vertices which are still available to merge, then we find
\begin{align}
\label{eq:prune_lower_bound}
    (\Delta+28n_-^\gram)(\widetilde{\C}^\irr) &\geq (\Delta+28n_-^{\overline{\gram}})(\C_0^\irr) + \sum_{\alpha=1}^m (\Delta+28n_-^\gram)(\C_\alpha^\irr) - \sum_{\e\in\E_\text{new}}|\delta H(\e)| \notag\\
    &\geq (\Delta+28n_-^{\overline{\gram}})(\C_0^\irr) - \lambda\sum_{\substack{\v\in\C_0^\irr\\ \text{active}}}\sum_Rn_R^\avbl(\v) H_R\\
    &\quad + \sum_{\alpha=1}^m \bigg[(\Delta+28n_-^\gram)(\C_\alpha^\irr) - (1-\lambda)\sum_{\substack{\v\in\C_\alpha^\irr\\ \text{active}}}\sum_Rn_R^\avbl(\v) H_R\bigg] \notag
\end{align}
for any $\lambda\in[0,1]$ (the factors of $\lambda$ and $1-\lambda$ are to avoid double-counting). By picking $\lambda=1$ and using $(\Delta+28n_-^\gram)(\C_\alpha^\irr)\geq 0$ (see \eqref{eq:delta_28n_double_sided_bound}) and $T_\text{min}(\widetilde{\C}^\irr)\geq T_\text{min}(\C_0^\irr)$ (recall the discussion below equation~\eqref{eq:Tmin_defn}) we get the following pruning condition:
\begin{equation}
\label{eq:prune_rule_naive}
    (\Delta+28n_-^{\overline{\gram}} + T_\text{min})(\C_0^\irr) - \sum_{\substack{\v\in\C_0^\irr\\ \text{active}}}\sum_Rn_R^\avbl(\v) H_R > 273 \;\;\implies\;\; \text{prune }\C_0^\irr \,.
\end{equation}
Namely, when the pruning condition above is satisfied, there are no irreducible cliques $\widetilde{\C}^\irr\supset\C_0^\irr$ compatible with \eqref{eq:delta_28n_double_sided_bound}. This na\"ive pruning rule suffices when building type-\A{A} cliques since their numbers drop quickly with increasing $k$ regardless. However, more generally this condition is fairly weak since the discarded terms, $(\Delta+28n_-^\gram)(\C_\alpha^\irr)$, are generally well above zero when $\C_\alpha^\irr$ has active vertices. With hind-sight, the term in brackets in the last line of~\eqref{eq:prune_lower_bound} is non-negative for all but a handful of admissible, irreducible cliques for $\lambda$ as low as $\frac{1}{2}$. This means that we can improve~\eqref{eq:prune_rule_naive} to
\begin{equation}
\label{eq:prune_rule_improved}
    (\Delta+28n_-^{\overline{\gram}} + T_\text{min})(\C_0^\irr) - \frac{1}{2}\sum_{\substack{\v\in\C_0^\irr\\ \text{active}}}\sum_Rn_R^\avbl(\v) H_R > 273\;``{+ \epsilon}" \;\;\implies\;\; \text{prune }\C_0^\irr \,.
\end{equation}
The ``$+\epsilon$'' term captures the corrections needed to address the small number of exceptions where irreducible cliques $\C_\alpha^\irr$ have
\begin{equation}
    (\Delta+28n_-^\gram)(\C_\alpha^\irr) - \frac{1}{2}\sum_{\substack{\v\in\C_\alpha^\irr\\ \text{active}}}\sum_Rn_R^\avbl(\v) H_R < 0 \,,
\end{equation}
and is meant to suggest that such terms are often absent altogether. In appendix~\ref{app:pruning} we discuss these exceptions further and give a precise expression for ``$+\epsilon$'' in \eqref{eq:+epsilon}.

\medskip

This branch-and-prune algorithm lends itself very nicely to parallelization as long as when a $k$-clique is produced in more than one thread the redundant copies are removed. For example, type-\A{A}\B{B} cliques can be constructed starting from each type-\A{A} clique completely independently since cliques with different type-\A{A} sub-cliques clearly never coincide. Also, although determining if two cliques are the same is an instance of the classic, and famously difficult, graph isomorphism problem, in practice checking for equivalence is not computationally demanding because the vertices and edges are labelled and $k$ never gets too large.

\section{Results}
\label{sec:results}

In this section we present results for the two different classifications mentioned in the introduction.  First, we focus on $T=0$ where the extremely constraining condition $n_-^\Gram=0$ allows us to enumerate \emph{all} anomaly-free theories with few caveats. We then turn to a more general classification where we allow for any $T$. In doing so we omit the groups $A_3\sim\SU(4)$ and $C_2\sim\Sp(2)$ which have four-dimensional irreps and contribute an unwieldy number of vertices to $\G$. Here we will discuss some global aspects of the two ensembles of theories and provide some hand-picked examples. The accompanying data sets are available at~\cite{Loges:2023gh2}.

\subsection{\texorpdfstring{$T=0$}{T=0}}
\label{sec:results_T=0}

\begin{table}[t]
    \centering
    \begin{tabular}{c|ccccccc}
        $\#(\text{simple gauge factors})$ & 1 & 2 & 3 & 4 & 5 & 6 & $7+$\\
        \midrule
        $\#(\text{anomaly-free theories})$ & 783 & 6130 & 8644 & 4004 & 279 & 7 & 0
    \end{tabular}
    \caption{Total number of anomaly-free theories for $T=0$ for each $k=\#(\text{simple gauge factors})$.}
    \label{tab:anom_free_T=0}
\end{table}

With no tensor multiplets the anomaly-cancellation conditions are especially strong and we are able to exhaustively construct \emph{all} anomaly-free theories, subject only to (i) $\SU(2)$ and $\SU(3)$ gauge factors are absent and (ii) hypermultiplets are charged under at most two simple factors. Since $T=0$ forces the number of negative eigenvalues $n_-^\Gram$ to vanish, only vertices with $9(b_i\cdot b_i)=(b_0\cdot b_i)^2>0$ need be kept in $\G$. In particular, there are only type-\A{A} vertices and all edges must be non-trivial. It turns out that there are no $\C_{<0,\alpha}^\irr$ irreducible cliques with $T_\text{min}=0$, as we will see.

To begin, we generate all solutions to the $B$-constraint for $A_{3\to24}$, $B_{3\to24}$, $C_{2\to24}$, $D_{4\to24}$ $E_{6,7,8}$, $F_4$ and $G_2$ out to $\Delta_\text{max}(G_i)$ according to
\begin{equation}
    \begin{aligned}
        &\Delta_\text{max}(A_3) = 550 \,, \quad \Delta_\text{max}(A_4) = 600 \,, \quad \Delta_\text{max}(A_5) = 800 \,, \quad \Delta_\text{max}(C_2) = 450 \,,\\
        &\Delta_\text{max}(D_4) = 600 \,, \quad \Delta_\text{max}(E_6)=\Delta_\text{max}(E_7) = 4000 \,, \quad \Delta_\text{max}(E_8) = 10,\!000 \,,
    \end{aligned}
\end{equation}
and $\Delta_\text{max}(G_i) = 1000$ otherwise. After restricting to solutions for which $T_\text{min}=0$ and deleting degree-zero vertices with $\Delta_i>273$, we are left with the following numbers of vertices for each simple group:
\begin{center}
    \begin{tabular}{c|rrrrrrccccccc}
        $n$ & \multicolumn{1}{c}{2} & \multicolumn{1}{c}{3} & \multicolumn{1}{c}{4} & \multicolumn{1}{c}{5} & \multicolumn{1}{c}{6} & \multicolumn{1}{c}{7} & 8 & 9 & 10 & 11 & 12 & $13\to23$ & 24 \\
        \midrule
        $A_n$ &      & 3893 & 186 & 270 & 28 & 22 & 8 & 7 & 4 & 2 & 1 & 1 & 0 \\
        $B_n$ &      &   48 &  15 &   6 &  2 &  2 & 1 & 1 & 0 & 0 & 0 & 0 & 0 \\
        $C_n$ & 9397 &  146 &  34 &   6 &  4 &  1 & 1 & 1 & 1 & 1 & 1 & 0 & 0 \\
        $D_n$ &      &      &  58 &  12 &  5 &  3 & 2 & 1 & 0 & 0 & 0 & 0 & 0 \\
        $E_n$ &      &      &     &     &  6 &  4 & 0 \\
        $F_n$ &      &      &   7 \\
        $G_n$ &   84
    \end{tabular}
\end{center}
The number of vertices for $\SU(N)$ groups in the first row correlates with the number of ``blocks'' identified in~\cite{Kumar2011} (see their table~2). Notably, for $A_5\sim\SU(6)$ and $A_9\sim\SU(10)$ the number of vertices we find is higher and the discrepancy can be attributed to the presence of half-hypermultiplets for these groups which appear not to be accounted for in~\cite{Kumar2011}. Indeed, we would get an exact match with~\cite{Kumar2011} by removing vertices which feature half-hypermultiplets: for $A_9\sim\SU(10)$ there is exactly one vertex with a half-hypermultiplet,
\begin{equation}
    \{G_i=\SU(10) \,,\;\H_i = 27\times\rep{10} + \rep{55} + \tfrac{1}{2}\times\rep{252} \,;\; \Delta_i = 352 \,,\; b_i\cdot b_i=9 \,,\; b_0\cdot b_i=9 \}
\end{equation}
and for $A_5\sim\SU(6)$ exactly $123$ vertices contain an odd number of $\tfrac{1}{2}\rep{20}$.

\begin{figure}[t]
    \centering
    \includegraphics[width=\textwidth]{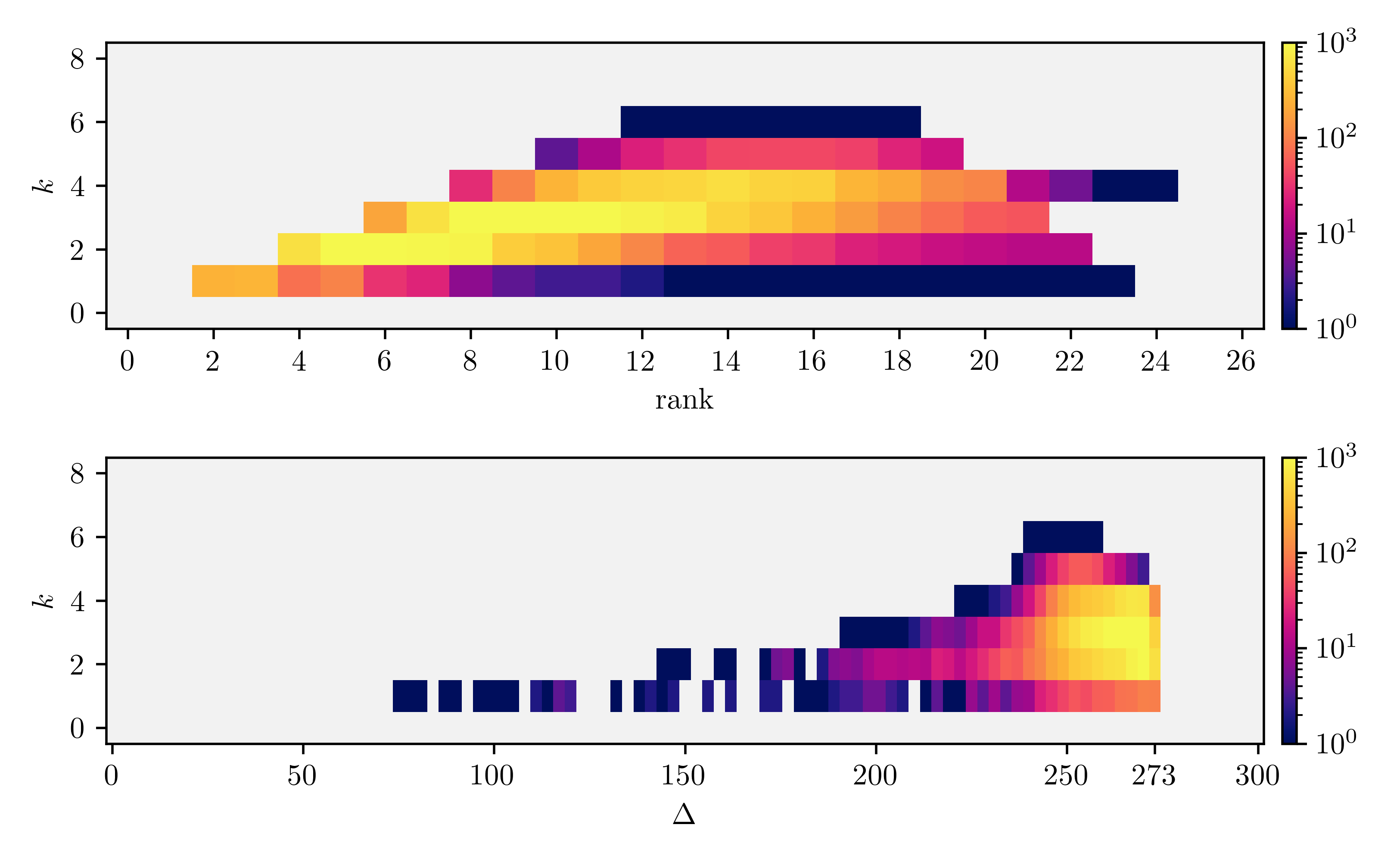}
    \caption{Distribution of gauge group rank (top) and $\Delta=H_\text{charged}-V$ (bottom) for anomaly-free cliques with $T=0$. The maximum rank is~$24$ and the largest anomaly-free cliques contain $k=6$ vertices. The lower bound $\mathrm{rank}\geq 2k$ is simply a consequence of not including rank-1 $\U(1)$ or $\SU(2)$ gauge groups.}
    \label{fig:anom_free_T=0_distributions}
\end{figure}

\medskip

Following the straightforward branch-and-prune algorithm sketched in figure~\ref{fig:branch_and_prune}, where type-\A{A} $k$-cliques are formed by adding vertices to $(k-1)$-cliques one-by-one, we find a total of $19,\!847$ anomaly-free theories. The number of anomaly-free $k$-cliques for each $k$ is given in table~\ref{tab:anom_free_T=0}, and distributions for both $\Delta$ and the gauge group rank as a function of $k$ are shown in figure~\ref{fig:anom_free_T=0_distributions}. Let us highlight a few notable features:
\begin{itemize}
    \item The values of $\Delta=\Delta+28n_-^\gram$ are always well above zero. The smallest value is $\Delta=77$, achieved by the simple clique with lone vertex $\{G_2,13\times\rep{7}\}$.

    \item The largest value of $k$ is $6$, achieved by having $r=0,1,\ldots,6$ copies of $\{\SU(4),\,20\times\rep{4}+3\times\rep{6}\}$ and $6-r$ copies of $\{\Sp(2),\,20\times\rep{4}+2\times\rep{5}\}$ and forming one bi-fundamental $(\rep{4},\rep{4})$ between each of the $15$ pairs of vertices. These have $b_i\cdot b_j=1$, $b_0\cdot b_i=3$ and $\Delta= 240+3r\leq 273$.
    
    \item The vertex with largest $\Delta_i$ which participates in an anomaly-free clique is
    \begin{equation}
        \v = \big\{ G_i = \SU(4) \,, \;\; \H_i = 66\times\rep{6} + \rep{35} \,; \;\; \Delta_i=416 \,,\;\; b_i\cdot b_i = 100 \,,\;\; b_0\cdot b_i = 30 \big\} \,,
    \end{equation}
    and appears exactly once:
    \begin{equation}
        \begin{aligned}
            G = \SU(4)\times\SU(11) \,, \quad \H = (\rep{6},\rep{66}) + (\rep{35},\rep{1}) \,, \quad \Delta = 273 \,, \quad \Gram = \begin{psmallmatrix}
                9 & 30 & 6\\
                30 & 100 & 20\\
                6 & 20 & 4
            \end{psmallmatrix} \,.
        \end{aligned}
    \end{equation}
    Figure~\ref{fig:T=0_A03_C02_vertices_used} shows the $A_3\sim\SU(4)$ and $C_2\sim\Sp(2)$ vertices which appear in anomaly-free cliques and serves as evidence that the bounds $\Delta_\text{max}(A_3)=550$ and $\Delta_\text{max}(C_2)=450$ are sufficiently large.

    \begin{figure}[t]
        \centering
        \includegraphics[width=\textwidth]{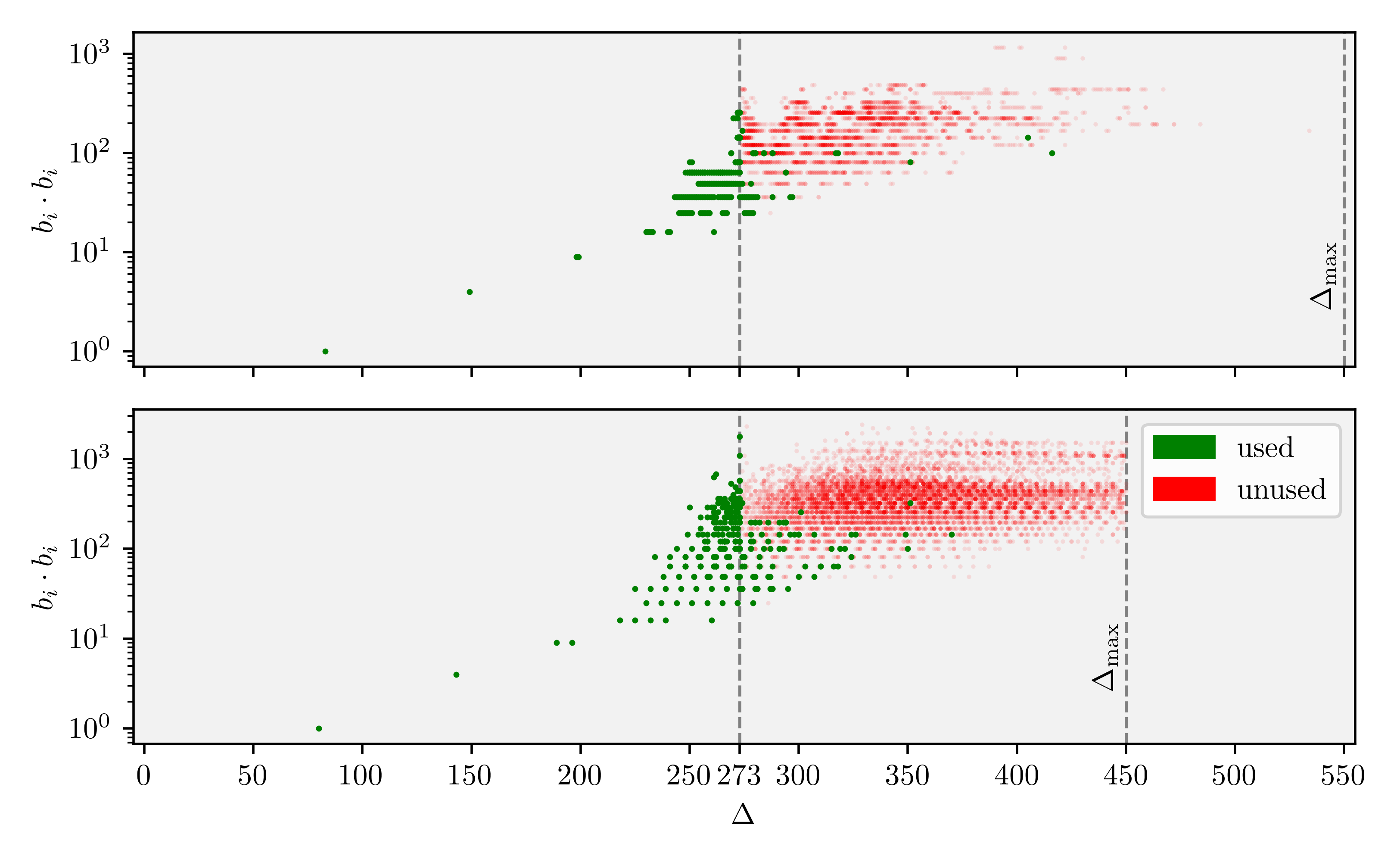}
        \caption{Vertices for $A_3\sim\SU(4)$ (top) and $C_2\sim\Sp(2)$ (bottom) which are admissible for $T=0$ and have either $\Delta\leq 273$ or degree $\geq 1$ in $\G$. Those vertices which appear in one or more anomaly-free clique are shown in green while those that go unused are shown in red. This is retroactive justification that $\Delta_\text{max}(A_3)=550$ and $\Delta_\text{max}(C_2)=450$ are sufficiently large.}
        \label{fig:T=0_A03_C02_vertices_used}
    \end{figure}
    
    \item For $k\neq4$, the rank is bounded as $\mathrm{rank}\leq 24-k$. All of the anomaly-free $4$-cliques with $\mathrm{rank}>20$ have one or more copies of the vertex $\{\SU(7),\,22\times\rep{7}+\rep{35}\}$. There is a single clique with a rank of $24$, consisting of four copies of this vertex with one bi-fundamental for each of the six pairs of vertices.
    
    \item The $\SU(N)$, $\SO(N)$ and $\Sp(N)$ gauge factors of largest rank which appear in anomaly-free cliques are $\SU(24)$, $\SO(18)$ and $\Sp(12)$, respectively, each appearing exactly once:
    \begin{equation}
        \begin{aligned}
            G &= \SU(24) \,, & \quad \H &= 3\times\rep{276} \,, & \quad \Delta &= 253 \,, & \quad \Gram &= \begin{psmallmatrix}
                9 & 3\\
                3 & 1
            \end{psmallmatrix} \,,\\
            G &= \SO(18) \,, & \H &= \rep{170} + \rep{256} \,, & \Delta &= 273 \,, & \Gram &= \begin{psmallmatrix}
                9 & 12\\
                12 & 16
            \end{psmallmatrix} \,,\\
            G &= \Sp(12) \,, & \H &= 2\times\rep{275} \,, & \Delta &= 250 \,, & \Gram &= \begin{psmallmatrix}
                9 & 3\\
                3 & 1
            \end{psmallmatrix} \,.
        \end{aligned}
    \end{equation}
    The above all have $k=1$: over anomaly-free cliques with $k\geq 2$ the largest-rank factors which appear are $\SU(20)$, $\SO(16)$ and $\Sp(10)$.

    \item The largest value of $b_i\cdot b_i$ for an anomaly-free clique occurs for a simple clique,
    \begin{equation}
        G = \Sp(2) \,, \quad \H = \rep{10} + \rep{35} + \rep{84} + \rep{154} \,, \quad \Delta=273 \,, \quad \Gram = \begin{psmallmatrix}
            9 & 126\\
            126 & 1764
        \end{psmallmatrix} \,,
    \end{equation}
    where $b_1\cdot b_1 = 42^2$. Amongst anomaly-free $(2+)$-cliques the largest value is $b_i\cdot b_i=18^2$ and occurs exactly twice, again for $\Sp(2)$:
    \begin{align}
        &\begin{aligned}
            G &= \SU(4)\times\Sp(2) \,, \quad \Delta = 273 \,, \quad \Gram = \begin{psmallmatrix}
                9 & 9 & 54\\
                9 & 9 & 54\\
                54 & 54 & 324
            \end{psmallmatrix} \,,\\
            \H &= (\rep{6},\rep{10}) + (\rep{4},\rep{35}) + (\rep{4},\rep{1}) + (\rep{10},\rep{1}) + (\rep{1},\rep{5}) + (\rep{1},\rep{14}) + (\rep{1},\rep{30}) + (\rep{1},\rep[\prime]{35}) \,,
        \end{aligned}\\[15pt]
        &\begin{aligned}
            G &= \Sp(2)\times E_6 \,, \quad \Delta = 273 \,, \quad \Gram = \begin{psmallmatrix}
                9 & 54 & 6\\
                54 & 324 & 36\\
                6 & 36 & 4
            \end{psmallmatrix} \,,\\
            \H &= (\rep{10},\rep{27}) + (\rep{10},\rep{1}) + (\rep{81},\rep{1}) \,.
        \end{aligned}
    \end{align}

    \item There are a few seeming coincidences in irrep dimensions which give anomaly-free theories with very few species of hypermultiplets, such as the following,
    \begin{equation}
    \label{eq:large_irrep_examples}
        \begin{aligned}
            G &= \SU(8) \,, & \quad \H &= \rep{336} \,, & \quad \Delta &= 273 \,, & \quad \Gram &= \begin{psmallmatrix*}
                9 & 24\\
                24 & 64
            \end{psmallmatrix*} \,,\\
            G &= F_4 \,, & \H &= \rep{52} + \rep{273} \,, & \Delta &= 273 \,, & \Gram &= \begin{psmallmatrix*}
                9 & 21\\
                21 & 49
            \end{psmallmatrix*} \,,\\
            G &= G_2 \,, & \H &= \rep{14} + \rep{273} \,, & \Delta &= 273 \,, & \Gram &= \begin{psmallmatrix*}
                9 & 117\\
                117 & 1521
            \end{psmallmatrix*} \,,\\
            G &= E_6 \,, & \H &= \rep{351} \,, & \Delta &= 273 \,, & \Gram &= \begin{psmallmatrix*}
                9 & 21\\
                21 & 49
            \end{psmallmatrix*} \,,\\
            G &= E_6 \,, & \H &= \rep[\prime]{351} \,, & \Delta &= 273 \,, & \Gram &= \begin{psmallmatrix*}
                9 & 24\\
                24 & 64
            \end{psmallmatrix*} \,,
        \end{aligned}
    \end{equation}
    all of which have no neutral hypers.

    \item Most representations which appear in anomaly-free cliques are low-dimensional, but occasional there can appear hypermultiplets in irreps which are anomalously large. For example, for $\SU(8)$ the low-dimensional irreps $\rep{8}$, $\rep{28}$, $\rep{36}$, $\rep{56}$, $\rep{63}$ and $\rep{70}$ all appear, then there is a gap before $\rep{336}=\ydiagram{3,1,1}$ appears exactly once in the first line of~\eqref{eq:large_irrep_examples}. For $\SU(7)$, again the low-dimensional irreps $\rep{7}$, $\rep{21}$, $\rep{28}$, $\rep{35}$ and $\rep{48}$ appear and then there is a gap until $\rep{210}=\ydiagram{2,1,1}$ appears exactly once, this time in a $2$-clique:
    \begin{equation}
        G = \SU(7)\times E_6 \,, \qquad \H = (\rep{210},\rep{1}) + (\rep{7},\rep{27}) \,, \qquad \Delta=273 \,.
    \end{equation}
    Similarly, for $\SU(6)$ we find that after a gap the irrep $\rep{189}=\ydiagram{2,2,1,1}$ appears exactly once:
    \begin{equation}
        G = \SU(6)\times\Sp(3) \,, \qquad \H = (\rep{189},\rep{1}) + (\rep{1},\rep[\prime]{14}) + (\rep{21},\rep{6}) \,, \qquad \Delta=273 \,.
    \end{equation}

    \item Amongst $(2+)$-cliques, the majority ($11,\!622$ out of $19,\!064$) have one or more bi-charged hypermultiplets which are \emph{not} bi-fundamentals. As might be expected, these non-bi-fundamentals typically involve the familiar two-index, adjoint, and spinor representations. Perhaps the most extreme representation is the $\rep{35}$ of $\Sp(2)$ which appears as part of a bi-charged hyper in exactly one anomaly-free clique:
    \begin{align}
        G &= \SU(4)\times\Sp(2) \,, \notag\\
        \H &= (\rep{6},\rep{10}) + (\rep{4},\rep{35}) + (\rep{4},\rep{1}) + (\rep{10},\rep{1}) + (\rep{1},\rep{5}) + (\rep{1},\rep{14}) + (\rep{1},\rep{30}) + (\rep{1},\rep[\prime]{35}) \,, \notag\\
        \Delta &= 273 \,, \notag\\
        \Gram &= \begin{psmallmatrix}
             9 &  9 &  54\\
             9 &  9 &  54\\
            54 & 54 & 324
        \end{psmallmatrix} \,.
    \end{align}
\end{itemize}

\subsection{Any \texorpdfstring{$T$}{T}}
\label{sec:results_general}

\begin{table}[t]
    \centering
    \begin{tabular}{llccc}
        \toprule
        $G_i$ & $\;\;\H_i$ & $\Delta_i$ & $b_i\cdot b_i$ & $b_0\cdot b_i$\\
        \midrule
        $\SU(2N)$ & $(2N+8)\times\rep{2N} + \rep{2N(2N+1)/2}$ & $N(2N+15)+1$ & $-1$ & $1$\\
        $\SU(2N)$ & $16\times\rep{2N} + 2\times\rep{2N(2N-1)/2}$ & $30N+1$ & $0$ & $2$ \\
        \bottomrule
    \end{tabular}
    \caption{Vertices which are removed from $\G$ by hand. These can always replace their corresponding $\Sp(N)$ type-\B{B} vertices to which they are effectively identical, only increasing $\Delta$ marginally.}
    \label{tab:SU(2N)_redundant}
\end{table}

We now turn to a general, $T$-agnostic classification with simple groups
\begin{equation}
\label{eq:simple_group_list}
    A_{4\to24} \,,\quad B_{3\to16} \,,\quad C_{3\to16} \,,\quad D_{4\to16} \,,\quad E_{6,7,8} \,,\quad F_4 \,,\quad G_2 \,.
\end{equation}
Solutions to the $B$-constraint for type-\A{A} vertices are generated out to $\Delta_\text{max}(G_i)$ according to
\begin{equation}
    \begin{aligned}
        &\Delta_\text{max}(A_4) = \Delta_\text{max}(B_3) = \Delta_\text{max}(C_3) = 375 \,, \\
        &\Delta_\text{max}(A_5) = \Delta_\text{max}(B_4) = \Delta_\text{max}(C_4) = \Delta_\text{max}(D_4) = 450 \,, \\
        &\Delta_\text{max}(D_5) = 700 \,, \quad \Delta_\text{max}(A_6) = 600 \,, \quad \Delta_\text{max}(A_7) = 800 \,,\\
        &\Delta_\text{max}(E_6) = \Delta_\text{max}(F_4) = 500 \,, \quad \Delta_\text{max}(G_2) = 350
    \end{aligned}
\end{equation}
and $\Delta_\text{max}(G_i)=1000$ otherwise. These choices for simple groups and $\Delta_\text{max}$ are sufficient to capture the bulk of all vertices relevant for anomaly-free theories, but we make no claim that this encompasses them all. In order to combat some of the combinatorics, the type-\B{B} vertices of table~\ref{tab:SU(2N)_redundant} are removed from $\G$ by hand. In an anomaly-free clique these $\SU(2N)$ vertices can always replace any or all of their $\Sp(N)$ counterparts to which they are effectively identical, having the same Gram matrix entries, hyper multiplicities and dimensions and only marginally larger $\Delta_i$. All statistics presented are with these vertices absent.

As discussed in section~\ref{sec:clique_anatomy}, the classification of anomaly-free cliques breaks into two steps. First, we discuss the enumeration of irreducible cliques which satisfy the bound $(\Delta+28n_-^\gram)(\C^\irr)\leq 273 - T_\text{min}(\C^\irr)$. In particular, we are able to retroactively justify the claim that there are no $\C_{<0}^\irr$ irreducible cliques if the eight vertices of~\eqref{eq:removed_vertices} are omitted. We are also able to identify irreducible cliques $\C_\infty^\irr$ which fall into infinite families parametrized only by their ranks and show that there are only finitely many such infinite families. Second, we discuss how anomaly-free cliques can be formed by taking disjoint unions of these irreducible cliques.

\subsubsection{Irreducible cliques}

\begin{figure}[t]
    \centering
    \includegraphics[width=\textwidth]{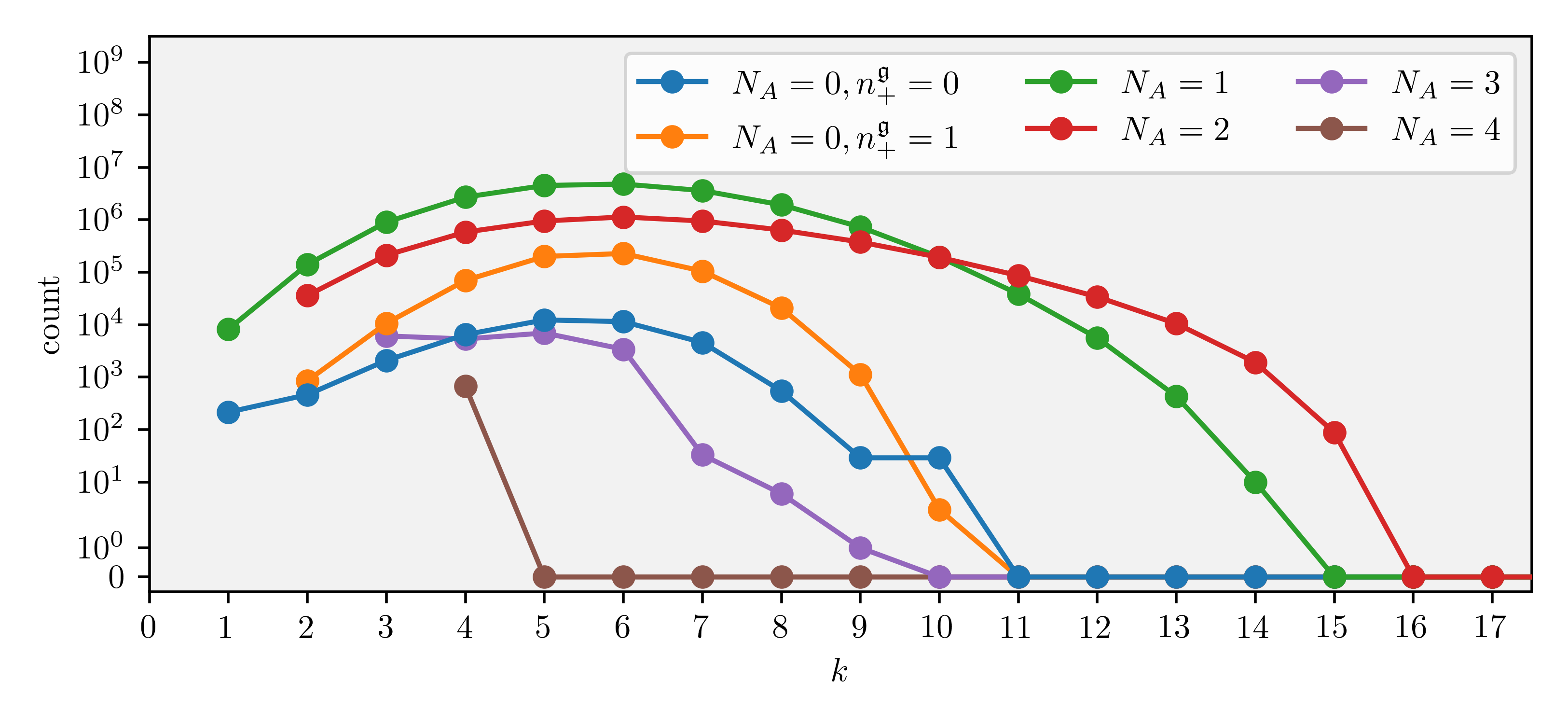}
    \caption{Number of irreducible cliques for the groups of~\eqref{eq:simple_group_list} which satisfy $(\Delta+28n_-^\gram)(\C^\irr)\leq 273 - T_\text{min}(\C^\irr)$ by number of type-\A{A} and total number of vertices. Type-\B{B} irreducible cliques are further subdivided by their value of $n_+^\gram$.}
    \label{fig:irreducible_clique_counts}
\end{figure}

Following the procedure outlined in section~\ref{sec:clique_construction}, an exhaustive list of irreducible cliques which satisfy $(\Delta+28n_-^\gram)(\C^\irr) \leq 273 - T_\text{min}(\C^\irr)$ is recursively generated. In figure~\ref{fig:irreducible_clique_counts} is shown the number of these cliques with different numbers of type-\A{A} and type-\B{B} vertices. Most bounded cliques have one or two type-\A{A} vertices and never more than four. The distributions peak around $k\approx 6$, indicating that restricting attention to gauge groups with few simple factors misses the majority of admissible hypermultiplet configurations.

Type-\B{B} vertices tend to participate in relatively few non-trivial edges in irreducible cliques. For example, the majority of type-\B{B} $5$-cliques satisfying the bound have a non-trivial edge structure which is linear:
\begin{equation*}
    \newcommand{\s}{0.5}
    \begin{aligned}
        \begin{tikzpicture}[baseline={([yshift=-0.5ex]current bounding box.center)}]
            \node[circle, inner sep=1.5, fill=typeB, draw=black, thick] (a) at (0,0) {};
            \node[circle, inner sep=1.5, fill=typeB, draw=black, thick] (b) at (\s,0) {};
            \node[circle, inner sep=1.5, fill=typeB, draw=black, thick] (c) at (2*\s,0) {};
            \node[circle, inner sep=1.5, fill=typeB, draw=black, thick] (d) at (3*\s,0) {};
            \node[circle, inner sep=1.5, fill=typeB, draw=black, thick] (e) at (4*\s,0) {};

            \draw[very thick] (a) to (b) to (c) to (d) to (e);
        \end{tikzpicture} \quad &:& \mathtt{50.499\%}& &\qquad
        \begin{tikzpicture}[baseline={([yshift=-0.5ex]current bounding box.center)}]
            \node[circle, inner sep=1.5, fill=typeB, draw=black, thick] (a) at (0,0) {};
            \node[circle, inner sep=1.5, fill=typeB, draw=black, thick] (b) at (-0.707*\s,-0.707*\s) {};
            \node[circle, inner sep=1.5, fill=typeB, draw=black, thick] (c) at (-0.707*\s,0.707*\s) {};
            \node[circle, inner sep=1.5, fill=typeB, draw=black, thick] (d) at (0.707*\s,-0.707*\s) {};
            \node[circle, inner sep=1.5, fill=typeB, draw=black, thick] (e) at (0.707*\s,0.707*\s) {};

            \draw[very thick] (b) to (a) to (c);
            \draw[very thick] (d) to (a) to (e);
        \end{tikzpicture} \quad &:& \mathtt{4.029\%}& &\qquad
        \begin{tikzpicture}[baseline={([yshift=-0.5ex]current bounding box.center)}]
            \node[circle, inner sep=1.5, fill=typeB, draw=black, thick] (a) at (0,0) {};
            \node[circle, inner sep=1.5, fill=typeB, draw=black, thick] (b) at (\s,0) {};
            \node[circle, inner sep=1.5, fill=typeB, draw=black, thick] (c) at (2*\s,0) {};
            \node[circle, inner sep=1.5, fill=typeB, draw=black, thick] (d) at (3*\s,0) {};
            \node[circle, inner sep=1.5, fill=typeB, draw=black, thick] (e) at (1.5*\s,0.866*\s) {};

            \draw[very thick] (a) to (b) to (c) to (d);
            \draw[very thick] (b) to (e) to (c);
        \end{tikzpicture} \quad &:& \mathtt{0.078\%}& \\
        \begin{tikzpicture}[baseline={([yshift=-0.5ex]current bounding box.center)}]
            \node[circle, inner sep=1.5, fill=typeB, draw=black, thick] (a) at (0,0) {};
            \node[circle, inner sep=1.5, fill=typeB, draw=black, thick] (b) at (\s,0) {};
            \node[circle, inner sep=1.5, fill=typeB, draw=black, thick] (c) at (2*\s,0) {};
            \node[circle, inner sep=1.5, fill=typeB, draw=black, thick] (d) at (2.5*\s,0.866*\s) {};
            \node[circle, inner sep=1.5, fill=typeB, draw=black, thick] (e) at (2.5*\s,-0.866*\s) {};

            \draw[very thick] (a) to (b) to (c) to (d);
            \draw[very thick] (c) to (e);
        \end{tikzpicture} \quad &:& \mathtt{29.237\%}& &\qquad
        \begin{tikzpicture}[baseline={([yshift=-0.5ex]current bounding box.center)}]
            \node[circle, inner sep=1.5, fill=typeB, draw=black, thick] (a) at (0,0) {};
            \node[circle, inner sep=1.5, fill=typeB, draw=black, thick] (b) at (\s,0) {};
            \node[circle, inner sep=1.5, fill=typeB, draw=black, thick] (c) at (2*\s,0) {};
            \node[circle, inner sep=1.5, fill=typeB, draw=black, thick] (d) at (2.866*\s,0.5*\s) {};
            \node[circle, inner sep=1.5, fill=typeB, draw=black, thick] (e) at (2.866*\s,-0.5*\s) {};

            \draw[very thick] (a) to (b) to (c) to (d) to (e) to (c);
        \end{tikzpicture} \quad &:& \mathtt{0.730\%}& &\qquad
        \begin{tikzpicture}[baseline={([yshift=-0.5ex]current bounding box.center)}]
            \node[circle, inner sep=1.5, fill=typeB, draw=black, thick] (a) at (0,0) {};
            \node[circle, inner sep=1.5, fill=typeB, draw=black, thick] (b) at (\s,0) {};
            \node[circle, inner sep=1.5, fill=typeB, draw=black, thick] (c) at (1.866*\s,0.5*\s) {};
            \node[circle, inner sep=1.5, fill=typeB, draw=black, thick] (d) at (1.866*\s,-0.5*\s) {};
            \node[circle, inner sep=1.5, fill=typeB, draw=black, thick] (e) at (2.732*\s,0) {};

            \draw[very thick] (a) to (b) to (c) to (d) to (e) to (c);
            \draw[very thick] (b) to (d);
        \end{tikzpicture} \quad &:& \mathtt{0.017\%}& \\
        \begin{tikzpicture}[baseline={([yshift=-0.5ex]current bounding box.center)}]
            \node[circle, inner sep=1.5, fill=typeB, draw=black, thick] (a) at (0,0.851*\s) {};
            \node[circle, inner sep=1.5, fill=typeB, draw=black, thick] (b) at (0.809*\s,0.263*\s) {};
            \node[circle, inner sep=1.5, fill=typeB, draw=black, thick] (c) at (0.5*\s,-0.688*\s) {};
            \node[circle, inner sep=1.5, fill=typeB, draw=black, thick] (d) at (-0.5*\s,-0.688*\s) {};
            \node[circle, inner sep=1.5, fill=typeB, draw=black, thick] (e) at (-0.809*\s,0.263*\s) {};

            \draw[very thick] (a) to (b) to (c) to (d) to (e) to (a);
        \end{tikzpicture} \quad &:& \mathtt{10.721\%}& &\qquad
        \begin{tikzpicture}[baseline={([yshift=-0.5ex]current bounding box.center)}]
            \node[circle, inner sep=1.5, fill=typeB, draw=black, thick] (a) at (-\s,0) {};
            \node[circle, inner sep=1.5, fill=typeB, draw=black, thick] (b) at (0,-0.707*\s) {};
            \node[circle, inner sep=1.5, fill=typeB, draw=black, thick] (c) at (\s,0) {};
            \node[circle, inner sep=1.5, fill=typeB, draw=black, thick] (d) at (0,0.707*\s) {};
            \node[circle, inner sep=1.5, fill=typeB, draw=black, thick] (e) at (0,0) {};

            \draw[very thick] (a) to (b) to (c) to (d) to (a) to (e) to (c);
        \end{tikzpicture} \quad &:& \mathtt{0.475\%}& &\qquad
        \begin{tikzpicture}[baseline={([yshift=-0.5ex]current bounding box.center)}]
            \node[circle, inner sep=1.5, fill=typeB, draw=black, thick] (a) at (-0.866*\s,0) {};
            \node[circle, inner sep=1.5, fill=typeB, draw=black, thick] (b) at (0,0.5*\s) {};
            \node[circle, inner sep=1.5, fill=typeB, draw=black, thick] (c) at (0,-0.5*\s) {};
            \node[circle, inner sep=1.5, fill=typeB, draw=black, thick] (d) at (0.866*\s,0) {};
            \node[circle, inner sep=1.5, fill=typeB, draw=black, thick] (e) at (0,-\s) {};

            \draw[very thick] (a) to (b) to (c) to (a) to[out=-70,in=160] (e) to[out=20,in=-110] (d) to (c);
            \draw[very thick] (d) to (b);
        \end{tikzpicture} \quad &:& \mathtt{0.015\%}& \\
        \begin{tikzpicture}[baseline={([yshift=-0.5ex]current bounding box.center)}]
            \node[circle, inner sep=1.5, fill=typeB, draw=black, thick] (a) at (-0.707*\s,0) {};
            \node[circle, inner sep=1.5, fill=typeB, draw=black, thick] (b) at (0,-0.707*\s) {};
            \node[circle, inner sep=1.5, fill=typeB, draw=black, thick] (c) at (0.707*\s,0) {};
            \node[circle, inner sep=1.5, fill=typeB, draw=black, thick] (d) at (0,0.707*\s) {};
            \node[circle, inner sep=1.5, fill=typeB, draw=black, thick] (e) at (-1.707*\s,0) {};

            \draw[very thick] (a) to (b) to (c) to (d) to (a) to (e);
        \end{tikzpicture} \quad &:& \mathtt{4.086\%}& &\qquad
        \begin{tikzpicture}[baseline={([yshift=-0.5ex]current bounding box.center)}]
            \node[circle, inner sep=1.5, fill=typeB, draw=black, thick] (a) at (0,-0.5*\s) {};
            \node[circle, inner sep=1.5, fill=typeB, draw=black, thick] (b) at (\s,-0.5*\s) {};
            \node[circle, inner sep=1.5, fill=typeB, draw=black, thick] (c) at (\s,0.5*\s) {};
            \node[circle, inner sep=1.5, fill=typeB, draw=black, thick] (d) at (0,0.5*\s) {};
            \node[circle, inner sep=1.5, fill=typeB, draw=black, thick] (e) at (-0.866*\s,0) {};

            \draw[very thick] (a) to (b) to (c) to (d) to (e) to (a);
            \draw[very thick] (a) to (d);
        \end{tikzpicture} \quad &:& \mathtt{0.098\%}& &\qquad
        \begin{tikzpicture}[baseline={([yshift=-0.5ex]current bounding box.center)}]
            \node[circle, inner sep=1.5, fill=typeB, draw=black, thick] (a) at (0,0) {};
            \node[circle, inner sep=1.5, fill=typeB, draw=black, thick] (b) at (-0.866*\s,-0.5*\s) {};
            \node[circle, inner sep=1.5, fill=typeB, draw=black, thick] (c) at (-0.866*\s,0.5*\s) {};
            \node[circle, inner sep=1.5, fill=typeB, draw=black, thick] (d) at (0.866*\s,-0.5*\s) {};
            \node[circle, inner sep=1.5, fill=typeB, draw=black, thick] (e) at (0.866*\s,0.5*\s) {};

            \draw[very thick] (b) to (a) to (d) to (e) to (a) to (c);
        \end{tikzpicture} \quad &:& \mathtt{0.014\%}&
    \end{aligned}
\end{equation*}
The following is an example of a type-\B{B} $5$-clique which has the rarest shape above, of which there are only $30$:
\begin{equation*}
    \newcommand{\s}{2}
    \begin{tikzpicture}[baseline={([yshift=-0.5ex]current bounding box.center)}]
        \node[circle, inner sep=1.5, fill=typeB, draw=black, thick] (a) at (0,0) {};
        \node[circle, inner sep=1.5, fill=typeB, draw=black, thick] (b) at (-0.866*\s,-0.5*\s) {};
        \node[circle, inner sep=1.5, fill=typeB, draw=black, thick] (c) at (-0.866*\s,0.5*\s) {};
        \node[circle, inner sep=1.5, fill=typeB, draw=black, thick] (d) at (0.866*\s,-0.5*\s) {};
        \node[circle, inner sep=1.5, fill=typeB, draw=black, thick] (e) at (0.866*\s,0.5*\s) {};

        \draw (c) to[out=0,in=120] (d) to[out=190,in=-10] (b) to (c) to[out=10,in=170] (e) to[out=180,in=60] (b);

        \draw[line width=3, white] (a) to (c);
        \draw[line width=3, white] (a) to (e);
        \draw[very thick] (b) to node[above=5, anchor=center, rotate=30] {\tiny$(\rep{6},\rep{11})$} (a);
        \draw[very thick] (c) to node[above=5, anchor=center, rotate=-30] {\tiny$(\rep{6},\rep{11})$} (a);
        \draw[very thick] (a) to node[above=5, anchor=center, rotate=-30] {\tiny$(\rep{11},\rep{5})$} (d);
        \draw[very thick] (a) to node[above=5, anchor=center, rotate=30] {\tiny$(\rep{11},\rep{5})$} (e);
        \draw[very thick] (e) to node[right=5, anchor=center, rotate=-90] {\tiny$(\rep{5},\rep{5})$} (d);

        \node[below=5] at (a) {\scriptsize\begin{minipage}{2cm}
            \centering
            $\SU(11)$\\
            $22\times\rep{11}$
        \end{minipage}};
        \node[below] at (b) {\scriptsize\begin{minipage}{2cm}
            \centering
            $\SU(6)$\\
            $12\times\rep{6}$
        \end{minipage}};
        \node[above] at (c) {\scriptsize\begin{minipage}{2cm}
            \centering
            $\SU(6)$\\
            $12\times\rep{6}$
        \end{minipage}};
        \node[below] at (d) {\scriptsize\begin{minipage}{2cm}
            \centering
            $\SU(5)$\\
            $16\times\rep{5}+2\times\rep{10}$
        \end{minipage}};
        \node[above] at (e) {\scriptsize\begin{minipage}{2cm}
            \centering
            $\SU(5)$\\
            $16\times\rep{5}+2\times\rep{10}$
        \end{minipage}};
    \end{tikzpicture}
    \;\Longleftrightarrow\;
    \begin{aligned}
        G &= \SU(11)\times\SU(6)\times\SU(6)\times\SU(5)\times\SU(5) \,,\\
        \H &= \;\scriptstyle (\rep{11},\rep{6},\rep{1},\rep{1},\rep{1}) + (\rep{11},\rep{1},\rep{6},\rep{1},\rep{1}) + (\rep{11},\rep{1},\rep{1},\rep{5},\rep{1}) + (\rep{11},\rep{1},\rep{1},\rep{1},\rep{5})\\
        &\qquad \scriptstyle + (\rep{1},\rep{1},\rep{1},\rep{5},\rep{5}) + (\rep{1},\rep{6},\rep{1},\rep{1},\rep{1}) + (\rep{1},\rep{1},\rep{6},\rep{1},\rep{1})\\
        &\qquad \scriptstyle + 2\times(\rep{1},\rep{1},\rep{1},\rep{10},\rep{1}) + 2\times(\rep{1},\rep{1},\rep{1},\rep{1},\rep{10}) \,,\\
        \Gram &= \begin{psmallmatrix}
            9-T &  0 &  0 &  0 & 2 & 2\\
              0 & -2 &  1 &  1 & 1 & 1\\
              0 &  1 & -2 &    &   &  \\
              0 &  1 &    & -2 &   &  \\
              2 &  1 &    &    & 0 & 1\\
              2 &  1 &    &    & 1 & 0
        \end{psmallmatrix} \,, \qquad n_\pm^\gram = (1,4) \,,\\
        \Delta &= 81 \,, \quad \Delta+28n_-^\gram = 193 \,, \quad T_\text{min} = 7 \,.
    \end{aligned}
\end{equation*}
Although the above example is not anomaly-free by itself, it can be made so by taking a disjoint union with two copies of the $1$-clique with vertex $\{G_2,\,\rep{7}\}$, for example.

Similarly, for type-\A{A}\B{B} cliques it is most common to have linear chains of type-\B{B} vertices connected to one of the type-\A{A} vertices. For example, the $21,\!837$ bounded irreducible cliques with exactly three type-\A{A} vertices can be organized by their non-trivial edge structure as follows:

\begin{equation*}
    \newcommand{\s}{0.5}
    \begin{aligned}
        \begin{tikzpicture}[baseline={([yshift=-0.5ex]current bounding box.center)}]
            \node[circle, inner sep=1.5, fill=typeA, draw=black, thick] (a) at (0,0) {};
            \node[circle, inner sep=1.5, fill=typeA, draw=black, thick] (b) at (-0.866*\s,0.5*\s) {};
            \node[circle, inner sep=1.5, fill=typeA, draw=black, thick] (c) at (-0.866*\s,-0.5*\s) {};

            \draw[very thick] (a) to (b) to (c) to (a);
        \end{tikzpicture} \quad &:& \mathtt{6104}& &\qquad
        \begin{tikzpicture}[baseline={([yshift=-0.5ex]current bounding box.center)}]
            \node[circle, inner sep=1.5, fill=typeA, draw=black, thick] (a) at (-0.866*\s,0) {};
            \node[circle, inner sep=1.5, fill=typeA, draw=black, thick] (b) at (0,0.5*\s) {};
            \node[circle, inner sep=1.5, fill=typeA, draw=black, thick] (c) at (0,-0.5*\s) {};
            \node[circle, inner sep=1.5, fill=typeB, draw=black, thick] (d) at (0.866*\s,0) {};

            \draw[very thick] (b) to (c) to (a) to (b) to (d) to (c);
        \end{tikzpicture} \quad &:& \mathtt{709}& &\qquad
        \begin{tikzpicture}[baseline={([yshift=-0.5ex]current bounding box.center)}]
            \node[circle, inner sep=1.5, fill=typeA, draw=black, thick] (a) at (0,0) {};
            \node[circle, inner sep=1.5, fill=typeA, draw=black, thick] (b) at (-0.866*\s,0.5*\s) {};
            \node[circle, inner sep=1.5, fill=typeA, draw=black, thick] (c) at (-0.866*\s,-0.5*\s) {};
            \node[circle, inner sep=1.5, fill=typeB, draw=black, thick] (d) at (\s,0) {};

            \draw[very thick] (a) to (b) to (c) to (a) to (d);
            \draw[very thick] (b) to[out=30,in=120] (d) to[out=-120,in=-30] (c);
        \end{tikzpicture} \quad &:& \mathtt{200}\\
        \begin{tikzpicture}[baseline={([yshift=-0.5ex]current bounding box.center)}]
            \node[circle, inner sep=1.5, fill=typeA, draw=black, thick] (a) at (0,0) {};
            \node[circle, inner sep=1.5, fill=typeA, draw=black, thick] (b) at (-0.866*\s,0.5*\s) {};
            \node[circle, inner sep=1.5, fill=typeA, draw=black, thick] (c) at (-0.866*\s,-0.5*\s) {};
            \node[circle, inner sep=1.5, fill=typeB, draw=black, thick] (d) at (\s,0) {};

            \draw[very thick] (a) to (b) to (c) to (a) to (d);
        \end{tikzpicture} \quad &:& \mathtt{4436}& &\qquad
        \begin{tikzpicture}[baseline={([yshift=-0.5ex]current bounding box.center)}]
            \node[circle, inner sep=1.5, fill=typeA, draw=black, thick] (a) at (0,0) {};
            \node[circle, inner sep=1.5, fill=typeA, draw=black, thick] (b) at (-0.866*\s,0.5*\s) {};
            \node[circle, inner sep=1.5, fill=typeA, draw=black, thick] (c) at (-0.866*\s,-0.5*\s) {};
            \node[circle, inner sep=1.5, fill=typeB, draw=black, thick] (d) at (0.866*\s,0.5*\s) {};
            \node[circle, inner sep=1.5, fill=typeB, draw=black, thick] (e) at (0.866*\s,-0.5*\s) {};

            \draw[very thick] (a) to (b) to (c) to (a) to (d);
            \draw[very thick] (a) to (e);
        \end{tikzpicture} \quad &:& \mathtt{28}& &\qquad
        \begin{tikzpicture}[baseline={([yshift=-0.5ex]current bounding box.center)}]
            \node[circle, inner sep=1.5, fill=typeA, draw=black, thick] (a) at (-0.866*\s,0) {};
            \node[circle, inner sep=1.5, fill=typeA, draw=black, thick] (b) at (0,0.5*\s) {};
            \node[circle, inner sep=1.5, fill=typeA, draw=black, thick] (c) at (0,-0.5*\s) {};
            \node[circle, inner sep=1.5, fill=typeB, draw=black, thick] (d) at (0.866*\s,0) {};
            \node[circle, inner sep=1.5, fill=typeB, draw=black, thick] (e) at (1.866*\s,0) {};

            \draw[very thick] (b) to (c) to (a) to (b) to (d) to (c);
            \draw[very thick] (d) to (e);
        \end{tikzpicture} \quad &:& \mathtt{3}\\
        \begin{tikzpicture}[baseline={([yshift=-0.5ex]current bounding box.center)}]
            \node[circle, inner sep=1.5, fill=typeA, draw=black, thick] (a) at (0,0) {};
            \node[circle, inner sep=1.5, fill=typeA, draw=black, thick] (b) at (-0.866*\s,0.5*\s) {};
            \node[circle, inner sep=1.5, fill=typeA, draw=black, thick] (c) at (-0.866*\s,-0.5*\s) {};
            \node[circle, inner sep=1.5, fill=typeB, draw=black, thick] (d) at (\s,0) {};
            \node[circle, inner sep=1.5, fill=typeB, draw=black, thick] (e) at (2*\s,0) {};

            \draw[very thick] (a) to (b) to (c) to (a) to (d) to (e);
        \end{tikzpicture} \quad &:& \mathtt{6920}& &\qquad
        \begin{tikzpicture}[baseline={([yshift=-0.5ex]current bounding box.center)}]
            \node[circle, inner sep=1.5, fill=typeA, draw=black, thick] (a) at (0,0) {};
            \node[circle, inner sep=1.5, fill=typeA, draw=black, thick] (b) at (-0.866*\s,0.5*\s) {};
            \node[circle, inner sep=1.5, fill=typeA, draw=black, thick] (c) at (-0.866*\s,-0.5*\s) {};
            \node[circle, inner sep=1.5, fill=typeB, draw=black, thick] (d) at (0.866*\s,0.5*\s) {};
            \node[circle, inner sep=1.5, fill=typeB, draw=black, thick] (e) at (0.866*\s,-0.5*\s) {};
            \node[circle, inner sep=1.5, fill=typeB, draw=black, thick] (f) at (1.866*\s,-0.5*\s) {};

            \draw[very thick] (a) to (b) to (c) to (a) to (d);
            \draw[very thick] (a) to (e) to (f);
        \end{tikzpicture} \quad &:& \mathtt{47}& &\qquad
        \begin{tikzpicture}[baseline={([yshift=-0.5ex]current bounding box.center)}]
            \node[circle, inner sep=1.5, fill=typeA, draw=black, thick] (a) at (0,0) {};
            \node[circle, inner sep=1.5, fill=typeA, draw=black, thick] (b) at (-0.866*\s,0.5*\s) {};
            \node[circle, inner sep=1.5, fill=typeA, draw=black, thick] (c) at (-0.866*\s,-0.5*\s) {};
            \node[circle, inner sep=1.5, fill=typeB, draw=black, thick] (d) at (0.866*\s,0.5*\s) {};
            \node[circle, inner sep=1.5, fill=typeB, draw=black, thick] (e) at (0.866*\s,-0.5*\s) {};
            \node[circle, inner sep=1.5, fill=typeB, draw=black, thick] (f) at (1.866*\s,-0.5*\s) {};
            \node[circle, inner sep=1.5, fill=typeB, draw=black, thick] (g) at (1.866*\s,0.5*\s) {};

            \draw[very thick] (a) to (b) to (c) to (a) to (d) to (g);
            \draw[very thick] (a) to (e) to (f);
        \end{tikzpicture} \quad &:& \mathtt{17}\\
        \begin{tikzpicture}[baseline={([yshift=-0.5ex]current bounding box.center)}]
            \node[circle, inner sep=1.5, fill=typeA, draw=black, thick] (a) at (0,0) {};
            \node[circle, inner sep=1.5, fill=typeA, draw=black, thick] (b) at (-0.866*\s,0.5*\s) {};
            \node[circle, inner sep=1.5, fill=typeA, draw=black, thick] (c) at (-0.866*\s,-0.5*\s) {};
            \node[circle, inner sep=1.5, fill=typeB, draw=black, thick] (d) at (\s,0) {};
            \node[circle, inner sep=1.5, fill=typeB, draw=black, thick] (e) at (2*\s,0) {};
            \node[circle, inner sep=1.5, fill=typeB, draw=black, thick] (f) at (3*\s,0) {};

            \draw[very thick] (a) to (b) to (c) to (a) to (d) to (e) to (f);
        \end{tikzpicture} \quad &:& \mathtt{3350}& &\qquad
        \begin{tikzpicture}[baseline={([yshift=-0.5ex]current bounding box.center)}]
            \node[circle, inner sep=1.5, fill=typeA, draw=black, thick] (a) at (0,0) {};
            \node[circle, inner sep=1.5, fill=typeA, draw=black, thick] (b) at (-0.866*\s,0.5*\s) {};
            \node[circle, inner sep=1.5, fill=typeA, draw=black, thick] (c) at (-0.866*\s,-0.5*\s) {};
            \node[circle, inner sep=1.5, fill=typeB, draw=black, thick] (d) at (0.866*\s,0.5*\s) {};
            \node[circle, inner sep=1.5, fill=typeB, draw=black, thick] (e) at (0.866*\s,-0.5*\s) {};
            \node[circle, inner sep=1.5, fill=typeB, draw=black, thick] (f) at (1.866*\s,-0.5*\s) {};
            \node[circle, inner sep=1.5, fill=typeB, draw=black, thick] (g) at (2.866*\s,-0.5*\s) {};

            \draw[very thick] (a) to (b) to (c) to (a) to (d);
            \draw[very thick] (a) to (e) to (f) to (g);
        \end{tikzpicture} \quad &:& \mathtt{16}& &\qquad
        \begin{tikzpicture}[baseline={([yshift=-0.5ex]current bounding box.center)}]
            \node[circle, inner sep=1.5, fill=typeA, draw=black, thick] (a) at (0,0) {};
            \node[circle, inner sep=1.5, fill=typeA, draw=black, thick] (b) at (-0.866*\s,0.5*\s) {};
            \node[circle, inner sep=1.5, fill=typeA, draw=black, thick] (c) at (-0.866*\s,-0.5*\s) {};
            \node[circle, inner sep=1.5, fill=typeB, draw=black, thick] (d) at (0.866*\s,0.5*\s) {};
            \node[circle, inner sep=1.5, fill=typeB, draw=black, thick] (e) at (0.866*\s,-0.5*\s) {};
            \node[circle, inner sep=1.5, fill=typeB, draw=black, thick] (f) at (1.866*\s,-0.5*\s) {};
            \node[circle, inner sep=1.5, fill=typeB, draw=black, thick] (g) at (2.866*\s,-0.5*\s) {};
            \node[circle, inner sep=1.5, fill=typeB, draw=black, thick] (h) at (1.866*\s,0.5*\s) {};

            \draw[very thick] (a) to (b) to (c) to (a) to (d) to (h);
            \draw[very thick] (a) to (e) to (f) to (g);
        \end{tikzpicture} \quad &:& \mathtt{6}\\
        & & & & & & & &\qquad
        \begin{tikzpicture}[baseline={([yshift=-0.5ex]current bounding box.center)}]
            \node[circle, inner sep=1.5, fill=typeA, draw=black, thick] (a) at (0,0) {};
            \node[circle, inner sep=1.5, fill=typeA, draw=black, thick] (b) at (-0.866*\s,0.5*\s) {};
            \node[circle, inner sep=1.5, fill=typeA, draw=black, thick] (c) at (-0.866*\s,-0.5*\s) {};
            \node[circle, inner sep=1.5, fill=typeB, draw=black, thick] (d) at (0.866*\s,0.5*\s) {};
            \node[circle, inner sep=1.5, fill=typeB, draw=black, thick] (e) at (0.866*\s,-0.5*\s) {};
            \node[circle, inner sep=1.5, fill=typeB, draw=black, thick] (f) at (1.866*\s,-0.5*\s) {};
            \node[circle, inner sep=1.5, fill=typeB, draw=black, thick] (g) at (2.866*\s,-0.5*\s) {};
            \node[circle, inner sep=1.5, fill=typeB, draw=black, thick] (h) at (1.866*\s,0.5*\s) {};
            \node[circle, inner sep=1.5, fill=typeB, draw=black, thick] (i) at (2.866*\s,0.5*\s) {};

            \draw[very thick] (a) to (b) to (c) to (a) to (d) to (h) to (i);
            \draw[very thick] (a) to (e) to (f) to (g);
        \end{tikzpicture} \quad &:& \mathtt{6}
    \end{aligned}
\end{equation*}
The structures with one or two type-\A{A} vertices are more varied.

\begin{figure}[p]
    \centering
    \includegraphics[width=\textwidth]{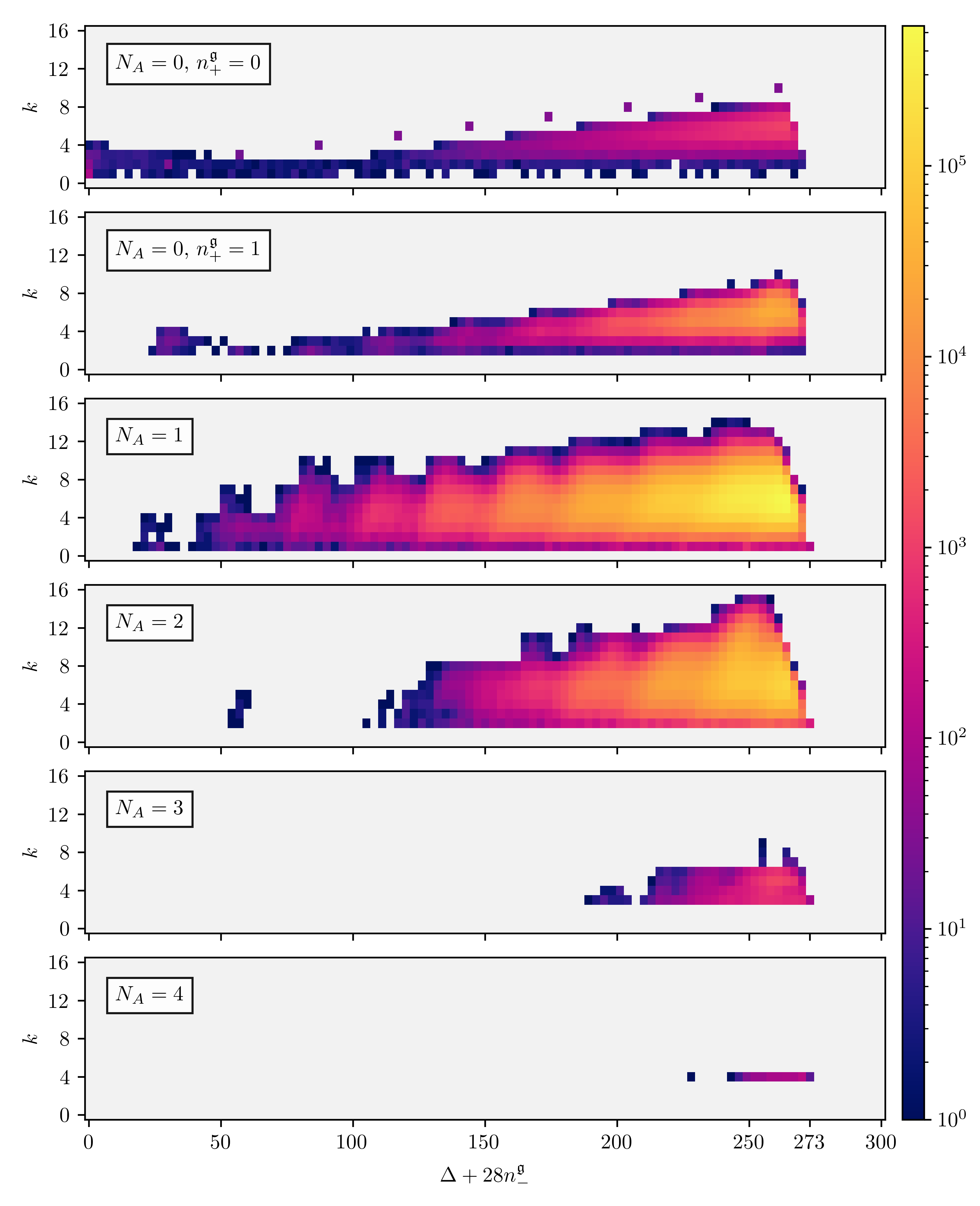}
    \caption{Distribution of $\Delta+28n_-^\gram$ vs.\ $k$ for irreducible cliques which satisfy the bound $\Delta+28n_-^\gram\leq 273 - T_\text{min}$ by number of type-\A{A} vertices. Type-\B{B} irreducible cliques are further subdivided by their value of $n_+^\gram$.}
    \label{fig:irreducible_cliques_dist}
\end{figure}

\medskip

The distributions of $\Delta+28n_-^\gram$ vs.\ $k$ are shown in figure~\ref{fig:irreducible_cliques_dist}, again separately for different numbers of type-\A{A} and type-\B{B} vertices. As expected, the number of irreducible cliques grows quickly with $\Delta+28n_-^\gram$ and most come close to saturating the upper bound. We emphasize that not all of these will be anomaly-free by themselves since it is not uncommon to have $T_\text{min}>n_-^\gram$ and $\Delta+28n_-^\gram+T_\text{min}\leq 273 < \Delta+29T_\text{min}$. As a general trend, $\Delta+28n_-^\gram$ inevitably grows with $k$ and only type-\B{B} cliques with $n_+^\gram=0$ even come close to saturating the lower bound $\Delta+28n_-^\gram\geq 0$. This retroactively justifies the first claim at the end of section~\ref{sec:clique_anatomy} and many of the steps in establishing pruning conditions in section~\ref{sec:clique_construction}.

In figure~\ref{fig:irreducible_cliques_dist} there are some clearly visible outliers. Amongst type-\B{B} cliques with $n_+^\gram=0$ there are isolated clusters which lie elevated above the rest; each of these corresponds to two families of irreducible cliques,
\begin{equation}
\label{eq:nneg=0_families}
    \newcommand{\s}{0.75}
    \newcommand{\w}{1}
    \begin{tikzpicture}[baseline={([yshift=-0.5ex]current bounding box.center)}]
        \node[inner sep=2, fill=typeB, draw=black, thick] (a) at (0,0) {};
        \node[inner sep=2, fill=typeB, draw=black, thick] (b) at (\s,0) {};
        \node[inner sep=2, fill=typeB, draw=black, thick] (c) at (2*\s,0) {};
        \node[circle, inner sep=1.5, fill=typeB, draw=black, thick] (d) at (-0.5*\s,0.866*\s) {};
        \node[circle, inner sep=1.5, fill=typeB, draw=black, thick] (e) at (-0.5*\s,-0.866*\s) {};
        \node[inner sep=2, fill=typeB, draw=black, thick] (f) at (4*\s,0) {};
        \node[inner sep=2, fill=typeB, draw=black, thick] (g) at (5*\s,0) {};
        \node[circle, inner sep=1.5, fill=typeB, draw=black, thick] (h) at (5.5*\s,0.866*\s) {};
        \node[circle, inner sep=1.5, fill=typeB, draw=black, thick] (i) at (5.5*\s,-0.866*\s) {};

        \draw[very thick] (d) to (a) to (e);
        \draw[very thick] (a) to (b) to (c) to node[fill=white] {$\cdots$} (f) to (g);
        \draw[very thick] (h) to (g) to (i);

        \begin{scope}[xshift=-3cm]
            \node[circle, inner sep=1.5, fill=typeB, draw=black, thick] (a2) at (0,-\w) {};
            \node[circle, inner sep=1.5, fill=typeB, draw=black, thick] (b2) at (0.707*\w,-0.707*\w) {};
            \node[circle, inner sep=1.5, fill=typeB, draw=black, thick] (c2) at (\w,0) {};
            \node[circle, inner sep=1.5, fill=typeB, draw=black, thick] (d2) at (0.707*\w,0.707*\w) {};
            \node[circle, inner sep=1.5, fill=typeB, draw=black, thick] (e2) at (0,\w) {};
            \node[circle, inner sep=1.5, fill=typeB, draw=black, thick] (f2) at (-0.707*\w,0.707*\w) {};
            \node[circle, inner sep=1.5, fill=typeB, draw=black, thick] (g2) at (-\w,0) {};
            \node (h2) at (-0.707*\w,-0.707*\w) {};
            \node (hh) at (-0.65*\w,-0.65*\w) {};
        \end{scope}

        \draw[very thick] (a2) to (b2) to (c2) to (d2) to (e2) to (f2) to (g2) to (h2) to (a2);
        \node[rotate=-45,fill=white] at (hh) {$\cdots$};

        \node[circle, inner sep=1.5, fill=typeB, draw=black, thick] (l1) at (5*\s+2,0.25) {};
        \node[inner sep=2, fill=typeB, draw=black, thick] (l2) at (5*\s+2,-0.25) {};
        \node[right] at (l1) {\scriptsize$=\{\SU(N)+2N\times\rep{N}\}$};
        \node[right] at (l2) {\scriptsize$=\{\SU(2N)+4N\times\rep{2N}\}$};
    \end{tikzpicture}
\end{equation}
where each edge represents one bi-fundamental and trivial edges have been suppressed. These have $\Delta=k$, $n_-^\gram=k-1$ and $T_\text{min}=\max\{9,k\}$, satisfying the bound for $k\leq10$. Although these all have a positive linear combination of $b_i$ which is null and orthogonal to all $b_I$ ($b_\text{null}=\sum_{i=1}^kb_i$ for the the cliques on the left and $b_\text{null}=b_1+b_2+b_{k-1}+b_k+2\sum_{i=3}^{k-2}b_i$ for those on the right), this is perfectly fine as long as $n_+^\Gram=0$ and $n_-^\Gram<T$. Only if one of the inequalities of equation~\eqref{eq:eigenvalue_bounds_G} is saturated must we have $b_\text{null}=0$ and thus violate $j\cdot b_i>0$. In particular, such irreducible components cannot appear together with any $\C_{\text{gen},\alpha}^\irr$ which has $n_+^\gram=1$.

There are also clear outliers for cliques with two, three and four type-\A{A} vertices which have $\Delta+28n_-^\gram$ well below the main distributions. The outlier with four type-\A{A} vertices is\footnote{Note that if one replaces one of the $(\rep{5},\repss{8}{s})$ hypers with $(\rep{5},\repss{8}{c})$ (recall conjugate representations are interchangeable), this example is triality-invariant.}
\begin{equation*}
    \newcommand{\s}{2.25}
    \begin{tikzpicture}[baseline={([yshift=-0.5ex]current bounding box.center)}]
        \node[circle, inner sep=1.5, fill=typeA, draw=black, thick] (a) at (0,0) {};
        \node[circle, inner sep=1.5, fill=typeA, draw=black, thick] (b) at (0,\s) {};
        \node[circle, inner sep=1.5, fill=typeA, draw=black, thick] (c) at (0.866*\s,-0.5*\s) {};
        \node[circle, inner sep=1.5, fill=typeA, draw=black, thick] (d) at (-0.866*\s,-0.5*\s) {};

        \draw[very thick] (a) to node[left=5, anchor=center, rotate=90] {\tiny$(\repss{8}{v},\rep{5})$} (b);
        \draw[very thick] (a) to node[above=5, anchor=center, rotate=-30] {\tiny$(\repss{8}{s},\rep{5})$} (c);
        \draw[very thick] (a) to node[above=5, anchor=center, rotate=30] {\tiny$(\rep{5},\repss{8}{s})$} (d);
        \draw[very thick] (b) to node[above right=4, anchor=center, rotate=-60] {\tiny$(\rep{5},\rep{5})$} (c);
        \draw[very thick] (d) to node[above left=4, anchor=center, rotate=60] {\tiny$(\rep{5},\rep{5})$} (b);
        \draw[very thick] (d) to node[below=-2] {\tiny$(\rep{5},\rep{5})$} (c);

        \node[below=4] at (a) {\scriptsize\begin{minipage}{2cm}
            \centering
            $\SO(8)$\\
            $5\times\repss{8}{v}+10\times\repss{8}{s}$
        \end{minipage}};
        \node[above] at (b) {\scriptsize\begin{minipage}{2cm}
            \centering
            $\SU(5)$\\
            $19\times\rep{5}+3\times\rep{10}$
        \end{minipage}};
        \node[below] at (c) {\scriptsize\begin{minipage}{2cm}
            \centering
            $\SU(5)$\\
            $19\times\rep{5}+3\times\rep{10}$
        \end{minipage}};
        \node[below] at (d) {\scriptsize\begin{minipage}{2cm}
            \centering
            $\SU(5)$\\
            $19\times\rep{5}+3\times\rep{10}$
        \end{minipage}};
    \end{tikzpicture}
    \hspace{-15pt}
    \;\Longleftrightarrow\;
    \begin{aligned}
        G &= \SU(5)\times\SU(5)\times\SU(5)\times\SO(8) \,,\\
        \H &= \;\scriptstyle (\rep{5},\rep{5},\rep{1},\rep{1}) + (\rep{5},\rep{1},\rep{5},\rep{1}) + (\rep{1},\rep{5},\rep{5},\rep{1})\\
        &\qquad \scriptstyle + (\rep{5},\rep{1},\rep{1},\repss{8}{v}) + (\rep{1},\rep{5},\rep{1},\repss{8}{s}) + (\rep{1},\rep{1},\rep{5},\repss{8}{s})\\
        &\qquad \scriptstyle + (\rep{5},\rep{1},\rep{1},\rep{1}) + (\rep{1},\rep{5},\rep{1},\rep{1}) + (\rep{1},\rep{1},\rep{5},\rep{1})\\
        &\qquad \scriptstyle + (\rep{10},\rep{1},\rep{1},\rep{1}) + (\rep{1},\rep{10},\rep{1},\rep{1}) + (\rep{1},\rep{1},\rep{10},\rep{1}) \,,\\
        \Gram &= \begin{psmallmatrix}
            9-T & 3 & 3 & 3 & 3\\
              3 & 1 & 1 & 1 & 2\\
              3 & 1 & 1 & 1 & 2\\
              3 & 1 & 1 & 1 & 2\\
              3 & 2 & 2 & 2 & 1
        \end{psmallmatrix} \,, \qquad n_\pm^\gram = (1,1) \,,\\
        \Delta &= 200 \,, \quad \Delta+28n_-^\gram = 228 \,, \quad T_\text{min} = 3 \,,
    \end{aligned}
\end{equation*}
and the small cluster of cliques with two type-\A{A} vertices near $\Delta+28n_-^\gram\approx 55$ includes
\begin{equation*}
    \newcommand{\s}{2.25}
    \begin{tikzpicture}[baseline={([yshift=-0.5ex]current bounding box.center)}]
        \node[circle, inner sep=1.5, fill=typeA, draw=black, thick] (a) at (0,0) {};
        \node[circle, inner sep=1.5, fill=typeA, draw=black, thick] (b) at (\s,0) {};

        \draw[very thick] (a) to node[above] {\tiny$(\rep{7},\rep{6})$} (b);

        \node[below] at (a) {\scriptsize\begin{minipage}{1.5cm}
            \centering
            $G_2$\\
            $6\times\rep{7} + \rep{14}$
        \end{minipage}};
        \node[below] at (b) {\scriptsize\begin{minipage}{2cm}
            \centering
            $\Sp(3)$\\
            $7\times\rep{6} + \rep[\prime]{14} + \rep{21}$
        \end{minipage}};
    \end{tikzpicture}
    \;\Longleftrightarrow\;\quad
    \begin{aligned}
        G &= G_2\times\Sp(3) \,,\\
        \H &= \;\scriptstyle (\rep{7},\rep{6}) + (\rep{1},\rep[\prime]{14}) + (\rep{1},\rep{21}) \,,\\
        \Gram &= \begin{psmallmatrix}
            9-T & 2 & 2\\
              2 & 2 & 2\\
              2 & 2 & 2
        \end{psmallmatrix} \,, \qquad n_\pm^\gram = (1,0) \,,\\
        \Delta &= 56 \,, \quad \Delta+28n_-^\gram = 56 \,, \quad T_\text{min} = 7 \,,
    \end{aligned}
\end{equation*}
and
\begin{equation*}
    \newcommand{\s}{2.25}
    \begin{tikzpicture}[baseline={([yshift=-0.5ex]current bounding box.center)}]
        \node[circle, inner sep=1.5, fill=typeA, draw=black, thick] (a) at (0,0) {};
        \node[circle, inner sep=1.5, fill=typeA, draw=black, thick] (b) at (\s,0) {};
        \node[circle, inner sep=1.5, fill=typeB, draw=black, thick] (c) at (2*\s,0) {};

        \draw[very thick] (a) to node[above=-3] {\tiny$(\rep{7},\rep{6})$} (b) to node[above=-3] {\tiny$(\rep{6},\rep{14})$} (c);
        \draw (a) to[out=30,in=150] (c);

        \node[below] at (a) {\scriptsize\begin{minipage}{1.5cm}
            \centering
            $G_2$\\
            $6\times\rep{7} + \rep{14}$
        \end{minipage}};
        \node[below] at (b) {\scriptsize\begin{minipage}{2cm}
            \centering
            $\Sp(3)$\\
            $21\times\rep{6} + \rep[\prime]{14}$
        \end{minipage}};
        \node[below] at (c) {\scriptsize\begin{minipage}{1.5cm}
            \centering
            $\SO(14)$\\
            $6\times\rep{14}$
        \end{minipage}};
    \end{tikzpicture}
    \;\Longleftrightarrow\;\quad
    \begin{aligned}
        G &= G_2\times\Sp(3)\times\SO(14) \,,\\
        \H &= \;\scriptstyle (\rep{7},\rep{6},\rep{1}) + (\rep{1},\rep{6},\rep{14}) + (\rep{14},\rep{1},\rep{1}) + (\rep{1},\rep[\prime]{14},\rep{1}) \,,\\
        \Gram &= \begin{psmallmatrix}
            9-T & 2 & 3 & -2\\
              2 & 2 & 2 &   \\
              3 & 2 & 1 &  2\\
             -2 &   & 2 & -4
        \end{psmallmatrix} \,, \qquad n_\pm^\gram = (1,1) \,,\\
        \Delta &= 28 \,, \quad \Delta+28n_-^\gram = 56 \,, \quad T_\text{min} = 8 \,,
    \end{aligned}
\end{equation*}
both of which are in fact anomaly-free.

There is also a clearly visible cluster of cliques for $n_+^\gram=0$ which have $\Delta+28n_-^\gram$ close to zero. These include the $\SO(2N+8)\times\Sp(N)$ cliques from table~\ref{tab:irreducible_cliques_infty} and
\begin{equation}
\label{eq:SUSO_chain}
    \newcommand{\s}{1}
    \begin{tikzpicture}[baseline={([yshift=-0.5ex]current bounding box.center)}]
        \node[inner sep=2, fill=typeB, draw=black, text=white, thick] (a) at (0,0) {\tiny8};
        \node[inner sep=2, fill=typeB, draw=black, text=white, thick] (b) at (\s,0) {\tiny16};
        \node[inner sep=2, fill=typeB, draw=black, text=white, thick] (c) at (2*\s,0) {\tiny24};
        \node[inner sep=2, fill=typeB, draw=black, text=white, thick] (d) at (4*\s,0) {\tiny$8(k-1)$};
        \node[circle, inner sep=2, fill=typeB, draw=black, text=white, thick] (e) at (5*\s,0) {\tiny$8k$};

        \draw[very thick] (a) to (b) to (c) to node[fill=white] {$\cdots$} (d) to (e);
        
        \node[inner sep=1.5, fill=typeB, draw=black, text=white, thick] (l1) at (5*\s+2,0.25) {\tiny$N$};
        \node[circle, inner sep=2, fill=typeB, draw=black, text=white, thick] (l2) at (5*\s+2,-0.25) {\tiny$N$};
        \node[right=7] at (l1) {\scriptsize$=\{\SU(N),\,2N\times\rep{N}\}$};
        \node[right=7] at (l2) {\scriptsize$=\{\SO(N),\,(N-8)\times\rep{N}\}$};
    \end{tikzpicture}
\end{equation}
where again each edge corresponds to one bi-fundamental and trivial edges have been suppressed. These have $\Delta=-27k-1$, $n_-^\gram=k$, $\Delta+28n_-^\gram=k-1\geq 0$ and $T_\text{min}=k$, satisfying the bound for $k\leq 137$ (given our choice of simple groups in~\eqref{eq:simple_group_list}, we have captured up to $k=4$). Many of the cliques which appear just above these with $n_+^\gram=0$ or in the cluster near $\Delta+28n_-^\gram\approx 30$ for type-\B{B} cliques with $n_+^\gram=1$ are slight variations on~\eqref{eq:SUSO_chain} where the mergings of hypermultiplets into bi-fundamentals is not as exact, such as
\begin{equation}
\label{eq:SUSO_chain_inexact}
    \newcommand{\s}{1}
    \begin{tikzpicture}[baseline={([yshift=-0.5ex]current bounding box.center)}]
        \node[inner sep=2, fill=typeB, draw=black, text=white, thick] (a) at (0,0) {\tiny11};
        \node[inner sep=2, fill=typeB, draw=black, text=white, thick] (b) at (\s,0) {\tiny23};
        \node[circle, inner sep=2, fill=typeB, draw=black, text=white, thick] (c) at (2*\s,0) {\tiny30};

        \draw[very thick] (a) to (b) to (c);
    \end{tikzpicture}
    \quad:\quad 
    \begin{aligned}
        G &= \SU(11)\times\SU(23)\times\SO(30) \,,\\
        \H &= 3\times(\rep{1},\rep{23},\rep{1}) + (\rep{11},\rep{23},\rep{1}) + (\rep{1},\rep{23},\rep{30}) \,,\\
        \Delta &= -70 \,, \quad \Delta+28n_-^\gram = 14 \, \quad T_\text{min} = 3 \,.
    \end{aligned}
\end{equation}
Cliques such as those in~\eqref{eq:SUSO_chain}, \eqref{eq:SUSO_chain_inexact} and table~\ref{tab:irreducible_cliques_infty} were considered and bounded in~\cite{Tarazi2021}.

\subsubsection{Anomaly-free cliques}

Having enumerated all irreducible admissible cliques satisfying the bounds of equation~\eqref{eq:delta_28n_double_sided_bound}, we can generate anomaly-free cliques by taking disjoint unions at will. There are two possible shapes:
\begin{equation}
\label{eq:anomaly-free-clique-types}
    \C^\af = \C_{\text{gen},n_+^\gram=1}^\irr \uplus \Big( \biguplus_\alpha \C_{\text{gen},n_+^\gram=0,\alpha}^\irr \Big) \,, \quad \C^\af = \Big( \biguplus_\alpha \C_{\infty,\alpha}^\irr \Big) \uplus \Big( \biguplus_\alpha \C_{\text{gen},n_+^\gram=0,\alpha}^\irr \Big) \,.
\end{equation}
On the left, exactly one of the cliques with $n_+^\gram=1$ is selected and all other irreducible cliques must have $n_+^\gram=0$. On the right, all irreducible cliques have $n_+^\gram=0$ and some members from the infinite families may be selected. The latter are perhaps less interesting, seeing as they necessarily contain only type-\B{B} vertices. In appendix~\ref{app:brane_probe} we review how each infinite family of this shape is truncated to a finite number of consistent theories if one makes some mild assumptions about the BPS spectrum.

The bound
\begin{equation}
    \sum_\alpha(\Delta+28n_-^\gram)(\C_\alpha^\irr) + \max_\alpha\{T_\text{min}(\C_\alpha^\irr)\} \leq 273
\end{equation}
can help to guide the construction of anomaly-free cliques, but ultimately they must satisfy $(\Delta+29T_\text{min})(\C^\af)\leq 273$. We provide the functionality to generate anomaly-free cliques for your favorite $\C_{\text{gen},n_+^\gram=1}^\irr$ alongside the data sets in~\cite{Loges:2023gh2}.

\subsection{On the (conditional) finiteness of the landscape}
\label{sec:finite_landscape}

In order to make headway on enumerating anomaly-free cliques it was necessary to make two key simplifying assumptions, namely that hypermultiplets are charged under at most two simple factors and that the vertices of~\eqref{eq:removed_vertices} are absent. Having done so, we have seen that the resulting ensemble of anomaly-free theories has both $\Delta$ and $T$ bounded above and below and has only a finite number of infinite families, each of which is truncated to a finite number of consistent theories with mild assumptions on the BPS spectrum (see appendix~\ref{app:brane_probe}). It is a natural to ask which of these assumptions may be dropped without changing the shape of the ensemble in any significant way. We have seen in section~\ref{sec:infinite_families} that reincorporating \emph{all} of the vertices of~\eqref{eq:removed_vertices} leads to a qualitatively different situation, where there are infinitely many infinite families where $\Delta$ and $T$ grow unbounded. However, only half of the eight vertices in~\eqref{eq:removed_vertices} actually lead to infinite families of this nature, while the others were avoided because they have $\Delta+28n_-^\gram < 0$.

As we saw above, if irreducible cliques with $\Delta+28n_-^\gram<0$ are absent then all anomaly-free cliques consist of disjoint unions of irreducible cliques for which $0\leq(\Delta+28n_-^\gram)(\C^\irr)\leq 273-T_\text{min}(\C^\irr)$: this implies an upper bound on the number of tensor multiplets of $273$. This bound can be saturated only when $(\Delta+28n_-^\gram)(\C^\irr) = 0$ for each irreducible component and in fact there is a unique anomaly-free theory that achieves $T=273$:
\begin{equation}
    G = \SO(8)^{273} \,, \quad T=273 \,, \quad \H = \emptyset \,, \quad \Delta = -7644 \,, \quad \Gram = \begin{psmallmatrix*}
        -264 & -2_{273}\\
        -2_{273} & -4\,\I_{273\times 273}
    \end{psmallmatrix*} \,.
\end{equation}
This curious solution to the consistency conditions has neither charged nor neutral hypermultiplets and far surpasses the maximal value for $T$ in known F-theory constructions, which to our knowledge is $T=193$~\cite{Aspinwall:1997ye,Candelas:1997eh}.

\medskip

In summary, we have found the following:
\begin{equation}
\label{eq:finiteness_summary}
    \left\{\footnotesize
    \begin{minipage}{4.2cm}
        $\U(1)$, $\SU(2)$, $\SU(3)$,\\
        $\{E_6,\,k\times\rep{27}\}$ ($k=0,1$),\\
        $\{E_7,\,\tfrac{k}{2}\times\rep{56}\}$ ($k=0,1,2,3$),\\
        $\{E_8,\,\emptyset\}$ and $\{F_4,\,\emptyset\}$ all absent \\
        and no $(3+)$-charged hypers
    \end{minipage}
    \right\} \;\implies\;\left\{\hspace{-12pt}
    \begin{minipage}{7cm}
        \footnotesize
        \begin{itemize}
            \setlength\itemsep{-0.25em}
            \item $T\leq 273$
            \item The number of anomaly-free theories\\ with $n_+^\Gram=1$ is finite
            \item The number of infinite families of anomaly-free theories with $n_+^\Gram=0$ is finite
            \item The number of consistent theories\\ with $n_+^\Gram=0$ is finite
        \end{itemize}
    \end{minipage}
    \right.
\end{equation}
We conjecture that this picture remains essentially unchanged if some of our working assumptions are relaxed:
\begin{align*}
    &\text{\textbf{C1)} \eqref{eq:finiteness_summary} continues to hold if the LHS is replaced by } \left\{\scriptsize
    \begin{minipage}{3.9cm}
        $\U(1)$, $\{E_6,\,k\times\rep{27}\}$ ($k=0,1$),\\
        $\{E_7,\,\tfrac{k}{2}\times\rep{56}\}$ ($k=0,1,2,3$),\\
        $\{E_8,\,\emptyset\}$ and $\{F_4,\,\emptyset\}$ all absent
    \end{minipage}
    \right\}\\[5pt]
    &\text{\textbf{C2)} \eqref{eq:finiteness_summary} continues to hold if the LHS is replaced by } \left\{\scriptsize
    \begin{minipage}{3cm}
        $\U(1)$, $\{E_6,\,\emptyset\}$,\\
        $\{E_7,\,\tfrac{k}{2}\times\rep{56}\}$ ($k=0,1$)\\
        and $\{E_8,\,\emptyset\}$ all absent
    \end{minipage}
    \right\}\\
    &\text{\phantom{\textbf{C2)}} and ``273'' on the RHS is replaced by some to-be-determined constant $T_\ast$}
\end{align*}
That is, for \textbf{(C1)} we propose that even when hypermultiplets charged under more than two simple factors are considered and $\SU(2)$ and $\SU(3)$ are reintroduced, there remains a finite number of (families of) irreducible, admissible cliques with $n_+^\gram=1$ ($n_+^\gram=0$) and all cliques with $\Delta+28n_-^\gram<0$ include one or more of the vertices from~\eqref{eq:removed_vertices}.
For \textbf{(C2)}, we strengthen this even further by reintroducing the four vertices of~\eqref{eq:removed_vertices} which have $\Delta+28n_-^\gram<0$ but \emph{do not} satisfy equation~\eqref{eq:infinite_family_inequalities}. Given their negative contribution to the sum in~\eqref{eq:delta_n_sum} the upper bound on $T$ would be quite a bit larger, but we cannot estimate $T_\ast$ without knowing how many times such vertices can co-exist.

\medskip

Even if \textbf{(C2)} holds, the landscape of anomaly-free theories is still wildly infinite, given the construction of section~\ref{sec:infinite_families} where we explicitly built an infinite family of anomaly-free theories for infinitely many choices of seed theory. However, it appears that the only technical obstacle to conclusively achieving a finite landscape is to bound the number of times that the vertices of table~\ref{tab:culprits} may appear, for example by showing that BPS strings kill all but a finite number from this ``doubly-infinite'' family. We find this extremely plausible since, as we saw for the example of equation~\eqref{eq:infinite_family_example}, for most seed theories the values for $T$ and $m$ must be enormous.

\section{Discussion}
\label{sec:discussion}

In this work we have focused on charting the landscape of anomaly-free 6D supergravities with eight supercharges. Leveraging ideas from graph theory, we have seen how to describe anomaly-free theories as cliques in the multigraph $\G$ and provided a general classification of their structure. One of us has found great utility in using graph theory previously for understanding the structure of intersecting brane models~\cite{Loges:2022mao}. There, vertices correspond to the homology classes of D6-branes in a toroidal orientifold and edges correspond to pair-wise compatibility in the sense of their preserving some common supersymmetry. One then searches for cliques which satisfy the tadpole bounds and thus give a consistent brane configuration. The potential applications in high-energy theory are many and it is our hope that such techniques become more widely used, if only as a useful organizational framework.

In the present work we saw how enumerating irreducible admissible cliques can be accomplished in a largely $T$-independent way using a standard branch-and-prune algorithm, which we then implemented in two cases, one for $T=0$ where we allowed for any semi-simple gauge group without $\U(1)$, $\SU(2)$ and $\SU(3)$ factors and the other with a restricted set of simple factors to keep the size of the multigraph within reason. More targeted enumerations, say with $T$ bounded or fixed or more narrow shapes for the gauge group, could also be done. Our analysis was subject to just a few practical limitations:
\begin{itemize}
    \item Hypermultiplets are charged under at most two simple factors. This was built into the multigraph structure since edges connect only two vertices.
    \item Abelian factors and some low-rank simple groups are omitted. Anomaly cancellation is somewhat different for $\U(1)$~\cite{Berkooz:1996iz}, but we see no reason our techniques could not be adapted to address abelian charges as well (although one would again have to contend with infinite families of charge assignments: see, e.g.,~\cite{Park:2011wv}).
    \item Some choices of hypermultiplets for exceptional groups are ignored. This is a necessary evil at present, given the infinite families they lead to.
\end{itemize}
However, we also emphasize the following features which, when taken together, bring us quite far in understanding the detailed structure of the landscape of anomaly-free theories:
\begin{itemize}
    \item The classification is done in a largely $T$-independent way by basing our understanding on the $T$-independent combination $\Delta+28n_-^\gram$.
    \item The allowed hypermultiplet representations are unrestricted.
    \item The form of the gauge group is unrestricted (other than the omitted abelian and simple factors).
\end{itemize}
In addition we have, modulo the conjectures of section~\ref{sec:finite_landscape}, a complete characterization of the infinite families of anomaly-free theories. In particular, the four simple theories of table~\ref{tab:culprits} necessarily appear in infinite families with $T>273$ and $T$ must grow at least linearly with their multiplicity.

\medskip

With an ensemble of anomaly-free theories at our disposal, there are many questions that one would like to answer about its members. Very broadly, which theories have a string theory realization? To have a geometric F-theory construction requires the Kodaira condition be satisfied: which/how many theories pass this non-trivial check? Similarly, assuming BPS completeness one can use the consistency of string probes to rule out theories (and in particular many of the infinite families~\cite{Kim:2019vuc,Tarazi2021}): which/how many anomaly-free theories are in fact consistent? There is also the question of determining the equivalence classes of anomaly-free theories which are related via Higgsing. We leave such questions for the bright future.

\acknowledgments

We thank Hee-Cheol Kim for insightful discussions, both Hee-Cheol Kim and Gary Shiu for comments on an earlier draft, and the 2023 Summer Program of the Simons Center for Geometry and Physics for their kind hospitality.
The work of Y.H.\ and G.L.\ is supported in part by MEXT Leading Initiative for Excellent Young Researchers Grant Number JPMXS0320210099.

\appendix

\section{Trace indices}
\label{app:trace_indices}

\begin{table}[p]
    \centering
    \newcommand{\mycol}{\multicolumn{1}{c|}{}}
    \newcommand{\myrow}[2]{\multicolumn{1}{c|}{\multirow{#1}{*}{#2}}}
    \begin{tabular}{crrrrl}
        \toprule
        $G$ & \multicolumn{1}{c}{$R$} & $A_R$ & $B_R$ & $C_R$ & Notes\\ \midrule
        \\[-12pt]
        \myrow{4}{$\SU(N)$} & $\rep{N}$        &     1 &     1 & 0\\
        \mycol              & $\rep{N(N-1)/2}$ & $N-2$ & $N-8$ & 3\\
        \mycol              & $\rep{N(N+1)/2}$ & $N+2$ & $N+8$ & 3\\
        \mycol              & $\rep{N^2-1}$    &  $2N$ & $2N$  & 6 & A\\
        \\[-5pt]
        \myrow{4}{$\SO(N)$} & $\rep{N}$    &         2 &          4 &                0\\
        \mycol              & $\rep{N(N-1)/2}$ & $2(N-2)$ &  $4(N-8)$ &               12 & A\\
        \mycol              & $\rep{(N-1)(N+2)/2}$ & $2(N+2)$ &  $4(N+8)$ &               12\\
        \mycol              & $\rep{2^{\lfloor\frac{N-1}{2}\rfloor}}$ & $2^{\lfloor\frac{N-5}{2}\rfloor}$ & $-2^{\lfloor\frac{N-5}{2}\rfloor}$ & $3\cdot 2^{\lfloor\frac{N-9}{2}\rfloor}$ & Q for $N\equiv 3,4,5\mod{8}$\\
        \\[-5pt]
        \myrow{3}{$\Sp(N)$} & $\rep{2N}$          &      1 &      1 & 0 & Q\\
        \mycol              & $\rep{(N-1)(2N+1)}$ & $2N-2$ & $2N-8$ & 3\\
        \mycol              & $\rep{N(2N+1)}$     & $2N+2$ & $2N+8$ & 3 & A\\
        \\[-5pt]
        \myrow{4}{$E_6$} & $\rep{27}$          &   6 & 0 &   3\\
        \mycol           & $\rep{78}$          &  24 & 0 &  18 & A\\
        \mycol           & $\rep{351}$         & 150 & 0 & 165\\
        \mycol           & $\rep[\prime]{351}$ & 168 & 0 & 210\\
        \\[-5pt]
        \myrow{4}{$E_7$} & $\rep{56}$   &  12 & 0 &   6 & Q\\
        \mycol           & $\rep{133}$  &  36 & 0 &  24 & A\\
        \mycol           & $\rep{912}$  & 360 & 0 & 372 & Q\\
        \mycol           & $\rep{1463}$ & 660 & 0 & 792\\
        \\[-5pt]
        \myrow{2}{$E_8$} & $\rep{248}$  &   60 & 0 &   36 & A\\
        \mycol           & $\rep{3875}$ & 1500 & 0 & 1548\\
        \\[-5pt]
        \myrow{4}{$F_4$} & $\rep{26}$  &   6 & 0 &   3\\
        \mycol           & $\rep{52}$  &  18 & 0 &  15 & A\\
        \mycol           & $\rep{273}$ & 126 & 0 & 147\\
        \mycol           & $\rep{324}$ & 162 & 0 & 207\\
        \\[-5pt]
        \myrow{4}{$G_2$} & $\rep{7}$  &  2 & 0 &   1\\
        \mycol           & $\rep{14}$ &  8 & 0 &  10 & A\\
        \mycol           & $\rep{27}$ & 18 & 0 &  27\\
        \mycol           & $\rep{64}$ & 64 & 0 & 152\\
        \\[-12pt]
        \bottomrule
    \end{tabular}
    \caption{Some common, low-dimensional irreps and their indices $A_R$, $B_R$ and $C_R$ for our chosen normalization, equation~\eqref{eq:ABC_def_in_app} (A$\,=\,$adjoint, Q$\,=\,$quaternionic).}
    \label{tab:irreps}
\end{table}

In this appendix we review the results of~\cite{Okubo82}, in which the calculation of the trace indices $A_R$, $B_R$ and $C_R$ for any representation is made explicit. In table~\ref{tab:irreps} are collected the indices $A_R$, $B_R$ and $C_R$ for some low-dimensional representations in our normalization. Recall from~\eqref{eq:ABC_def} that we are taking
\begin{equation}
\label{eq:ABC_def_in_app}
    \beta_i\tr_R(\F_i^2) = A_R\tr(\F_i^2) \,, \qquad \beta_i^2\tr_R(\F_i^4) = B_R\tr(\F_i^4) + C_R[\tr(\F_i^2)]^2 \,,
\end{equation}
with $\beta_i$ given in table~\ref{tab:lambdas}.

The index $A_R$ is simply proportional to the quadratic Casimir $I_2$:
\begin{equation}
    A_R = \frac{\dim R}{\dim\adj}I_2(R) \,.
\end{equation}
The indices $B_R$ and $C_R$ are related in a fairly simple way to $I_2$ and a particular quartic Casimir $J_4$ advocated for in~\cite{Okubo82} whose precise definition we omit. It is also helpful to introduce
\begin{equation}
    K(R) = \frac{\dim\adj}{2(\dim\adj+2)\dim R}\left(6 - \frac{I_2(\adj)}{I_2(R)}\right) \,.
\end{equation}
As explained in~\cite{Okubo82}, the utility of $J_4$ over other possible quartic Casimirs is that it vanishes identically iff the only quartic Casimir is $(I_2)^2$. This is the case for $A_1$, $A_2$ and the five exceptional groups, in which case
\begin{equation}
    \tr_R(F_i^4) = K(R)[\tr_R(F_i^2)]^2 \quad\implies\quad B_R=0 \,, \;\; C_R = K(R)A_R^2 \,.
\end{equation}
Otherwise, it can be shown that
\begin{equation}
    \frac{\tr_R(\F_i^4) - K(R)[\tr_R(\F_i^2)]^2}{\tr_{R_0}(\F_i^4) - K(R_0)[\tr_{R_0}(\F_i^2)]^2} = \frac{\dim R}{\dim R_0}\frac{J_4(R)}{J_4(R_0)}
\end{equation}
where $R_0$ is any reference irrep for which $J_4(R_0)\neq 0$. Unpacking this using~\eqref{eq:ABC_def_in_app} gives
\begin{equation}
\label{eq:BC_index}
    \begin{aligned}
        B_R &= \frac{\dim R}{\dim R_0}\frac{J_4(R)}{J_4(R_0)}\,B_{R_0} \,,\\
        C_R &= \frac{\dim R}{\dim R_0}\frac{J_4(R)}{J_4(R_0)}\Big[ C_{R_0} - K(R_0)A_{R_0}^2 \Big] + K(R)A_R^2 \,.
    \end{aligned}
\end{equation}

It only remains to explain how to compute $I_2(R)$ and $J_4(R)$. If we write the highest-weight vector as $\lambda=(\lambda_1,\ldots,\lambda_n)$ and define $W_1$, $W_2$, $W_3$, $\sigma_j^{(0)}$ and $\sigma_j$ by
\begin{itemize}
    \item $A_{n\geq 1}$: $W_1=1$, $W_2=(n+1)^2+1$, $W_3=\frac{2(n+1)^2-3}{n+1}$ and
    \begin{equation}
        \sigma_j^{(0)} = \tfrac{n+2}{2} - j \,, \qquad \sigma_j = \sum_{k=j}^n(\lambda_k+1) - \sum_{k=1}^n\tfrac{k(\lambda_k+1)}{n+1} \,, \qquad j\in\{1,2,\ldots,n+1\}
    \end{equation}
    \item $B_{n\geq 3}$: $W_1 = 1$, $W_2 = \frac{2n^2+n+2}{8}$, $W_3 = \frac{4n+1}{8}$ and
    \begin{equation}
        \sigma_j^{(0)} = n - j + \tfrac{1}{2} \,, \qquad \sigma_j = \sigma_j^{(0)} + \sum_{k=j}^n\lambda_k - \tfrac{\lambda_n}{2} \,, \qquad j\in\{1,2,\ldots,n\}
    \end{equation}
    \item $C_{n\geq 2}$: $W_1 = \frac{1}{2}$, $W_2 = \frac{2n^2+n+2}{8}$, $W_3 = \frac{4n+1}{8}$ and
    \begin{equation}
        \sigma_j^{(0)} = n - j + 1 \,, \qquad \sigma_j = \sigma_j^{(0)} + \sum_{k=1}^j\lambda_k \,, \qquad j\in\{1,2,\ldots,n\}
    \end{equation}
    \item $D_{n\geq 4}$: $W_1 = 1$, $W_2 = \frac{2n^2-n+2}{8}$, $W_3 = \frac{4n-1}{8}$ and
    \begin{equation}
        \sigma_j^{(0)} = n - j \,, \qquad \sigma_j = \sigma_j^{(0)} + \sum_{k=j}^n\lambda_k - \tfrac{\lambda_{n-1}+\lambda_n}{2} \,, \qquad j\in\{1,2,\ldots,n\}
    \end{equation}
\end{itemize}
then we have
\begin{equation}
\begin{aligned}
    I_2(R) &= W_1\sum_j\big((\sigma_j)^2 - (\sigma_j^{(0)})^2\big) \,,\\
    J_4(R) &= W_2\sum_j(\sigma_j)^4 + W_3\Big(\sum_j(\sigma_j)^2\Big)^2 + \frac{\dim(\adj)J_4(\adj)}{240} \,,
\end{aligned}
\end{equation}
for the classical groups, where
\begin{equation}
\begin{aligned}
    A_{n\geq 1}: & & J_4(\adj) &= \tfrac{1}{3}(n^2 - 1)(n-2)(n+3)(n+4) \,,\\
    B_{n\geq 3}: & & J_4(\adj) &= \tfrac{1}{12}(n^2 - 1)(2n - 1)(2n + 3)(2n - 7) \,,\\
    C_{n\geq 2}: & & J_4(\adj) &= \tfrac{1}{6}(n^2 - 1)(2n - 1)(2n + 3)(n + 4) \,,\\
    D_{n\geq 4}: & & J_4(\adj) &= \tfrac{1}{6}(n^2 - 1)(2n + 1)(2n - 3)(n - 4) \,.
\end{aligned}
\end{equation}
The fact that $J_4(\adj)=0$ for $D_4$ even though $J_4$ is not identically zero is related to triality. Otherwise, the adjoint can serve as the reference irrep in~\eqref{eq:BC_index}. The expression for $I_2(R)$ above is nothing but the usual
\begin{equation}
    I_2(R) \propto \langle \lambda,\lambda+2\rho\rangle = \langle\lambda+\rho,\lambda+\rho\rangle - \langle \rho,\rho\rangle \,,
\end{equation}
where $\langle \,,\,\rangle$ is the inner product in the weight space and $\rho$ is half the sum of positive roots. This can be used directly for the exceptional groups.

\bigskip

With our normalization $A_R$ and $B_R$ are always integers and $C_R$ is almost always an integer. In fact, $C_R$ is almost always a multiple of three so that $b_i\cdot b_i\in\Z$ is guaranteed. The only exceptions are $C_R\in\frac{1}{2}\Z$ for $A_1\sim\SU(2)$, $A_2\sim\SU(3)$, $B_3\sim\SO(7)$ and $D_4\sim\SO(8)$, and $C_R\in\Z$ for $G_2$. However, the modular conditions for $A_1$, $A_2$ and $G_2$ ensure that $b_i\cdot b_i\in\Z$ and also one observes that
\begin{equation}
\begin{aligned}
    B_3,D_4 &: \quad & \frac{C_R}{3} + \frac{B_R}{4} &\in \Z \,,
\end{aligned}
\end{equation}
so that $b_i\cdot b_i\in\Z$ for these two groups provided the $B$-constraint is satisfied. Similarly, one can understand why $b_0\cdot b_i\in\Z$ by noticing the following patterns:
\begin{equation}
\begin{aligned}
    A_{n\geq3},C_{n\geq2} &: & A_R-B_R &\equiv 0\mod{6} \,,\\
    B_{n\geq3},D_{n\geq4} &: & A_R+B_R &\equiv 0\mod{6} \,,\\
    E_6,E_7,E_8,F_4 &: & A_R &\equiv 0\mod{6} \,,\\
    A_1,A_2,G_2 &: \quad & A_R - 2C_R &\equiv 0 \mod{6} \,.
\end{aligned}
\end{equation}
Observations such as these were made precise in~\cite{Kumar:2010ru} and~\cite{Monnier2018}.

\section{Admissibility}
\label{app:admissibility}
In this appendix, we discuss details of the consistency conditions in section~\ref{sec:consistency_conditions}. In appendix~\ref{app:lattices}, we explain how the unimodular condition~\eqref{eq:unimodular_embedding} can be checked given only the Gram matrix $\Gram$. In appendix~\ref{app:j_exists}, we argue how to check for the existence of $j$ with \eqref{eq:j_condition}.

\subsection{Unimodular lattice embeddings}
\label{app:lattices}

Given only the Gram matrix $\Gram$ for the anomaly lattice $\Lambda$ with signature $(n_+^\Gram,n_-^\Gram)$, when does there exist an embedding of $\Lambda$ into a unimodular lattice of signature $(1,T)$? Luckily, the classification of unimodular lattices of indefinite signature is very simple. For all $T$, there is, up to $\mathrm{O}(1,T)$ transformations, a unique odd unimodular lattice $\mathrm{I}_{1,T}$ of signature $(1,T)$ which has a simple description as $\Z^{1+T}$ equipped with the inner product $\Omega=\diag(1,-1,\ldots,-1)$. Only for $T\equiv 1\mod{8}$ does there exist an even unimodular lattice $\mathrm{II}_{1,T}$ of signature $(1,T)$, again unique up to isomorphism. The lattice $\mathrm{II}_{1,1}$ is often written as $U$, which can be simply described as $\Z^2$ with inner product $\Omega=\begin{psmallmatrix*}
    0 & 1\\ 1 & 0
\end{psmallmatrix*}$. Using this, we can leverage the uniqueness to write
\begin{equation}
    \mathrm{II}_{1,8k+1} = U\oplus\underbrace{(-E_8)\oplus\ldots\oplus(-E_8)}_{k} \,,
\end{equation}
where $(-E_8)$ is the $E_8$ root lattice with reversed signature.

\medskip

For any integer lattice $\Lambda$ there is a canonical way to view $\Lambda$ as a sublattice of $\Lambda^\ast$ (the dual lattice of $\Lambda$) and thus define the discriminant group,
\begin{equation}
    \Lambda^\ast / \Lambda \;\cong\; (\Z/a_1\Z)\oplus\cdots\oplus(\Z/a_r\Z) \,.
\end{equation}
The so-called Smith invariants $a_j$ are positive integers satisfying $a_j|a_{j+1}$ that in some sense capture to what extent the lattice $\Lambda$ is not unimodular: a unimodular lattice has $\Lambda^\ast/\Lambda\cong 0$. They can be computed by bringing the Gram matrix $\Gram(\Lambda)$ to Smith normal form,
\begin{equation}
    S\,\Gram(\Lambda)\,T = D \equiv \diag(a_1,a_2,\ldots,a_r,0,\ldots,0) \,, \qquad a_j|a_{j+1} \,,
\end{equation}
where $S$ and $T$ are both unimodular matrices. We will write $\ell(\Lambda^\ast/\Lambda)$ for the minimal number of generators of $\Lambda^\ast/\Lambda$, i.e.\ the number of $a_j$ which are larger than one.

\medskip

Determining if $\Lambda$ can be embedded into a unimodular lattice of a particular signature breaks into two parts. The first is to embed $\Lambda$ in an integer lattice $M_\Lambda$ with the same signature but smaller discriminant group. The second is to determine whether $M_\Lambda$ has a \emph{primitive}\footnote{An embedding $\Lambda\hookrightarrow\Lambda'$ is primitive if $\Lambda'/\Lambda$ is torsion-free, i.e.\ $\Lambda'/\Lambda\cong \Z^n$ for some $n$.} embedding into a unimodular lattice of signature $(1,T)$: precisely when this is possible was understood in~\cite{nikulin1980integral}.

Let us now flesh out these ideas in turn. $\Lambda$ can be extended to an integer lattice $M_\Lambda$ of the same signature through a sequence
\begin{equation}
    \Lambda = \Lambda_0 \hookrightarrow \Lambda_1 \hookrightarrow\cdots\hookrightarrow \Lambda_n = M_\Lambda \,,
\end{equation}
where at each step the discriminant group is replaced by a proper subgroup, $\Lambda_{m+1}^\ast/\Lambda_{m+1}<\Lambda_m^\ast/\Lambda_m$. As we now review, we can always ensure that the Smith invariants $a_j$ for $M_\Lambda$ are square-free, at most two $a_j$ are larger than one and at most one $a_j\neq0$ is even. The matrices $S$ and $T$ which bring $\Gram(\Lambda_m)$ to Smith normal form provide convenient bases for identifying vectors to be added to $\Lambda_m$; if we make a change of basis using the unimodular matrix $(S^{-1})^\top$ then the new Gram matrix is
\begin{equation}
    \Gram_\text{new}(\Lambda_m) = (S^{-1})\Gram(\Lambda_m)(S^{-1})^\top = DT(S^{-1})^\top \,.
\end{equation}
On the RHS the multiplication by $D$ means that the $j^\text{th}$ row of $\Gram_\text{new}$ is divisible by $a_j$ (and since it is symmetric, also the corresponding column). Therefore if $a_j=s^2b$ with $b$ square-free, replacing the basis vector $e_j$ by $e_j/s$ results in an integer lattice $\Lambda_{m+1}$ which contains $\Lambda_m$ as an index-$s$ sublattice. In doing so the invariant $a_j$ is replaced by its square-free part.

After having made all invariants square-free, we can ensure that each prime $p$ divides at most two of the $a_j$. If $\Lambda_m^\ast/\Lambda_m$ contains a subgroup $(\Z/p\Z)^3$ then like before the matrix $S$ for $\Lambda_m$ provides a basis containing three basis vectors $\{e_j\}_{j=1,2,3}$ with $e_i\cdot e_j=pg_{ij}$, $p\nmid g_{ii}$ and $p|e_j\cdot e_k$ for \emph{all} basis vectors $e_k$. Adding the vector
\begin{equation}
    e_\text{new} = \frac{z_1e_1+z_2e_2+z_3e_3}{p} \,, \qquad z_j\in\{0,1,\ldots,p-1\} \,,
\end{equation}
to $\Lambda_m$ produces a lattice $\Lambda_{m+1}$ which remains integral provided the $z_j$ are chosen so that
\begin{equation}
    z_ig_{ij}z_j \equiv 0 \mod{p} \,.
\end{equation}
The Chavalley-Warning theorem~\cite{chevalley1935bemerkung} ensures that the above always has a non-trivial solution. $e_\text{new}$ generates a $(\Z/p\Z)$ subgroup of $\Lambda_m^\ast/\Lambda_m$ and thus $\Lambda_{m+1}^\ast/\Lambda_{m+1}$ is related to $\Lambda_m^\ast/\Lambda_m$ by the replacement $(\Z/p\Z)^3\to(\Z/p\Z)$. When $\Lambda_m^\ast/\Lambda_m$ only contains a $(\Z/p\Z)^2$ subgroup then by the same argument adding the vector $e_\text{new} = (z_1e_1+z_2e_2)/p$ produces an integer lattice provided $z_i$ are a non-trivial solution of $z_ig_{ij}z_j\equiv 0\mod{p}$. Only for $p=2$ is there guaranteed to be a non-trivial solution ($z_1=z_2=1$). Whenever a non-trivial solution \emph{does} exist then adding the vector $e_\text{new}$ to $\Lambda_m$ yields an integer lattice $\Lambda_{m+1}$ which has discriminant group with the $(\Z/p\Z)^2$ factor removed. At the end of this process the lattice $M_\Lambda$ will have discriminant group of the form
\begin{equation}
    A\equiv M_\Lambda^\ast/M_\Lambda \cong \bigoplus_{p\,:\,\text{prime}} A_p = \bigoplus_{p\,:\,\text{prime}} (\Z/p\Z)^{\ell(A_p)} \,, \quad \ell(A_2)\leq 1 \,,\;\; \ell(A_{p>2})\leq 2 \,,
\end{equation}
and thus $\ell(M_\Lambda^\ast/M_\Lambda)=\max_p \ell(A_p) \leq 2$. As usual, two is the oddest prime and for the next step it may be necessary to \emph{not} remove all powers of two, settling instead for $A_2\in\{0,(\Z/2\Z),(\Z/8\Z)\}$. That this may be necessary is related to the fact that two is the only prime power that cannot be written as the difference of two squares.

\medskip

After having constructed the lattice $M_\Lambda$, we would like to determine when $M_\Lambda$ may be primitively embedded into either $\mathrm{I}_{1,T}$ or $\mathrm{II}_{1,T}$. A complete understanding of primitive embeddings into indefinite unimodular lattices first appeared in the work of Nikulin~\cite{nikulin1980integral} and leverages $p$-localization and the classification of finite quadratic forms. For $T\equiv 1\mod{8}$ the condition
\begin{equation}
    (1+T)-(n_+ + n_-) \geq \ell(A)
\end{equation}
is necessary in order for the even lattice $M_\Lambda$ to have an embedding into $\mathrm{II}_{1,T}$, and having the inequality be strictly satisfied is sufficient. We direct the interested reader to Theorem~1.12.2 of~\cite{nikulin1980integral} for an exact characterization in the case of equality. There are similar conditions for embedding into $\mathrm{I}_{1,T}$ (see Theorems~1.16.5, 1.16.7 and~1.16.8 of~\cite{nikulin1980integral}), where now
\begin{equation}
    (1+T) - (n_++n_-) \geq \begin{cases}
        \ell(A) & (1-n_+)-(T-n_-) \equiv 0\mod{8}\\
        \max\{\ell(A) ,\, \ell(A_2) + 1\} & \text{otherwise}
    \end{cases}
\end{equation}
is necessary and
\begin{equation}
    (1+T) - (n_++n_-) > \begin{cases}
        \ell(A) & (1-n_+)-(T-n_-) \equiv 0\mod{8}\\
        \max\{\ell(A) ,\, \ell(A_2) + 2\} & \text{otherwise}
    \end{cases}
\end{equation}
is sufficient. One corollary of these results is that since we can always ensure $\ell(A_2)\leq 1$ and $\ell(A_{p>2})\leq 2$ via the sequence of changes discussed above, $(1+T)-(n_++n_-) > 3 \geq \max\{\ell(A),\ell(A_2)+2\}$, i.e.\ $T\geq n_+ + n_- + 3$, is sufficient to ensure a primitive embedding exists. This fact is used in section~\ref{sec:infinite_families}. Also, it is more efficient to simply check the necessary conditions during the branch-and-prune algorithm, only checking for an embedding $\Lambda\hookrightarrow \mathrm{I}_{1,T}$ or $\Lambda\hookrightarrow\mathrm{II}_{1,T}$ in detail for the irreducible cliques which satisfy the bound $\Delta+28n_-^\gram\leq 273$.

\subsection{Positivity}
\label{app:j_exists}

As discussed in section~\ref{sec:consistency_conditions}, one of the consistency conditions that we require is the existence of a vector $j\in\R^{1,T}$ satisfying $j\cdot j>0$ and $j\cdot b_i>0$, which ensures that the gauge kinetic terms have the correct sign. When $n_+^\Gram=1$ or $n_-^\Gram=T$, the vectors $b_I$ are uniquely fixed up to $\operatorname{O}(1,T)$ transformations and the existence of $j$ is unambiguous. In contrast, we prove that if $n_+^\Gram=0$ and $n_-^\Gram<T$ then there is always enough freedom in choosing the vectors $b_I$ such that $j$ exists satisfying not only $j\cdot j>0$ and $j\cdot b_i>0$, but also $j\cdot b_0>0$. We will also discuss how to use linear and quadratic programming to check for the existence of $j$ when $n_+^\Gram=1$ or $n_-^\Gram=T$ using the integer matrix $\Gram$ directly, without the need to find the vectors $b_I$ as an intermediate step.

It is helpful to first rephrase this into a geometric condition on the vectors $b_i$ alone. It will be useful to refer to the future and past light-cones,
\begin{equation}
    C^\pm = \big\{ x\in\R^{1,T}\;\big|\; x\cdot x = 0 \,, \; \pm x^0 \geq 0 \big\} \,,
\end{equation}
and the convex-hull $\conv(b_i)=\{\sum_i\lambda_i b_i\,|\,\lambda_i\in[0,1]\,\text{and}\,\sum_i\lambda_i=1\}$ of the vectors $b_i$.\footnote{To include the condition $j\cdot b_0>0$, replace $\conv(b_i)$ with $\conv(b_I)$ in everything that follows.} Note that both $C^+$ and $C^-$ contain the origin.

\begin{claim}\label{clm:j_exists}
    Given a collection of vectors $b_i\in\R^{1,T}$, a time-like vector $j\in\R^{1,T}$ satisfying $j\cdot b_i>0$ for each $i$ exists iff $\conv(b_i)\cap C^+=\emptyset$ or $\conv(b_i)\cap C^-=\emptyset$.
\end{claim}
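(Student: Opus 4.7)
The plan is to prove the two directions separately, with the main subtlety in the converse being a case split based on whether $\conv(b_i)$ sits entirely inside the closed past causal cone $\overline{D^-}\coloneqq\{v : v\cdot v\geq 0,\,v^0\leq 0\}$, whose boundary is exactly $C^-$.

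For the forward direction ($\Rightarrow$), I would start with a time-like $j$ satisfying $j\cdot b_i > 0$ and, after replacing $j$ by $-j$ if needed, assume $j$ lies in the interior of the future causal cone. The standard Lorentzian inequality gives $j\cdot v\leq 0$ for every $v\in C^-$, with strict inequality for $v\neq 0$. Linearity then yields $j\cdot w>0$ for every $w\in\conv(b_i)$, excluding both $0$ and the nonzero past-null vectors, so $\conv(b_i)\cap C^-=\emptyset$; the past-directed case gives $\conv(b_i)\cap C^+=\emptyset$ analogously.

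For the converse ($\Leftarrow$), set $K=\conv(b_i)$ and assume without loss of generality $K\cap C^-=\emptyset$. If additionally $K\cap\overline{D^-}=\emptyset$, then $K$ (compact convex) and $\overline{D^-}$ (closed convex cone) are strictly separable by a Euclidean hyperplane; evaluating at $0\in\overline{D^-}$ shows the separation constant is positive, and the cone structure forces the separating functional to be nonpositive throughout $\overline{D^-}$. Translating from the Euclidean to the Minkowski pairing by flipping spatial signs produces $j$ with $j\cdot b_i>0$ and $j\cdot v\leq 0$ for all $v\in\overline{D^-}$, characterizing $j\in\overline{D^+}$ (the closed future causal cone); if $j$ happens to be null, a small perturbation along the unit time-like basis vector $e_0$ moves it into the interior while preserving the open conditions $j\cdot b_i>0$.

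In the remaining sub-case $K\cap\overline{D^-}\neq\emptyset$ while $K\cap C^-=\emptyset$, the set $K$ contains a past time-like vector but no past null one. Since $\partial\overline{D^-}=C^-$ and $K$ is convex, any line segment in $K$ from a point in $\mathrm{int}(\overline{D^-})$ to a point outside $\overline{D^-}$ would have to cross $C^-$, contradicting the assumption; hence $K\subset\mathrm{int}(\overline{D^-})$ is entirely past time-like, and any past time-like $j$ (e.g.\ $j=-e_0$) satisfies $j\cdot v>0$ for all $v\in K$ by the Lorentzian Cauchy--Schwarz inequality for two past-causal vectors. The main obstacle in the proof is recognizing this case split; once one does, both pieces assemble cleanly from standard convex and Lorentzian geometry.
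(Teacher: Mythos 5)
Your proof is correct and follows essentially the same route as the paper's: the forward direction is the same direct computation, and the converse rests on the same dichotomy --- either $\conv(b_i)$ avoids the solid past cone entirely (hyperplane separation, then identify the separating functional with a future-causal $j$, perturbing off the null cone if necessary) or it lies wholly in the cone's interior (any past-directed time-like $j$ works), with the convexity/connectedness argument ruling out any intermediate case. The only cosmetic slip is the phrase ``replacing $j$ by $-j$ if needed,'' which as written would flip the sign of $j\cdot b_i$; what you actually do, correctly, is treat the future- and past-directed cases symmetrically, each landing on one of the two disjuncts of the claim.
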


\begin{proof}
    First note that the statements $\conv(b_i)\cap C^\pm = \emptyset$ are invariant under $\operatorname{O}(1,T)$ transformations.
    For $\Rightarrow$, pick $\Omega=\diag(1,-1,\ldots,-1)$ and choose coordinates where $j=(1,0,\ldots,0)$. Then $\conv(b_i)\cap C^-=\emptyset$ immediately follows from $j\cdot b_i = (b_i)^0 > 0$.
    For $\Leftarrow$, let's assume that $\conv(b_i)\cap C^-=\emptyset$, without loss of generality. If $\conv(b_i)$ lies entirely in the interior of $\conv(C^-)$ then clearly all $b_i$ are past-directed time-like vectors, in which case $j$ can be chosen to be any past-directed time-like vector. Otherwise, $\conv(b_i)$ and $\conv(C^-)$ are two disjoint convex sets and since $\conv(b_i)$ is compact, the hyperplane separation theorem implies that there are hyperplanes through the origin such that all $b_i$ are strictly to one side. Because such hyperplanes meet $C^-$ only at the origin, it is clear that they are space-like surfaces, in which case $j$ can be chosen to be one of the future-directed time-like normal vectors.
\end{proof}

As a special case, if there is a non-negative linear combination of $b_i$ which vanishes, i.e.\ $\sum_i x_i^2b_i = 0\in\conv(b_i)$, then $j$ clearly does not exist since $j\cdot\sum x_i^2b_i = 0$ contradicts $j\cdot b_i>0$. With the above rephrasing we can address the case where neither inequality of equation~\eqref{eq:eigenvalue_bounds_G} is saturated.

\begin{claim}\label{clm:case3}
    Given a Gram matrix $\Gram$ with $n_+=0$ and $n_-<T$, one can always find vectors $b_I,j\in\R^{1,T}$ which realize $\Gram$ and satisfy $j\cdot j>0$ and $j\cdot b_I>0$.
\end{claim}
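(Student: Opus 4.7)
My plan is to exploit the slack in both inequalities simultaneously by splitting off a null direction from the space in which the $b_I$ live. Since $n_+^\Gram=0$, the Gram matrix $\Gram$ is negative semi-definite of rank $n_-^\Gram$, and the condition $n_-^\Gram<T$ tells me that even after I earmark one dimension of $\R^{1,T}$ for a null ``boost'', I still have a $(T-1)$-dimensional space-like subspace that is big enough to house a genuine realization of $\Gram$.

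Concretely, I would introduce light-cone coordinates $\ell_\pm = \tfrac{1}{\sqrt{2}}(e_0\pm e_1)$ satisfying $\ell_\pm\cdot\ell_\pm=0$ and $\ell_+\cdot\ell_-=1$, and let $V=\operatorname{span}(e_2,\ldots,e_T)$ be the orthogonal complement of $\operatorname{span}(\ell_+,\ell_-)$, which is negative definite of dimension $T-1\geq n_-^\Gram$. Using the spectral decomposition of $-\Gram$ (or equivalently a Cholesky-type factorization $\Gram=-MM^\top$ with $M$ of rank $n_-^\Gram$), I can realize vectors $\tilde b_I\in V$ such that $\tilde b_I\cdot \tilde b_J=\Gram_{IJ}$. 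This is the only step that uses the bound $n_-^\Gram\leq T-1$, and it is standard linear algebra.

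I then deform the realization by setting $b_I=\tilde b_I+c_I\ell_+$ for any strictly positive constants $c_I>0$. Since $\ell_+$ is null and orthogonal to $V$, the inner products are unchanged: $b_I\cdot b_J=\tilde b_I\cdot \tilde b_J=\Gram_{IJ}$. Finally I take $j=\ell_++\ell_-=\sqrt{2}\,e_0$, which is manifestly future-directed time-like with $j\cdot j=2>0$, and compute $j\cdot b_I=(\ell_++\ell_-)\cdot(\tilde b_I+c_I\ell_+)=c_I>0$ for every $I$ (including $I=0$), as required.

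The step I expect to be the main conceptual content of the proof is justifying the use of the extra dimension, i.e.\ observing that the strict inequality $n_-^\Gram<T$ buys exactly the room needed to realize $\Gram$ inside a codimension-two space-like hyperplane while leaving a full $(1{+}1)$-dimensional Lorentzian block free to accommodate both $j$ and the null ``lift'' $c_I\ell_+$. The remaining verifications are direct inner-product computations. This construction also makes transparent why the argument breaks when either bound is saturated: either $b_0$ is pinned to a null direction making the realization rigid (the case $n_+^\Gram=1$), or there is no space-like dimension left over to detach the $b_I$ from the null cone (the case $n_-^\Gram=T$), and in both situations $j\cdot b_I>0$ may genuinely fail.
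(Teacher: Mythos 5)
Your proof is correct, and it reaches the conclusion by a genuinely more direct route than the paper. The paper first reduces the existence of $j$ to the geometric criterion $\conv(b_I)\cap C^+=\emptyset$ or $\conv(b_I)\cap C^-=\emptyset$ (its Claim~\ref{clm:j_exists}), then diagonalizes $\Gram$, aligns the kernel eigencombinations along a common null direction with undetermined coefficients $g_b$, and invokes a hyperplane-separation argument in the space of kernel coefficients to show that some choice of $g_b$ keeps the origin out of $\conv(b_I)$. You instead realize $\Gram$ entirely inside a $(T-1)$-dimensional negative-definite subspace $V$ — possible precisely because $n_+^\Gram=0$ makes $-\Gram$ positive semi-definite of rank $n_-^\Gram\leq T-1$, so $-\Gram=MM^\top$ can be realized by rows of $M$ in $V$ — then shift each $b_I$ by $c_I\ell_+$ with $c_I>0$ along a null vector orthogonal to $V$, and read off $j\cdot b_I=c_I>0$ for $j=\ell_++\ell_-$. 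Since $\ell_+$ is null and orthogonal to $V$, the Gram matrix is untouched, so the whole verification is a one-line inner-product computation and the convex-hull machinery is bypassed entirely. What your version buys is an explicit, checkable realization and a transparent role for the strict inequality $n_-^\Gram<T$; what the paper's longer route buys is that Claim~\ref{clm:j_exists} and the separation language remain the operative tools in the saturated cases $n_+^\Gram=1$ and $n_-^\Gram=T$, where the realization is rigid and existence of $j$ must actually be tested (via the linear and quadratic programs of appendix~\ref{app:j_exists}). One small caveat on your closing aside: in the case $n_+^\Gram=1$ the freedom disappears not because ``$b_0$ is pinned to a null direction'' but because $\operatorname{span}(b_I)$ is then a nondegenerate Lorentzian subspace, so the $b_I$ are fixed up to $\mathrm{O}(1,T)$; this imprecision lives outside the proof of the claim and does not affect its validity.
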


\begin{proof}
    In light of claim~\ref{clm:j_exists}, it suffices to show that there is enough freedom in picking the $b_I$ to ensure that ${\conv(b_I)\cap C^-=\emptyset}$. The key observation is that since $\operatorname{span}(b_I)=\{\sum_I\lambda_I b_I|\lambda_I\in\mathbb{R}\}$ contains no time-like vectors, $\conv(b_I)$ can intersect $C^+\cup C^-$ at most along a light-like line segment. If the $b_I$ can be chosen so that this line segment does not contain the origin, then we are done since then one of $\conv(b_I)\cap C^\pm$ will be empty.

    Start by diagonalizing $\Gram$; there are $n_-^\Gram$ vectors $z_a^-$ and $n_0^\Gram$ vectors $z_b^0$ satisfying
    \begin{equation}
        \Gram_{IJ}(z_a^-)^I = \lambda_a^-(z_a^-)^J \,, \qquad \Gram_{IJ}(z_b^0)^I = 0 \,, 
        \qquad a=1,\ldots,n_-^\Gram \,,
        \qquad b=1,\ldots,n_0^\Gram \,,
    \end{equation}
    with $\lambda_a^-<0$ and normalization $\sum_I[(z_a^-)^I]^2 = \sum_I[(z_b^0)^I]^2 = 1$. The vectors $v_a^-\equiv (z_a^-)^Ib_I\in\R^{1,T}$ and $v_b^0\equiv (z_b^0)^Ib_I\in\R^{1,T}$ are all orthogonal. However, two light-like vectors are orthogonal iff they are proportional, so it must be that all $v_b^0$ are proportional. That is, all $v_b^0$ lie along the same direction in an $\R^{1,1}$ subspace and the $v_a^-$ form a basis of an orthogonal $\R^{0,n_-^\Gram}$ subspace. We can choose coordinates where $\Omega=\diag(1,-1,\ldots,-1)$ and
    \begin{equation}\label{eq:v_vector_coords}
        v_a^- = \sqrt{-\lambda_a^-}\,(0,0,\hat{e}_a) + g_a^-(1,1,\vec{0}) \,, \qquad v_b^0 = g_b(1,1,\vec{0}) \,,
    \end{equation}
    with $\hat{e}_a$ a collection of $n_-^\Gram$ orthonormal unit vectors and $g_a^-$ and $g_b$ are a set of constants we are free to choose.\footnote{From a choice for the parameters $g_a^-$ and $g_b$ one can reconstruct the $b_I$ by solving $v_a^-=(z_a^-)^Ib_I$ and $v_b^0=(z_b^0)^Ib_I$.} Actually, for our purposes it suffices to simply pick $g_a^-=0$. We would then like to show that there always exists a choice for $g_b$ such that $\conv(b_I)\cap C^-=\emptyset$.    
    
    The eigenvectors have inner products
    \begin{equation}
        v_a^-\cdot v_{a^\prime}^- = \lambda_a^-\delta_{aa^\prime} \,, \qquad v_a^-\cdot v_b^0 = 0 \,, \qquad v_b^0\cdot v_{b^\prime}^0 = 0 \,.
    \end{equation}
    However, it is helpful to define a new positive-semidefinite inner product on $\operatorname{span}(b_I)$ by\footnote{Essentially by dropping the zeroth coordinate in~\eqref{eq:v_vector_coords} and flipping signs.}
    \begin{equation}
        \langle v_a^-,v_{a^\prime}^-\rangle = |\lambda_a^-|\delta_{aa^\prime} \,, \qquad \langle v_a^-,v_b^0\rangle = 0 \,, \qquad \langle v_b^0,v_{b^\prime}^0\rangle = g_bg_{b^\prime} \,.
    \end{equation}
    Let $v$ be an arbitrary vector in $\conv(b_I)$,
    \begin{equation}
        \begin{aligned}
            v &= c_-^av_a^- + c_0^bv_b^0 = v^Ib_I \,, &\qquad v^I &= c_-^a(z_a^-)^I + c_0^b(z_b^0)^I \,,\\
            & & v^I &\geq 0 \,,\qquad \sum_Iv^I = 1 \,.
        \end{aligned}
    \end{equation}
    for which
    \begin{equation}
        \langle v,v\rangle = \sum_a|\lambda_a^-|(c_-^a)^2 + \Big(\sum_bg_bc_0^b\Big)^2 \geq 0 \,.
    \end{equation}
    Clearly $\langle v,v\rangle=0$ iff $c_-^a=0$ and $\sum_bg_bc_0^b=0$, so if $g_b$ can be chosen such that $\sum_bg_bc_0^b\neq0$ for all $v\in\conv(b_I)$ then $0\notin\conv(b_I)$ and $j$ exists. The inequalities
    \begin{equation}
    \label{eq:c0b_condition}
        v_I = c_0^b(z_b^0)^I \geq 0 \;\;\text{and}\;\; \sum_Iv^I=\sum_Ic_0^b(z_b^0)^I = 1
    \end{equation}
    define a region for $c_0^b$ which is either empty or convex and disjoint from the origin.\footnote{Clearly, the origin $c_0^b=0$ does not satisfy \eqref{eq:c0b_condition}. It is also obvious that if $c_0^{\prime b}$ and $c_0^{\prime\prime b}$ satisfy \eqref{eq:c0b_condition} then so too does $\lambda c_0^{\prime b}+(1-\lambda)c_0^{\prime\prime b}$ $(0\leq\lambda\leq1)$.} In the former case the existence of $g_b$ is vacuously true and in the latter case there exists a hyperplane separating the convex region from the origin and $g_b$ can be chosen to be the unit normal with appropriate orientation. For this choice of $g_b$, $\langle v,v\rangle$ is never zero for $v\in\conv(b_I)$, i.e.\ $v=0$ is not in $\conv(b_I)$ and thus $j$ exists.
\end{proof}

This proof is constructive, since it shows how to correct the vectors $b_I$ in order to satisfy $j\cdot b_I>0$. For example, the $k=12$ theory in~\eqref{eq:12-clique-example} has the following Gram matrix for $T=15$,
\begin{equation}
    \Gram = \begin{psmallmatrix}
        -6 & -1 &  1 & -1 &  1 & \cdots & -1 &  1\\
        -1 & -1 &  1 &  0 &  0 & \cdots &  0 &  0\\
         1 &  1 & -1 &  0 &  0 & \cdots &  0 &  0\\
        -1 &  0 &  0 & -1 &  1 & \cdots &  0 &  0\\
         1 &  0 &  0 &  1 & -1 & \cdots &  0 &  0\\[-5pt]
        \vdots & \vdots & \vdots & \vdots & \vdots & \ddots & \vdots & \vdots\\
        -1 &  0 &  0 &  0 &  0 & \cdots & -1 &  1\\
         1 &  0 &  0 &  0 &  0 & \cdots &  1 & -1\\
    \end{psmallmatrix}_{13\times13} \,,
\end{equation}
and there is a natural choice for the vectors $b_I$, namely ($\ell=1,2,\ldots,6$)
\begin{equation}
    b_0 = (3; 1^9, 1^6) \,, \quad b_{2\ell-1} = -b_{2\ell} = (0; 0^9, 0^{\ell-1}, 1, 0^{6-\ell}) \,,
\end{equation}
but this clearly disallows $j\cdot b_i>0$. However, since $n_+^\Gram = 0$ and $n_-^\Gram = 6 < T$, we have just seen that it is always possible to add appropriately-chosen multiples of a null vector to each $b_I$ in order to ensure that $\conv(b_I)$ no longer contains the origin. For example,
\begin{equation}
    b_0 = (3; 1^9, 1^6) \,, \quad b_{2\ell-1} = (3; 1^9, 0^{\ell-1}, 1, 0^{6-\ell}) \,, \quad b_{2\ell} = (3; 1^9, 0^{\ell-1}, -1, 0^{6-\ell}) \,,
\end{equation}
works and allows for $j=(1;0^9,0^6)$.

\medskip

The above proof also makes clear why $(n_+,n_-)=(0,T)$ must be treated separately, since there isn't enough room to have orthogonal $\R^{1,1}$ and $\R^{0,T}$ subspaces. When either $n_-^\Gram=T$ or $n_+^\Gram=1$ the existence of $j$ is not guaranteed, but can be determined using linear and quadratic programs, respectively.
\begin{itemize}
    \item $(n_+,n_-)=(0,T)$: Much of the discussion for claim~\ref{clm:case3} goes through. However, as just mentioned, we must take $g_b=0$ so that if $z\in\ker\Gram$ then $z^Ib_I=0$. In other words, $j$ exists iff $0\notin\conv(b_j)$. We can phrase the existence of $j$ using the following linear program,
    \begin{equation}
        \begin{array}{rl}
            \text{maximize:} \quad& \sum_I c^b(z_b^0)^I\\
            \text{subject to:} \quad& c^b(z_b^0)^i \geq 0\\
            \text{subject to:} \quad& c^b(z_b^0)^0 = 0\\
            & c^b\in[-1,1]
        \end{array}
    \end{equation}
    where, as before, $z_b^0$ are a basis for $\ker\Gram$ indexed by $b$. The linear constraint $c^b(z_b^0)^0 = 0$ ensures that the $c_0^b$ parametrize vectors with no $b_0$ component. $j$ exists iff the maximum value of this linear program is zero, necessarily occuring at $c_0^b=0$ since the vectors $z_b^0$ are linearly independent.

    \item $n_+=1$: Let $z_+$ be the eigenvector of $\Gram$ with positive eigenvalue: $z_+$ defines an orientation since wlg we can say that the time-like vector $z_+^Ib_I$ is future-directed.
    The following two quadratic programs,
    \begin{equation}
        \begin{array}{rl}
            \text{maximize:} \quad& c^i\gram_{ij}c^j\\
            \text{subject to:} \quad& \pm z_+^I\Gram_{Ij}c^j \geq 0\\
            & c^i \in [0,1]
        \end{array}
    \end{equation}
    can be used together to determine if $\conv(b_i)\cap C^\pm = \emptyset$: $\conv(b_i)\cap \conv(C^\pm)$ are non-empty if the maximum value is strictly positive or if there is a non-trivial solution with $c^i\gram_{ij}c^j=0$.
\end{itemize}

\section{Multigraph algorithms}
\label{app:multigraph_algorithms}

In this appendix, we provide details about the construction of the multigraph outlined in section~\ref{sec:graph_construction} and the pruning conditions of section~\ref{sec:clique_construction}. In~\ref{app:vertex_construction}, a way to enumerate the vertices satisfying the B-constraint~\eqref{eq:B_constraint} is presented. In appendix~\ref{app:edge_construction}, an efficient algorithm to enumerate the edges connecting type-\A{A} vertices is provided.
In appendix~\ref{app:pruning}, we provide details regarding the refined pruning condition~\eqref{eq:prune_rule_improved}.

\subsection{Vertex construction}
\label{app:vertex_construction}

In this section we discuss an algorithm to find all solutions of the $B$-constraint~\eqref{eq:B_constraint} given a collection of irreps for a simple group and subject to a bound $H\leq H_\text{max}$. That is, given a set $\{(H_{R_j},B_{R_j})\}_{j=1}^m$ we would like to find all choices for $n_{R_j}\geq 0$ such that
\begin{equation}
    \sum_{j=1}^m n_{R_j}H_{R_j} \leq H_\text{max} \;,\qquad \sum_{j=1}^m n_{R_j}B_{R_j} = B_\adj \,.
\end{equation}
This is accomplished in several steps. First, sort the irreps $\{R_j\}$ in order of decreasing $|B_R|/H_R$. Next, compute the bounds\footnote{Note that if $\sum_{j>\ell}n_{R_j}B_{R_j}=B$ is unattainable then $H_\text{min}^{(\ell)}(B)=\inf\emptyset=+\infty$.}
\begin{equation}
    H_\text{min}^{(\ell)}(B) = \inf\Big\{ \sum_{j>\ell}n_{R_j}H_{R_j} \;\Big|\; n_{R_j}\in\Z_{\geq 0} \,, \;\; \sum_{j>\ell}n_{R_j}B_{R_j} = B \Big\} \,,
\end{equation}
which represent the minimum number of hypers needed to achieve a ``$B$-sum'' of $B$ using only the irreps $\{R_j\}_{j>\ell}$. These bounds can be computed relatively quickly using the following recursion:
\begin{equation}
    \begin{aligned}
        H_\text{min}^{(m)}(B) &= \begin{cases}
            0 & B=0\\
            +\infty & B\neq 0
        \end{cases} \,,\\
        H_\text{min}^{(\ell-1)}(B) &= \inf_{n\geq0}\Big\{ H_\text{min}^{(\ell)}(B-nB_{R_\ell}) + nH_{R_\ell}\Big\} \,.
    \end{aligned}
\end{equation}
We can then construct solutions to the $B$-constraint by iteratively incorporating irreps one-by-one. Given a intermediate solution $n_R=(n_{R_1},n_{R_2},\ldots,n_{R_{\ell-1}},0,0,\ldots,0)$ involving only irreps $\{R_j\}_{j=1}^{\ell-1}$ one forms configurations $n_R'=(n_{R_1},n_{R_2},\ldots,n_{R_{\ell-1}},n_{R_\ell},0,\ldots,0)$ by picking all values of $n_{R_\ell}$ subject to the upper bound on $H$. The configuration $n_R'$ is then kept only if
\begin{equation}
\label{eq:B-constraint_upper_bound}
    \sum_{j=1}^\ell n_{R_j}H_{R_j} + H_\text{min}^{(\ell)}\Big( B_\adj - \sum_{j=1}^\ell n_{R_j}B_{R_j} \Big) \leq H_\text{max} \,,
\end{equation}
since the contributions of the yet-to-be-added hypermultiplets are controlled by the pre-computed bounds: the $H_\text{min}^{(\ell)}$ term in the above equation gives the minimum number of $\{R_j\}_{j>\ell}$ hypermultiplets needed to make up the difference in the current ``$B$-sum'' and land exactly on the target value of $B_\adj$. In this way one can construct all solutions of the $B$-constraint recursively, starting from $n_R=(0,0,\ldots,0)$.

As an example, figure~\ref{fig:B-constraint} shows three steps in this process for $\SU(5)$ with $\Delta_\text{max}=300$ ($H_\text{max}=324$). There are $m=26$ irreps with $H_R\leq H_\text{max}$ and as $\ell$ increases the upper bound on $H=\sum_{j=1}^\ell n_{R_j}H_{R_j}$ as a function of $B=\sum_{j=1}^\ell n_{R_j}B_{R_j}$ becomes more stringent, eventually constraining solutions to have $B=B_\adj$ exactly. For $\ell=25$ (green) the only irrep yet to be incorporated is the $\rep{40}$, which has $H_{\rep{40}}=40$ and $B_{\rep{40}}=-2$ (the smallest $|B_R|/H_R$ ratio) and thus
\begin{equation}
    H_\text{min}^{\ell=25}(B) = \begin{cases}
        -20B & B=0,-2,-4,\ldots\\
        \infty & \text{otherwise}
    \end{cases} \,.
\end{equation}
This explains why the green points in figure~\ref{fig:B-constraint} fall in the thin wedge $0 \leq 20(B - B_\adj) \leq H_\text{max}-H$ and only on the lines $B-B_\text{adj}=0,2,4,\ldots$.

\begin{figure}[t]
    \centering
    \includegraphics[width=\textwidth]{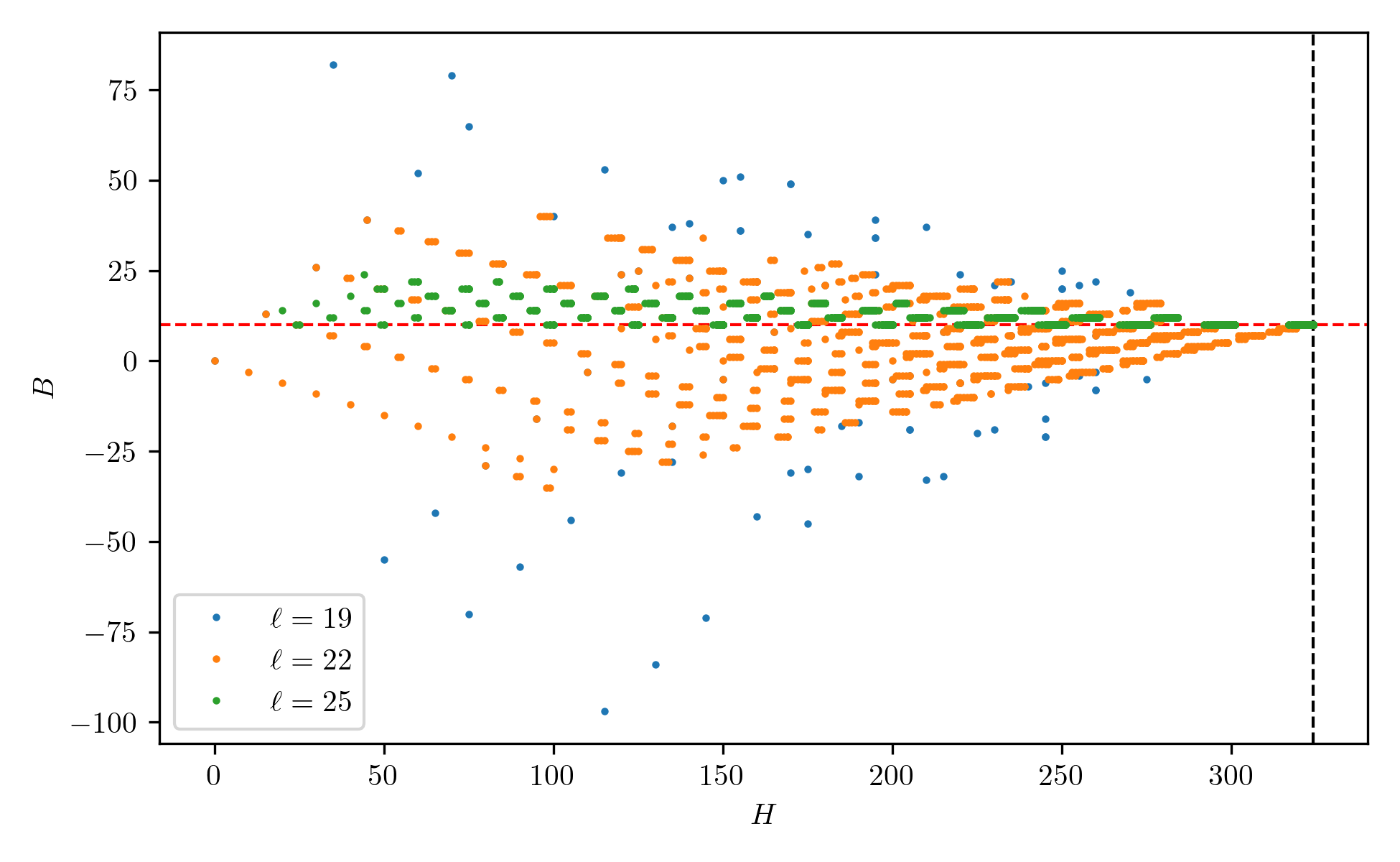}
    \caption{Three steps in the construction of $\SU(5)$ vertices below $\Delta_\text{max}=300$ ($H_\text{max}=324$). Partial, candidate solutions of the $B$-constraint using irreps $R_{j\leq\ell}$ are shown at $(H,B)=\sum_{j=1}^\ell n_{R_j}(H_{R_j},B_{R_j})$. As $\ell$ increases the bound of equation~\eqref{eq:B-constraint_upper_bound} becomes stronger, ultimately constraining solutions to lie exactly on the red, dashed line $B=B_\adj$.}
    \label{fig:B-constraint}
\end{figure}

\subsection{Edge construction}
\label{app:edge_construction}
Consider a fixed type-\A{A} vertex $\v_i$: during the construction of the multigraph $\G$ we would like to identify other type-\A{A} vertices which can potentially be connected to $\v_i$ via an edge much faster than na\"ively checking all $\mathcal{O}(|\V|)$ of them. As discussed in section~\ref{sec:graph_construction}, the key fact from~\eqref{eq:bibj_lower_bound} is that there need to be enough bi-charged hypermultiplets to have
\begin{equation}
\label{eq:bibj_bound_app}
    (b_i\cdot b_j)^2 \geq (b_i\cdot b_i)(b_j\cdot b_j) \,.
\end{equation}
Our task here will be to bound $b_j\cdot b_j$ given only $\v_i$ and the \emph{gauge group} $G_j$. Recall that the index $A_R^i$ is always strictly positive and that the left-hand side above is computed according to equation~\eqref{eq:anomaly_cancellation},
\begin{equation}
    b_i\cdot b_j = \sum_{R,S} n_{(R,S)}^{ij} A_R^i A_S^j \,,
\end{equation}
where $R$ and $S$ are representations of $G_i$ and $G_j$, respectively. By using
\begin{equation}
    \sum_S n_{(R,S)}^{ij}H_S^j \leq n_R^i \,,
\end{equation}
we can bound the sum as
\begin{equation}
\label{eq:bibj_upper_bound}
    b_i\cdot b_j = \sum_R A_R^i \Big(\sum_S n_{(R,S)}^{ij}A_S^j\Big) \leq \sum_R A_R^i A_\text{max}^{G_j}(n_R^i) \equiv \sigma_{i,j}^\text{max} \,,
\end{equation}
where $A_\text{max}^{G_j}(n)$ is defined as
\begin{equation}
    A_\text{max}^{G_j}(n) \equiv \max\!\Big\{ \sum_S x_SA_S^j \;\Big|\; \sum_S x_SH_S^j \leq n \,,\; x_S\in\Z_{\geq 0}\Big\} \,.
\end{equation}
This is a standard knapsack problem where the items are the representations $S$ of $G_j$, their values and weights are $A_S^j$ and $H_S^j$, respectively, and the maximum weight is $n$. The function $A_\text{max}^{G_j}(n)$ can be pre-computed for each group $G_j$ using\footnote{Note that if $\sum_Sx_SH_S^j=n$ is unattainable then $\hat{A}_\text{max}^{G_j}(n) = \sup\emptyset = -\infty$.}
\begin{equation}
    \begin{aligned}
        A_\text{max}^{G_j}(n) &= \begin{cases}
            0 & n = 0 \,,\\
            \max\!\big\{A_\text{max}^{G_j}(n-1),\hat{A}_\text{max}^{G_j}(n)\big\} & n > 0 \,,
        \end{cases}\\
        \hat{A}_\text{max}^{G_j}(n) &\equiv \sup\Big\{ \sum_S x_SA_S^j \;\Big|\; \sum_S x_SH_S^j = n \,,\; x_S\in\Z_{\geq 0}\Big\} \,,
    \end{aligned}
\end{equation}
where $\hat{A}_\text{max}^{G_j}(n)$ can be computed recursively via
\begin{equation}
    \hat{A}_\text{max}^{G_j}(n) = \begin{cases}
        0 & n = 0 \,,\\
        \sup\limits_{S\,:\,H_S^j\leq n}\big\{A_\text{max}^{G_j}(n-H_S^j)+A_S^j\big\} & n > 0 \,.
    \end{cases}
\end{equation}
For example, picking $G_j=\SU(4)$ for which the low-lying irreps have
\begin{equation}
    A_{\rep{4}} = 1 \,, \quad A_{\rep{6}} = 2 \,, \quad A_{\rep{10}} = 6 \,, \quad A_{\rep{15}} = 8 \,, \quad \ldots
\end{equation}
one finds the sequences $\hat{A}_\text{max}^{\SU(4)}$ and $A_\text{max}^{\SU(4)}$ begin
\begin{equation*}
    \newlength\mylen
    \settowidth\mylen{$-\infty$}
    \begin{array}{c | *{16}{wc{\mylen}} }
        n & 0 & 1 & 2 & 3 & 4 & 5 & 6 & 7 & 8 & 9 & 10 & 11 & 12 & 13 & 14 & 15\\ \midrule
        \hat{A}_\text{max}^{\SU(4)} & 0 & -\infty & -\infty & -\infty & 1 & -\infty & 2 & -\infty & 2 & -\infty & 6 & -\infty & 4 & -\infty & 7 & 8\\
        A_\text{max}^{\SU(4)} & 0 & 0 & 0 & 0 & 1 & 1 & 2 & 2 & 2 & 2 & 6 & 6 & 6 & 6 & 7 & 8
    \end{array}
\end{equation*}
We can use this to learn, for example, that the only $\SU(4)$ vertices which can potentially have a non-trivial edge to the $\SU(7)$ vertex corresponding to
\begin{equation}\label{eq:ffhat_example}
    G_i = \SU(7) \,, \qquad \H_i = 13\times\rep{7} + 5\times\rep{21} + \rep{35} + \rep{48} \,, \qquad b_i\cdot b_i = 8 \,,
\end{equation}
must have
\begin{equation}
\begin{aligned}
    b_j\cdot b_j &\leq \frac{1}{8}(b_i\cdot b_j)^2\\
    &\leq \frac{1}{8}\Big[ A_{\rep{7}}A_\text{max}^{\SU(4)}(13) + A_{\rep{21}}A_\text{max}^{\SU(4)}(5) + A_{\rep{35}}A_\text{max}^{\SU(4)}(1) + A_{\rep{48}}A_\text{max}^{\SU(4)}(1) \Big]^2\\
    &= \frac{1}{8}\big(1\cdot 6 + 5\cdot 1 + 10\cdot 0 + 14\cdot 0\big)^2\\
    &< 16 \,.
\end{aligned}
\end{equation}
Here~\eqref{eq:bibj_bound_app} and~\eqref{eq:bibj_upper_bound} are used in the first and second lines, respectively.
This bound is quite restrictive, being satisfied by only $189$ of the $\SU(4)$ type-\A{A} vertices.

\subsection{Clique pruning}
\label{app:pruning}

Recall from section~\ref{sec:clique_construction} that during the pruning step we would like to identify irreducible cliques for which $\Delta+28n_-^\gram$ cannot be brought under the bound of $273-T_\text{min}$ via any sequence of future branchings. Given an irreducible clique $\C_0^\irr$, we saw previously that any irreducible clique $\widetilde{\C}^\irr$ which contains $\C_0^\irr$ as a sub-clique can be formed by adjoining some number of irreducible cliques via a set $\E_\text{new}$ of new non-trivial edges between active vertices, as exemplified in~\eqref{eq:prune_decompose_1}. From this we found the following lower bound on $(\Delta+28n_-^\gram)(\widetilde{\C}^\irr)$ for any choice of $\lambda\in[0,1]$:
\begin{equation}
    \begin{aligned}
        (\Delta+28n_-^\gram)(\widetilde{\C}^\irr) &\geq 28(n_-^{\overline{\gram}}-n_-^\gram)(\C_0^\irr) + f_\lambda(\C_0^\irr) + \sum_{\alpha=1}^m f_{1-\lambda}(\C_\alpha^\irr) \,,\\
        f_y(\C^\irr) &\equiv (\Delta+28n_-^\gram)(\C^\irr) - y\sum_{\substack{\v\in\C^\irr\\ \text{active}}}\sum_Rn_R^\avbl(\v) H_R
    \end{aligned}
\end{equation}
For type-\A{A} cliques, choosing $\lambda=1$ and dropping the non-negative $f_0(\C_\alpha^\irr)$ terms is sufficient since branching quickly peters out regardless. When augmenting type-\A{A}\B{B} and type-\B{B} cliques by type-\B{B} vertices it behooves us to better understand exactly when $f_y(\C^\irr)$ is non-negative for type-\B{B} irreducible cliques. That we can take $y>0$ and maintain $f_y(\C^\irr)\geq 0$ reflects the fact that cliques with active vertices generally do not come close to saturating the lower bound $\Delta+28n_-^\gram\geq 0$.

\begin{figure}[t]
    \centering
    \includegraphics[width=\textwidth]{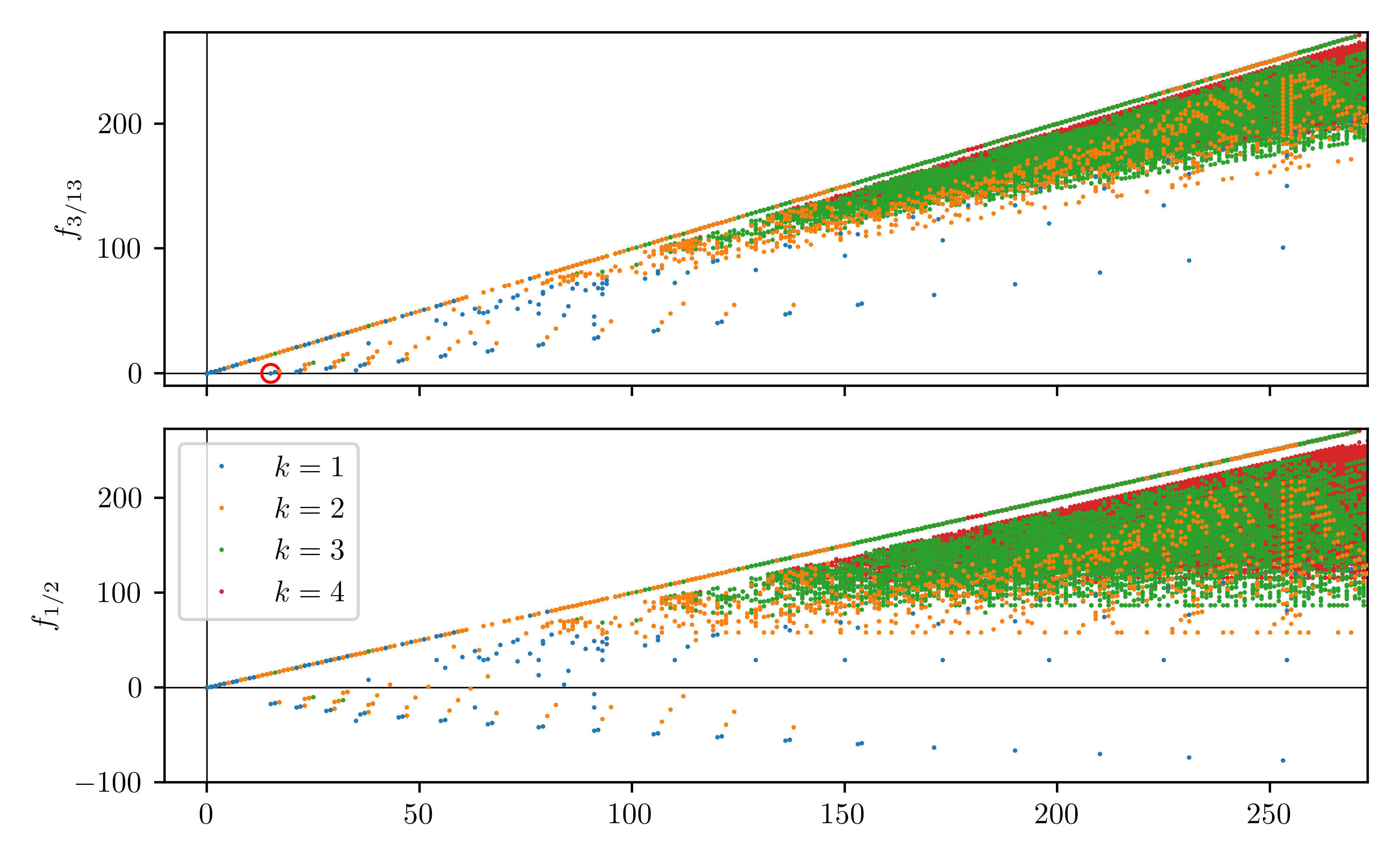}
    \caption{The functions $f_{3/13}$ (top) and $f_{1/2}$ (bottom) for type-\B{B} cliques with $k\leq4$ and the simple groups of section~\ref{sec:results_general}. The red circle indicates the clique of equation~\eqref{eq:first_neg_clique} which is the first to cross to $f_y<0$ as $y$ increases from zero.}
    \label{fig:fy_typeB}
\end{figure}

For $y=0$, $f_y(\C_{\B{B}}^\irr)$ is non-negative for all type-\B{B} cliques. As $y$ increases, the value of $f_y$ decreases for cliques with at least one active vertex. If we restrict attention to the simple groups considered in section~\ref{sec:results_general}, all irreps have $H_R\geq 5$ and the first clique to cross to negative values is the simple clique
\begin{equation}
\label{eq:first_neg_clique}
    \C = \{ \SO(13),5\times\rep{13} \} \,,\quad f_y(\C) = 15 - 65y
\end{equation}
at $y=\frac{3}{13}$. In the upper panel of figure~\ref{fig:fy_typeB} we see the values of $f_{3/13}(\C^\irr)$ for type-\B{B} cliques with $k\leq 4$. From this we conclude that
\begin{equation}
    f_y(\C_{\B{B}}^\irr) \geq 0 \,, \quad\forall y\in\big[0,\tfrac{3}{13}\big] \,,
\end{equation}
and obtain an improvement over~\eqref{eq:prune_rule_naive}:
\begin{equation}
    f_{10/13}(\C_0^\irr) + 28(n_-^{\overline{\gram}} - n_-^\gram)(\C_0^\irr) + T_\text{min}(\C_0^\irr) > 273 \quad\implies\quad \text{prune }\C_0^\irr \,.
\end{equation}
However, this is clearly sub-optimal, especially considering that the clique $\C_0^\irr$ may not even be able to form non-trivial edges with the simple clique in~\eqref{eq:first_neg_clique}.

If we extend all the way to $y=\frac{1}{2}$ then most irreducible type-\B{B} cliques have $f_{1/2}\geq 0$: see the lower panel of figure~\ref{fig:fy_typeB}. At least for the finite list of simple groups chosen for the general classification of section~\ref{sec:results_general}, there is a short list of exceptions which have $f_{1/2}<0$, all with exactly one active vertex. For example, we find
\begin{equation}
    \begin{aligned}
        f_{1/2}\left( \begin{tikzpicture}[baseline={([yshift=-0.5ex]current bounding box.center)}]
            \node[circle, inner sep=1.5, fill=typeB, draw=black, thick] (a) at (0,0) {};
            \draw (a) circle (0.15);
            \node[below=2] at (a) {\tiny\begin{minipage}{3cm}
                \centering
                $\SU(N)$\\
                $(N-8)\times\rep{N}+\rep{N(N+1)/2}$
            \end{minipage}};
        \end{tikzpicture} \right) &= -\frac{7N}{2}+29 \,, \\
        f_{1/2}\left( \begin{tikzpicture}[baseline={([yshift=-0.5ex]current bounding box.center)}]
            \node[circle, inner sep=1.5, fill=typeB, draw=black, thick] (a) at (0,0) {};
            \draw (a) circle (0.15);
            \node[below=2] at (a) {\tiny\begin{minipage}{1.5cm}
                \centering
                $\SO(N)$\\
                $(N-8)\times\rep{N}$
            \end{minipage}};
        \end{tikzpicture} \right) &= - \frac{7N}{2}+28 \,, \\
        f_{1/2}\left( \begin{tikzpicture}[baseline={([yshift=-0.5ex]current bounding box.center)}]
            \node[circle, inner sep=1.5, fill=typeB, draw=black, thick] (a) at (0,0) {};
            \draw (a) circle (0.15);
            \node[below=2] at (a) {\tiny\begin{minipage}{1cm}
                \centering
                $E_7$\\
                $\frac{k}{2}\times\rep{56}$
            \end{minipage}};
        \end{tikzpicture} \right) &= \begin{cases}
            14k - 105 & k\leq 7\\
            -31 & k=8
        \end{cases} \,\\
        f_{1/2}\left( \begin{tikzpicture}[baseline={([yshift=-0.5ex]current bounding box.center)}]
            \node[circle, inner sep=1.5, fill=typeB, draw=black, thick] (a) at (0,0) {};
            \node[circle, inner sep=1.5, fill=typeB, draw=black, thick] (b) at (2,0) {};
            \node[below=2] at (a) {\tiny\begin{minipage}{1cm}
                \centering
                $\SU(N)$\\
                $2N\times\rep{N}$
            \end{minipage}};
            \node[below=2] at (b) {\tiny\begin{minipage}{1.5cm}
                \centering
                $\SO(N+8)$\\
                $N\times\rep{(N+8)}$
            \end{minipage}};
            \draw (a) circle (0.15);
            \draw[very thick] (a) to node[above=-2] {\tiny$(\rep{N},\rep{N+8})$} (b);
        \end{tikzpicture} \right) &= -\frac{7N}{2} + 29 \,.\\
    \end{aligned}
\end{equation}

The worst-case scenario is where each active vertex of $\C_0^\irr$ is individually joined to as many of these cliques as possible. This means that the ``$+\epsilon$'' correction in equation~\eqref{eq:prune_lower_bound} can be computed as a sum over active vertices of $\C_0^\irr$,
\begin{equation}
\label{eq:+epsilon}
    ``{+\epsilon}" = \sum_{\substack{\v\in\C_0^\irr\\\text{active}}}\kappa\Big(\big\{(\e_\alpha,\C_\alpha^\irr)\big\};n_R^\avbl(\v)\Big) \,,
\end{equation}
where $\e_\alpha$ are the viable non-trivial edges from the active vertex $\v$ to active vertices of a clique $\C_\alpha^\irr$ with negative $f_{1/2}$. The function $\kappa$ denotes the solution to the knapsack problem where the items are the edge-clique pairs $(e_\alpha,\C_\alpha^\irr)$, the items' ``values'' are $|f_{1/2}(\C_\alpha^\irr)|$, their ``weights'' are their irrep usage at $\v$ and the knapsack's maximum capacity is the irrep availability at $\v$, given the pre-existing non-trivial edges in $\C_0^\irr$. Often the knapsack fits very few items (e.g.\ only one or two) and a standard recursive approach much like that described in appendix~\ref{app:edge_construction} to compute $A_\text{max}^{G_j}(n)$ quickly produces the value of $\kappa$.

\section{Brane probe constraint}
\label{app:brane_probe}

In this appendix we review how any infinite family with fixed Gram matrix is truncated to a finite number by the BPS string probe constraint, as considered in~\cite{Kim:2019vuc,Kim:2019ths,Lee:2019skh,Tarazi2021,Martucci:2022krl}.
To be precise, any infinite family with fixed Gram matrix must have gauge groups $G=G_\text{fixed}\times G_\text{aux}$ where $G_\text{fixed}$ is fixed and $G_\text{aux}$ consists of a fixed number of simple factors but parameterized by some collection of ranks $N_a$, and hypermultiplets chosen so that $\Gram$ is independent of the $N_a$.
This includes the first infinite family discussed in section~\ref{sec:infinite_families}, as well any infinite family which is of the form described on the right-hand side of equation~\eqref{eq:anomaly-free-clique-types}. 

Suppose that there exists a BPS string with the 2-form charge $Q=(q_0,\cdots,q_T)$.
The BPS string breaks the 6D Lorentz symmetry $\SO(1,5)$ to $\SO(1,1)\times \SO(4)$. The $\SO(1,1)$ symmetry is identified as the worldsheet Lorentz symmetry while on the other hand the $\SO(4)\simeq \SU(2)_R\times \SU(2)_\ell$ is identified as $R$-symmetry and flavor symmetry of the worldsheet. Moreover, the bulk gauge symmetry $G_i$ becomes the flavor symmetry of the worldsheet.
There are left and right-moving central charges $c_L, c_R$, and the levels of the current algebra $k_\ell$ and $k_i$ correspond to $\SU(2)_\ell$ and $G_i$.

From the anomaly inflow argument, one can compute the central charges as~\cite{Kim:2019vuc}
\begin{equation}
    \begin{aligned}
        c_L &= 3Q\cdot Q + 9Q\cdot b_0 + 2 \,,
        & c_R &= 3Q\cdot Q + 3Q\cdot b_0 \,,\\
        k_\ell &= \frac{1}{2}\left(Q\cdot Q - Q\cdot b_0 + 2\right)\,,
        \quad& k_i &= Q\cdot b_i\,,
    \end{aligned}
\end{equation}
where the contributions from the center of mass mode have been subtracted. Unitarity on the worldsheet requires
\begin{align}
    c_L &= 3Q\cdot Q+9Q\cdot b_0+2\geq0 \,, \label{eq:cL_positivity}
    \\
    c_R &= 3Q\cdot Q+3Q\cdot b_0\geq0 \,. \label{eq:cR_positivity}
\end{align}    
We focus on the situation where the current algebra of $\SU(2)_\ell$ and $G_i$ is realized on the left-movers.
Then we have the following positivity bounds:
\begin{align}
    k_\ell &= 
    \frac{1}{2}\left(Q\cdot Q - Q\cdot b_0+2\right)\geq0\,,\label{eq:kl_positivity}
    \\
    k_i &= Q\cdot b_i\geq0\,.\label{eq:ki_positivity}
\end{align}
Moreover, the central charge of the current algebra must be less than $c_L$. This leads to the condition
\begin{align}
    \sum_{i=1}^k \frac{k_i\dim G_i}{k_i+h^\vee_i}\leq c_L\,, \label{eq:cL_bound}
\end{align}
where $h^\vee_i$ is the dual Coxeter number of $G_i$, and $\dim G_i$ is the dimension of $G_i$.

The only remaining question is what is the charge $Q$ of the BPS string. Motivated by the BPS completeness hypothesis,\footnote{Explicit examples of string theories with sixteen supercharges were constructed in~\cite{Montero:2022vva} where BPS strings populate an index-2 sublattice of the charge lattice. However, the non-BPS string has the charge ``1/2'' in this case, and one can apply the constraint for the charge $1$ BPS string.} requiring that the condition~\eqref{eq:cL_bound} must be satisfied for \emph{any} $Q\in\Gamma$ which satisfies (\ref{eq:cL_positivity}, \ref{eq:cR_positivity}, \ref{eq:kl_positivity}, \ref{eq:ki_positivity}) leads to extremely strong bounds on the $N_a$~\cite{Kim:2019vuc,Tarazi2021}. However, for our purposes it is enough to assume that there exists \emph{at least} one $Q\in\Gamma$ which satisfies (\ref{eq:cL_positivity}, \ref{eq:cR_positivity}, \ref{eq:kl_positivity}, \ref{eq:ki_positivity}) and with each component $Q^\alpha$ bounded as the ranks $N_a$ increase. For example, it is enough for the BPS charges to form a sublattice of $\Gamma$ of bounded index and the $N_a$ grow. With the following simple scaling argument one can show that the $N_a$ are truncated to a finite number. Since the Gram matrix is constant over the infinite family, the vectors $b_0,b_i$ are fixed and we have
\begin{equation}
    c_L,c_R,k_\ell,c_i \sim \mathcal{O}(Q^2+b_I Q) \sim \mathcal{O}(N_a^0) \,.
\end{equation}
In contrast,
\begin{equation}
\label{eq:LHS-scaling}
    \sum_{i=1}^k \frac{k_i\dim G_i}{k_i+h^\vee_i} \geq \sum_{i=1}^k \frac{\dim G_i}{1+h^\vee_i} \sim \mathcal{O}(N_a) \,,
\end{equation}
where we have used that $h^\vee_{A_{N-1}}=N$, $h^\vee_{B_N}=2N-1$, $h^\vee_{C_N}=N+1$, $h^\vee_{D_N}=2N-2$ for the dual Coxeter number and $\dim G_{A_{N-1}}=N^2-1$, $\dim G_{SO(N)}=N(N-1)/2$, $\dim G_{C_N}=N(2N+1)$ for the dimension of the group. Since the left- and right-hand sides of~\eqref{eq:cL_bound} scale differently, for large enough $N_a$ we necessarily violate unitarity and the theories are inconsistent.

The only loophole in the above argument is having $k_i=0$ for all $G_i$ in $G_\text{aux}$ which have growing rank, since then the scaling of~\eqref{eq:LHS-scaling} is incorrect: the only nonzero terms would be constant with $N_a$. However, this cannot happen if the BPS charge lattice has finite index in $\Gamma$, i.e.\ has full rank, since then one can find a charge $Q\approx \lambda j$ for some $\lambda\gg1$, for which $c_L,c_R,k_\ell\sim \lambda^2(j\cdot j)>0$ and $k_i=Q\cdot b_i\approx \lambda(j\cdot b_i)>0$ is strict.

\bibliographystyle{JHEP}
\bibliography{refs}

\providecommand{\href}[2]{#2}\begingroup\raggedright\begin{thebibliography}{10}

\bibitem{Adams:2010zy}
A.~Adams, O.~DeWolfe and W.~Taylor, \emph{{String universality in ten
  dimensions}},
  \href{https://doi.org/10.1103/PhysRevLett.105.071601}{\emph{Phys. Rev. Lett.}
  {\bfseries 105} (2010) 071601}
  [\href{https://arxiv.org/abs/1006.1352}{{\ttfamily 1006.1352}}].

\bibitem{Kim:2019vuc}
H.-C.~Kim, G.~Shiu and C.~Vafa, \emph{{Branes and the Swampland}},
  \href{https://doi.org/10.1103/PhysRevD.100.066006}{\emph{Phys. Rev. D}
  {\bfseries 100} (2019) 066006}
  [\href{https://arxiv.org/abs/1905.08261}{{\ttfamily 1905.08261}}].

\bibitem{Kim:2019ths}
H.-C.~Kim, H.-C.~Tarazi and C.~Vafa, \emph{{Four-dimensional
  $\mathbf{\mathcal{N}=4}$ SYM theory and the swampland}},
  \href{https://doi.org/10.1103/PhysRevD.102.026003}{\emph{Phys. Rev. D}
  {\bfseries 102} (2020) 026003}
  [\href{https://arxiv.org/abs/1912.06144}{{\ttfamily 1912.06144}}].

\bibitem{Montero:2020icj}
M.~Montero and C.~Vafa, \emph{{Cobordism Conjecture, Anomalies, and the String
  Lamppost Principle}},
  \href{https://doi.org/10.1007/JHEP01(2021)063}{\emph{JHEP} {\bfseries 01}
  (2021) 063} [\href{https://arxiv.org/abs/2008.11729}{{\ttfamily
  2008.11729}}].

\bibitem{Cvetic:2020kuw}
M.~Cvetic, M.~Dierigl, L.~Lin and H.Y.~Zhang, \emph{{String Universality and
  Non-Simply-Connected Gauge Groups in 8d}},
  \href{https://doi.org/10.1103/PhysRevLett.125.211602}{\emph{Phys. Rev. Lett.}
  {\bfseries 125} (2020) 211602}
  [\href{https://arxiv.org/abs/2008.10605}{{\ttfamily 2008.10605}}].

\bibitem{Hamada:2021bbz}
Y.~Hamada and C.~Vafa, \emph{{8d supergravity, reconstruction of internal
  geometry and the Swampland}},
  \href{https://doi.org/10.1007/JHEP06(2021)178}{\emph{JHEP} {\bfseries 06}
  (2021) 178} [\href{https://arxiv.org/abs/2104.05724}{{\ttfamily
  2104.05724}}].

\bibitem{Bedroya:2021fbu}
A.~Bedroya, Y.~Hamada, M.~Montero and C.~Vafa, \emph{{Compactness of brane
  moduli and the String Lamppost Principle in d \ensuremath{>} 6}},
  \href{https://doi.org/10.1007/JHEP02(2022)082}{\emph{JHEP} {\bfseries 02}
  (2022) 082} [\href{https://arxiv.org/abs/2110.10157}{{\ttfamily
  2110.10157}}].

\bibitem{Vafa:2005ui}
C.~Vafa, \emph{{The String landscape and the swampland}},
  \href{https://arxiv.org/abs/hep-th/0509212}{{\ttfamily hep-th/0509212}}.

\bibitem{Palti:2019pca}
E.~Palti, \emph{{The Swampland: Introduction and Review}},
  \href{https://doi.org/10.1002/prop.201900037}{\emph{Fortsch. Phys.}
  {\bfseries 67} (2019) 1900037}
  [\href{https://arxiv.org/abs/1903.06239}{{\ttfamily 1903.06239}}].

\bibitem{vanBeest:2021lhn}
M.~van Beest, J.~Calder\'on-Infante, D.~Mirfendereski and I.~Valenzuela,
  \emph{{Lectures on the Swampland Program in String Compactifications}},
  \href{https://doi.org/10.1016/j.physrep.2022.09.002}{\emph{Phys. Rept.}
  {\bfseries 989} (2022) 1} [\href{https://arxiv.org/abs/2102.01111}{{\ttfamily
  2102.01111}}].

\bibitem{Agmon:2022thq}
N.B.~Agmon, A.~Bedroya, M.J.~Kang and C.~Vafa, \emph{{Lectures on the string
  landscape and the Swampland}},
  \href{https://arxiv.org/abs/2212.06187}{{\ttfamily 2212.06187}}.

\bibitem{Green:1984bx}
M.B.~Green, J.H.~Schwarz and P.C.~West, \emph{{Anomaly Free Chiral Theories in
  Six-Dimensions}},
  \href{https://doi.org/10.1016/0550-3213(85)90222-6}{\emph{Nucl. Phys. B}
  {\bfseries 254} (1985) 327}.

\bibitem{Green:1984sg}
M.B.~Green and J.H.~Schwarz, \emph{{Anomaly Cancellation in Supersymmetric D=10
  Gauge Theory and Superstring Theory}},
  \href{https://doi.org/10.1016/0370-2693(84)91565-X}{\emph{Phys. Lett. B}
  {\bfseries 149} (1984) 117}.

\bibitem{Sagnotti:1992qw}
A.~Sagnotti, \emph{{A Note on the Green-Schwarz mechanism in open string
  theories}}, \href{https://doi.org/10.1016/0370-2693(92)90682-T}{\emph{Phys.
  Lett. B} {\bfseries 294} (1992) 196}
  [\href{https://arxiv.org/abs/hep-th/9210127}{{\ttfamily hep-th/9210127}}].

\bibitem{Seiberg2011}
N.~Seiberg and W.~Taylor, \emph{Charge lattices and consistency of 6d
  supergravity}, \href{https://doi.org/10.1007/JHEP06(2011)001}{\emph{Journal
  of High Energy Physics} {\bfseries 2011} (2011) 1}
  [\href{https://arxiv.org/abs/1103.0019}{{\ttfamily 1103.0019}}].

\bibitem{Tarazi2021}
H.-C.~Tarazi and C.~Vafa, \emph{On the finiteness of 6d supergravity
  landscape}, {\emph{arXiv e-prints} (2021) arXiv:2106.10839}
  [\href{https://arxiv.org/abs/2106.10839}{{\ttfamily 2106.10839}}].

\bibitem{Kumar:2009us}
V.~Kumar and W.~Taylor, \emph{{String Universality in Six Dimensions}},
  \href{https://doi.org/10.4310/ATMP.2011.v15.n2.a3}{\emph{Adv. Theor. Math.
  Phys.} {\bfseries 15} (2011) 325}
  [\href{https://arxiv.org/abs/0906.0987}{{\ttfamily 0906.0987}}].

\bibitem{Kumar2011}
V.~Kumar, D.S.~Park and W.~Taylor, \emph{6d supergravity without tensor
  multiplets}, \href{https://doi.org/10.1007/JHEP04(2011)080}{\emph{Journal of
  High Energy Physics} {\bfseries 2011} (2011) 80}
  [\href{https://arxiv.org/abs/1011.0726}{{\ttfamily 1011.0726}}].

\bibitem{Avramis2005}
S.D.~Avramis and A.~Kehagias, \emph{A systematic search for anomaly-free
  supergravities in six dimensions},
  \href{https://doi.org/10.1088/1126-6708/2005/10/052}{\emph{Journal of High
  Energy Physics} {\bfseries 2005} (2005) 052}
  [\href{https://arxiv.org/abs/hep-th/0508172}{{\ttfamily hep-th/0508172}}].

\bibitem{Kumar:2010ru}
V.~Kumar, D.R.~Morrison and W.~Taylor, \emph{{Global aspects of the space of 6D
  N = 1 supergravities}},
  \href{https://doi.org/10.1007/JHEP11(2010)118}{\emph{JHEP} {\bfseries 11}
  (2010) 118} [\href{https://arxiv.org/abs/1008.1062}{{\ttfamily 1008.1062}}].

\bibitem{Vafa:1996xn}
C.~Vafa, \emph{{Evidence for F theory}},
  \href{https://doi.org/10.1016/0550-3213(96)00172-1}{\emph{Nucl. Phys. B}
  {\bfseries 469} (1996) 403}
  [\href{https://arxiv.org/abs/hep-th/9602022}{{\ttfamily hep-th/9602022}}].

\bibitem{Morrison:1996na}
D.R.~Morrison and C.~Vafa, \emph{{Compactifications of F theory on Calabi-Yau
  threefolds. 1}},
  \href{https://doi.org/10.1016/0550-3213(96)00242-8}{\emph{Nucl. Phys. B}
  {\bfseries 473} (1996) 74}
  [\href{https://arxiv.org/abs/hep-th/9602114}{{\ttfamily hep-th/9602114}}].

\bibitem{Morrison:1996pp}
D.R.~Morrison and C.~Vafa, \emph{{Compactifications of F theory on Calabi-Yau
  threefolds. 2.}},
  \href{https://doi.org/10.1016/0550-3213(96)00369-0}{\emph{Nucl. Phys. B}
  {\bfseries 476} (1996) 437}
  [\href{https://arxiv.org/abs/hep-th/9603161}{{\ttfamily hep-th/9603161}}].

\bibitem{Acharya:2006zw}
B.S.~Acharya and M.R.~Douglas, \emph{{A Finite landscape?}},
  \href{https://arxiv.org/abs/hep-th/0606212}{{\ttfamily hep-th/0606212}}.

\bibitem{Hamada:2021yxy}
Y.~Hamada, M.~Montero, C.~Vafa and I.~Valenzuela, \emph{{Finiteness and the
  swampland}}, \href{https://doi.org/10.1088/1751-8121/ac6404}{\emph{J. Phys.
  A} {\bfseries 55} (2022) 224005}
  [\href{https://arxiv.org/abs/2111.00015}{{\ttfamily 2111.00015}}].

\bibitem{Grimm:2021vpn}
T.W.~Grimm, \emph{{Taming the landscape of effective theories}},
  \href{https://doi.org/10.1007/JHEP11(2022)003}{\emph{JHEP} {\bfseries 11}
  (2022) 003} [\href{https://arxiv.org/abs/2112.08383}{{\ttfamily
  2112.08383}}].

\bibitem{Lee:2019skh}
S.-J.~Lee and T.~Weigand, \emph{{Swampland Bounds on the Abelian Gauge
  Sector}}, \href{https://doi.org/10.1103/PhysRevD.100.026015}{\emph{Phys. Rev.
  D} {\bfseries 100} (2019) 026015}
  [\href{https://arxiv.org/abs/1905.13213}{{\ttfamily 1905.13213}}].

\bibitem{Martucci:2022krl}
L.~Martucci, N.~Risso and T.~Weigand, \emph{{Quantum gravity bounds on $
  \mathcal{N} $ = 1 effective theories in four dimensions}},
  \href{https://doi.org/10.1007/JHEP03(2023)197}{\emph{JHEP} {\bfseries 03}
  (2023) 197} [\href{https://arxiv.org/abs/2210.10797}{{\ttfamily
  2210.10797}}].

\bibitem{Hayashi:2023hqa}
H.~Hayashi, H.-C.~Kim and M.~Kim, \emph{{Spectra of BPS Strings in 6d
  Supergravity and the Swampland}},
  \href{https://arxiv.org/abs/2310.12219}{{\ttfamily 2310.12219}}.

\bibitem{Taylor:2011wt}
W.~Taylor, \emph{{TASI Lectures on Supergravity and String Vacua in Various
  Dimensions}},  \href{https://arxiv.org/abs/1104.2051}{{\ttfamily 1104.2051}}.

\bibitem{Baykara:2023plc}
Z.K.~Baykara, Y.~Hamada, H.-C.~Tarazi and C.~Vafa, \emph{{On the String
  Landscape Without Hypermultiplets}},
  \href{https://arxiv.org/abs/2309.15152}{{\ttfamily 2309.15152}}.

\bibitem{Kumar2009}
V.~Kumar and W.~Taylor, \emph{A bound on 6d n = 1 supergravities},
  \href{https://doi.org/10.1088/1126-6708/2009/12/050}{\emph{Journal of High
  Energy Physics} {\bfseries 2009} (2009) 050}
  [\href{https://arxiv.org/abs/0910.1586}{{\ttfamily 0910.1586}}].

\bibitem{Bhardwaj:2013qia}
L.~Bhardwaj and Y.~Tachikawa, \emph{{Classification of 4d N=2 gauge theories}},
  \href{https://doi.org/10.1007/JHEP12(2013)100}{\emph{JHEP} {\bfseries 12}
  (2013) 100} [\href{https://arxiv.org/abs/1309.5160}{{\ttfamily 1309.5160}}].

\bibitem{Loges:2023gh1}
G.J.~Loges, ``6d-sugra-classify.''
  \url{https://github.com/gloges/6d-sugra-classify}, 2023.

\bibitem{Loges:2023gh2}
G.J.~Loges, ``6d-sugra-data.'' \url{https://github.com/gloges/6d-sugra-data},
  2023.

\bibitem{Okubo82}
S.~Okubo, \emph{{Modified Fourth Order Casimir Invariants and Indices for
  Simple Lie Algebras}}, \href{https://doi.org/10.1063/1.525212}{\emph{J. Math.
  Phys.} {\bfseries 23} (1982) 8}.

\bibitem{Alvarez-Gaume:1983ihn}
L.~Alvarez-Gaume and E.~Witten, \emph{{Gravitational Anomalies}},
  \href{https://doi.org/10.1016/0550-3213(84)90066-X}{\emph{Nucl. Phys. B}
  {\bfseries 234} (1984) 269}.

\bibitem{Bershadsky:1997sb}
M.~Bershadsky and C.~Vafa, \emph{{Global anomalies and geometric engineering of
  critical theories in six-dimensions}},
  \href{https://arxiv.org/abs/hep-th/9703167}{{\ttfamily hep-th/9703167}}.

\bibitem{Basile:2023zng}
I.~Basile and G.~Leone, \emph{{Anomaly constraints for heterotic strings and
  supergravity in six dimensions}},
  \href{https://arxiv.org/abs/2310.20480}{{\ttfamily 2310.20480}}.

\bibitem{Ohmori:2014kda}
K.~Ohmori, H.~Shimizu, Y.~Tachikawa and K.~Yonekura, \emph{{Anomaly polynomial
  of general 6d SCFTs}}, \href{https://doi.org/10.1093/ptep/ptu140}{\emph{PTEP}
  {\bfseries 2014} (2014) 103B07}
  [\href{https://arxiv.org/abs/1408.5572}{{\ttfamily 1408.5572}}].

\bibitem{Lee:2020ewl}
Y.~Lee and Y.~Tachikawa, \emph{{Some comments on 6D global gauge anomalies}},
  \href{https://doi.org/10.1093/ptep/ptab015}{\emph{PTEP} {\bfseries 2021}
  (2021) 08B103} [\href{https://arxiv.org/abs/2012.11622}{{\ttfamily
  2012.11622}}].

\bibitem{Davighi:2020kok}
J.~Davighi and N.~Lohitsiri, \emph{{Omega vs. pi, and 6d anomaly
  cancellation}}, \href{https://doi.org/10.1007/JHEP05(2021)267}{\emph{JHEP}
  {\bfseries 05} (2021) 267}
  [\href{https://arxiv.org/abs/2012.11693}{{\ttfamily 2012.11693}}].

\bibitem{Cheung:2016wjt}
C.~Cheung and G.N.~Remmen, \emph{{Positivity of Curvature-Squared Corrections
  in Gravity}},
  \href{https://doi.org/10.1103/PhysRevLett.118.051601}{\emph{Phys. Rev. Lett.}
  {\bfseries 118} (2017) 051601}
  [\href{https://arxiv.org/abs/1608.02942}{{\ttfamily 1608.02942}}].

\bibitem{Hamada:2018dde}
Y.~Hamada, T.~Noumi and G.~Shiu, \emph{{Weak Gravity Conjecture from Unitarity
  and Causality}},
  \href{https://doi.org/10.1103/PhysRevLett.123.051601}{\emph{Phys. Rev. Lett.}
  {\bfseries 123} (2019) 051601}
  [\href{https://arxiv.org/abs/1810.03637}{{\ttfamily 1810.03637}}].

\bibitem{Jacobson:1993xs}
T.~Jacobson and R.C.~Myers, \emph{{Black hole entropy and higher curvature
  interactions}},
  \href{https://doi.org/10.1103/PhysRevLett.70.3684}{\emph{Phys. Rev. Lett.}
  {\bfseries 70} (1993) 3684}
  [\href{https://arxiv.org/abs/hep-th/9305016}{{\ttfamily hep-th/9305016}}].

\bibitem{Liko:2007vi}
T.~Liko, \emph{{Topological deformation of isolated horizons}},
  \href{https://doi.org/10.1103/PhysRevD.77.064004}{\emph{Phys. Rev. D}
  {\bfseries 77} (2008) 064004}
  [\href{https://arxiv.org/abs/0705.1518}{{\ttfamily 0705.1518}}].

\bibitem{Sarkar:2010xp}
S.~Sarkar and A.C.~Wall, \emph{{Second Law Violations in Lovelock Gravity for
  Black Hole Mergers}},
  \href{https://doi.org/10.1103/PhysRevD.83.124048}{\emph{Phys. Rev. D}
  {\bfseries 83} (2011) 124048}
  [\href{https://arxiv.org/abs/1011.4988}{{\ttfamily 1011.4988}}].

\bibitem{Aalsma:2022knj}
L.~Aalsma and G.~Shiu, \emph{{From rotating to charged black holes and back
  again}}, \href{https://doi.org/10.1007/JHEP11(2022)161}{\emph{JHEP}
  {\bfseries 11} (2022) 161}
  [\href{https://arxiv.org/abs/2205.06273}{{\ttfamily 2205.06273}}].

\bibitem{Ong:2022mmm}
Y.C.~Ong, \emph{{Holographic consistency and the sign of the Gauss-Bonnet
  parameter}},
  \href{https://doi.org/10.1016/j.nuclphysb.2022.115939}{\emph{Nucl. Phys. B}
  {\bfseries 984} (2022) 115939}
  [\href{https://arxiv.org/abs/2208.13360}{{\ttfamily 2208.13360}}].

\bibitem{Aspinwall:1997ye}
P.S.~Aspinwall and D.R.~Morrison, \emph{{Point - like instantons on K3
  orbifolds}}, \href{https://doi.org/10.1016/S0550-3213(97)00516-6}{\emph{Nucl.
  Phys. B} {\bfseries 503} (1997) 533}
  [\href{https://arxiv.org/abs/hep-th/9705104}{{\ttfamily hep-th/9705104}}].

\bibitem{Candelas:1997eh}
P.~Candelas, E.~Perevalov and G.~Rajesh, \emph{{Toric geometry and enhanced
  gauge symmetry of F theory / heterotic vacua}},
  \href{https://doi.org/10.1016/S0550-3213(97)00563-4}{\emph{Nucl. Phys. B}
  {\bfseries 507} (1997) 445}
  [\href{https://arxiv.org/abs/hep-th/9704097}{{\ttfamily hep-th/9704097}}].

\bibitem{Loges:2022mao}
G.J.~Loges and G.~Shiu, \emph{{134 billion intersecting brane models}},
  \href{https://doi.org/10.1007/JHEP12(2022)097}{\emph{JHEP} {\bfseries 12}
  (2022) 097} [\href{https://arxiv.org/abs/2206.03506}{{\ttfamily
  2206.03506}}].

\bibitem{Berkooz:1996iz}
M.~Berkooz, R.G.~Leigh, J.~Polchinski, J.H.~Schwarz, N.~Seiberg and E.~Witten,
  \emph{{Anomalies, dualities, and topology of D = 6 N=1 superstring vacua}},
  \href{https://doi.org/10.1016/0550-3213(96)00339-2}{\emph{Nucl. Phys. B}
  {\bfseries 475} (1996) 115}
  [\href{https://arxiv.org/abs/hep-th/9605184}{{\ttfamily hep-th/9605184}}].

\bibitem{Park:2011wv}
D.S.~Park and W.~Taylor, \emph{{Constraints on 6D Supergravity Theories with
  Abelian Gauge Symmetry}},
  \href{https://doi.org/10.1007/JHEP01(2012)141}{\emph{JHEP} {\bfseries 01}
  (2012) 141} [\href{https://arxiv.org/abs/1110.5916}{{\ttfamily 1110.5916}}].

\bibitem{Monnier2018}
S.~Monnier, G.W.~Moore and D.S.~Park, \emph{Quantization of anomaly
  coefficients in 6d n=(1,0) supergravity},
  \href{https://doi.org/10.1007/JHEP02(2018)020}{\emph{Journal of High Energy
  Physics} {\bfseries 2018} (2018) 20}
  [\href{https://arxiv.org/abs/1711.04777}{{\ttfamily 1711.04777}}].

\bibitem{nikulin1980integral}
V.V.~Nikulin, \emph{Integral symmetric bilinear forms and some of their
  applications}, {\emph{Mathematics of the USSR-Izvestiya} {\bfseries 14}
  (1980) 103}.

\bibitem{chevalley1935bemerkung}
H.~Chevalley and E.~Warning, \emph{Bemerkung zur vorstehenden arbeit},  in
  \emph{Abhandlungen aus dem Mathematischen Seminar der Universit{\"a}t
  Hamburg}, vol.~11, pp.~76--83, Springer, 1935.

\bibitem{Montero:2022vva}
M.~Montero and H.~Parra~de Freitas, \emph{{New supersymmetric string theories
  from discrete theta angles}},
  \href{https://doi.org/10.1007/JHEP01(2023)091}{\emph{JHEP} {\bfseries 01}
  (2023) 091} [\href{https://arxiv.org/abs/2209.03361}{{\ttfamily
  2209.03361}}].

\end{thebibliography}\endgroup

\end{document}